\newif\iffullversion
  \providecommand\BibTeX{{%
    \normalfont B\kern-0.5em{\scshape i\kern-0.25em b}\kern-0.8em\TeX}}}
\newcommand{\systemname}{\textsc{Syren}\xspace}
\definecolor{lightestgray}{RGB}{240,240,240}
\definecolor{lightgray}{RGB}{215,215,215}
\def\numTotalBenchmarks{54}
\definecolor{darkgray}{rgb}{.4,.4,.4}
\definecolor{purple}{rgb}{.3,0,.5}
\definecolor{magenta}{rgb}{.7, 0, .1}
\definecolor{darkblue}{rgb}{0, .2, .4}
\definecolor{lightblue}{rgb}{.6,.8,1}
\colorlet{lblue}{lightblue!35}
\definecolor{orange}{rgb}{1, .6, 0}
\colorlet{cameraready}{.}
\lstdefinelanguage{Syren}{
  classoffset=0, 
  keywords={let,if,else,list,true,false,retry,until,for,return,::},
  classoffset=1, 
  morekeywords={where, \$, br},
  classoffset=2, 
  morekeywords={ec2},
  classoffset=3, 
  morekeywords={StopInstances, DescribeInstanceStatus, A, B, C, D, E, F, G},
  classoffset=4, 
  morekeywords={extractInstanceStatus, f, h},
  classoffset=0, 
  sensitive=true,
  comment=[l]{\#},
  morecomment=[s]{/*}{*/},
  morecomment=[s]{(**}{*)},
  morestring=[b]",
  alsoletter=:,
}
\small\linespread{.9}\ttfamily,
\small\linespread{.9}\ttfamily,
\newcommand{\simplesyreninline}[1] { \lstinline[language=Syren,escapeinside=||]^#1^}
\small\linespread{.9}\ttfamily,
\newcommand{\dsl}[1]{\text{\simplesyreninline{#1}}}
\newcommand{\dummybranch}{{\texttt{\textcolor{magenta}{br}}}\xspace}
\newcommand{\dsldummybranch}{\texttt{\textcolor{magenta}{br}}\xspace}
\newcommand{\dsllambda}{\text{\textcolor{magenta}{\(\lambda\)}}\xspace}
\newcommand{\dsllet}{\texttt{\textcolor{purple}{let}}\xspace}
\newcommand{\dslin}{}
\newcommand{\dslend}{\texttt{\textcolor{purple}{return}}\xspace}
\newcommand{\dslif}{\texttt{\textcolor{purple}{if}}\xspace}
\newcommand{\dslelse}{\texttt{\textcolor{purple}{else}}\xspace}
\newcommand{\dslwhere}{\dsl{where}\xspace}
\newcommand{\dslretry}{\texttt{\textcolor{purple}{retry}}\xspace}
\newcommand{\dsluntil}{\texttt{\textcolor{purple}{until}}\xspace}
\newcommand{\dslretryuntil}[2]{{\dslretry}\ \{ {#1} \}\ \dsluntil\ \{ {#2} \}}
\newcommand{\dslfor}{\texttt{\textcolor{purple}{for}}\xspace}
\newcommand{\dslite}[3]{{\dslif}\ {#1}\ \{ {#2} \}\ \dslelse\ \{ {#3} \}}
\newcommand{\dsltern}[3]{{#1} ?\: {#2}\: : \: {#3}}
\newcommand{\dslforeach}[3]{\dslfor\ #1 \in #2\ \{ #3 \}}
\def\emptytrace{\langle\rangle}
\def\eval{\ensuremath{\Rightarrow}\xspace}
\def\returns{\textcolor{purple}{\lightning}}
\def\continue{\texttt{cont}}
\newcommand{\valuation}[3]{\llbracket{#1}\rrbracket_{{#2},{#3}}}
\def\scoreFuncOne{\ensuremath{\chi_{\mathrm{syn}}}} 
\def\scoreFuncTwo{\ensuremath{\chi_{\mathrm{T}}}}  
\renewcommand{\sectionautorefname}{\S\@gobble}
\renewcommand{\subsectionautorefname}{\S\@gobble}
\newtheorem{lemma}{Lemma}
\begin{document}

\title{Program Synthesis from Partial Traces}


\author{Margarida Ferreira}
\email{margarida@cmu.edu}
\orcid{0000-0002-1170-5124}
\affiliation{%
  \institution{Carnegie Mellon University}
  \city{Pittsburgh}
  \country{USA}
}
\affiliation{%
  \institution{INESC-ID/IST}
  \city{Lisbon}
  \country{Portugal}
}

\author{Victor Nicolet}
\email{victornl@amazon.com}
\orcid{0000-0002-3743-7498}

\author{Joey Dodds}
\email{jldodds@amazon.com}
\orcid{0009-0004-1534-6968}

\author{Daniel Kroening}
\email{dkr@amazon.com}
\orcid{0000-0002-6681-5283}

\affiliation{%
  \institution{Amazon}
  \city{Seattle}
  \state{Washington}
  \country{USA}
}

\begin{abstract}
We present the first technique to synthesize programs that compose side-effecting functions, pure functions, and control flow, from partial traces containing records of only the side-effecting functions. This technique can be applied to synthesize API composing scripts from logs of calls made to those APIs, or a script from traces of system calls made by a workload, for example. All of the provided traces are positive examples, meaning that they describe desired behavior. Our approach does not require negative examples. Instead, it generalizes over the examples and uses cost metrics to prevent over-generalization. Because the problem is too complex for traditional monolithic program synthesis techniques, we propose a new combination of optimizing rewrites and syntax-guided program synthesis. The resulting program is correct by construction, so its output will always be able to reproduce the input traces.
We evaluate the quality of the programs synthesized when considering various optimization metrics and the synthesizer's efficiency on real-world benchmarks. The results show that our approach can generate useful real-world programs.
\end{abstract}

\begin{CCSXML}
<ccs2012>
   <concept>
       <concept_id>10003752.10003790.10003794</concept_id>
       <concept_desc>Theory of computation~Automated reasoning</concept_desc>
       <concept_significance>500</concept_significance>
       </concept>
   <concept>
       <concept_id>10003752.10003790.10002990</concept_id>
       <concept_desc>Theory of computation~Logic and verification</concept_desc>
       <concept_significance>500</concept_significance>
       </concept>
 </ccs2012>
\end{CCSXML}

\ccsdesc[500]{Theory of computation~Automated reasoning}
\ccsdesc[500]{Theory of computation~Logic and verification}

\keywords{Program synthesis}

\maketitle



\section{Introduction}
Program synthesis is for lazy people. The promise of program synthesis is that a user writes a simple specification and gets a complex program. If specifications are complex to write or synthesis tools are hard to apply, users will prefer to write the program directly. The best specification is one they already have with no changes needed. Unfortunately (for synthesis tool writers), these specs often take the form of traces of external side effects. These traces may be sequences of messages exchanged between networked servers, logs of calls made to an Application Programming Interface (API), or traces of system calls made by a workload. Program synthesis from traces can be widely applied across different tasks; there are many instances of this approach in Programming-By-Demonstration (PBD) work \cite{DBLP:conf/nips/ShinPS18,concise-models-from-long-traces,WatchWhatIDo,psnow}, but all of this work is either limited to pure functional examples, simple trace replay, or assumes no computation occurs between effects visible in~traces.

Real-world traces are an incomplete view of a task: function calls with no side effects may not be recorded.
A program \dsl{a=F(); b=h(a); return G(b)} where only the calls to \dsl{F}~and~\dsl{G} are logged may produce traces \dsl{F()=0 :: G("id-0")=true}\xspace and \dsl{F()=42 :: G("id-42")=false}, with no occurrence of~\dsl{h}.
These \emph{hidden} function calls not present in the traces are a significant challenge for trace-guided synthesis.
In our example, the synthesizer infers from the data that there is a non-visible function that transforms \dsl{0} into \dsl{"id-0"} and \dsl{42} into \dsl{"id-42"}.
These traces come from external recordings of program behavior. For example, the administrator of a cloud deployment might repeatedly follow a sequence of steps to create a data store and then attach an access policy to that data store. For convenience, they create a policy name by appending the string "policy" to an ID generated by the creation of the data store. The log of these actions kept on the cloud will contain the creation of the data store, the policy, and the connection between the two. It will not contain the hidden function where the administrator bases the role name on the store's~metadata. 

To help automate such repetitive tasks, we propose a new synthesis technique, \systemname. \systemname is the first approach to synthesize programs from \textit{partial} traces. It can infer both control flow and non-trivial hidden pure function calls with no additional input from the user. We combine optimizing program rewrites of an initial trivial solution with calls to a syntax-guided synthesizer (SyGuS) with input-output examples as specifications.
Synthesis from input-output examples, also known as Programming-By-Example (PBE), has a long history of research and enables the synthesis of nontrivial functions. 
In \systemname, we rely on a PBE synthesizer to synthesize hidden functions without additional input from the user. These hidden functions, in turn, let us perform further computation so that the rewrites can expose more intricate relations between~data.

In example- and trace-based synthesis, we can assume the existence of a hidden \textit{target program}, the program with the exact desired behavior. \textcolor{cameraready}{In the cloud administrator example above, the complete target program exists only in the administrator's mind.}
We can describe the \textit{behavior} of a program as the set of traces that result from its execution on any possible input. When the synthesis specification is a set of traces, there is a trivial solution to the synthesis problem---the program that exactly reproduces the input traces. This trivial solution, though correct by construction (in the sense that it satisfies the specification), is likely not the target program that the user desires. The user probably wants a program that \textit{generalizes} the provided traces. For example, our cloud administrator does not want a program that can reproduce all of the data store names they have used in the past, they want a program that takes a data store name as a parameter, allowing them to provide important information like the name, while saving them many repetitive clicks on a web interface. 
The trivially correct program is a lower bound on the behavior of all possible correct programs because it produces the minimal set of traces to be considered correct. The target program, on the other hand, is an upper bound of the desired behaviors---it would provide us with all the possible traces that are considered correct, but behaviors not exhibited by the target are undesirable. Since we do not have access to the target program, we need another way to quantify as accurately as possible how close to the target behavior a program in the space is. In other words, we need a \textit{cost function} to efficiently traverse the program space. 

Besides generalizing to traces beyond those observed (allowing more behavior), we considered readability when building \systemname's cost functions. Like other synthesis works before us, we generally follow Occam's razor principle and favor shorter programs to achieve both goals. We evaluate \systemname using two different cost functions we built, but our approach is agnostic to the cost function; a user can write different cost functions for \systemname if they choose to optimize for something else. For example, a user might specify that a specific input to a given method call must be generalized, that they want to minimize the syntactic statements in the program, or a combination of both. 

\begin{figure}
    \centering
    \includegraphics[height=15ex]{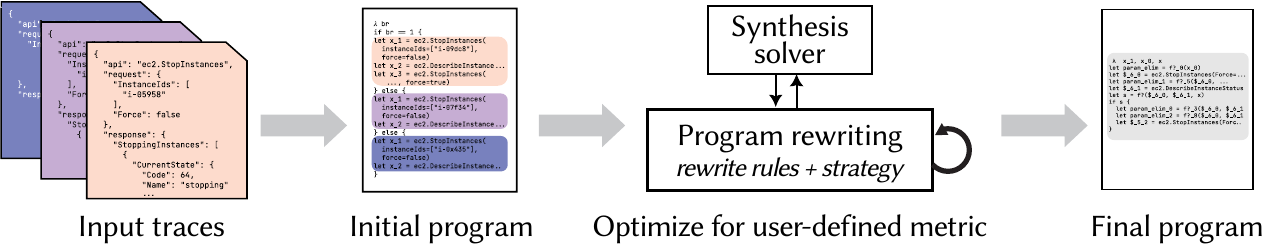}
    \vspace{-1.5ex}
    \caption{Overview of our synthesis approach.}
    \label{fig:overview}
    \Description{On the left, there is a set of input traces. These traces are used as input to the synthesizer. The synthesizer starts by writing an initial program using the traces. Then, the initial program is fed into a loop that optimizes for the user-defined metric. This loop uses a synthesis solver and a program rewriting engine, which entails the definition of program rewrites and a rewriting strategy. The outcome of this loop is the final output program.}
    \vspace{-1.7em}
\end{figure}

\autoref{fig:overview} summarizes our approach: an initial set of traces is provided, under the assumption that this set describes a task to be performed. We use that set of traces to construct an initial program. The core of our technique is the rewriting process guided by a user-defined program metric (the cost function) that may use an underlying syntax-guided synthesizer to generate some of the program's components. We present \systemname's rewrite rules and how we use the synthesizer in \autoref{sec:rewrite}. 

While our synthesis problem could be encoded as an optimization Satisfiability Modulo Theories (SMT) problem or as a SyGuS problem, the large search space prevents off-the-shelf state-of-the-art solvers from solving it.
In \systemname, we efficiently traverse the search space by combining SyGuS with \emph{program rewriting}. We start our search by building a trivial program, a lower bound on program behavior that is correct by construction. Then, we progressively rewrite it as long as we can decrease a cost function, generalizing the program and adding desired behavior, while \emph{provably maintaining correctness}. 
Program rewriting is a natural approach when it comes to optimizing programs for a given cost. For example, compiler optimization and superoptimization~\cite{superoptimization-cryptopt,superoptimizer-denali,superoptimizer-stochastic,sigcomm-21-rewrites} uses rewrites to improve the performance of programs across various dimensions.
%
A challenge to consider when developing a rewrite system is that an unsound rewrite can lead to incorrect programs. To reason about the correctness of our solution, we formally define a domain-specific language and a correctness statement that allow us to prove our rewrite rules correct w.r.t. the language and statement. The final result is guaranteed to be correct and is optimized for the user-defined~metric.


To show the practical applicability of \systemname, we implemented the algorithm and evaluated it on benchmarks gathered from cloud automation, filesystem manipulation, and document edition scripts. The \numTotalBenchmarks{} benchmarks were collected from custom tasks, existing AWS Automation Runbooks~\cite{aws-runbooks}, Blink Automations~\cite{blink-automation} and related work~\cite{apiphany}. We show that our approach generates scripts that accurately perform the task intended by the user for many tasks. This includes tasks with conditional control flow, loops, and hidden function calls between visible calls. We experiment with different strategies to apply rewrite rules and various metrics to optimize the final synthesized program. This shows that our approach is flexible and adaptable to different user requirements. 

In summary, we make the following contributions:
\begin{itemize}[topsep=.2ex]
    \item We describe a synthesis problem where the specification is a set of traces containing visible function calls, and the program to synthesize must perform those function calls and additionally implement hidden function calls and control flow.
    \item We implement \systemname, using a new approach that combines optimizing rewrites with traditional input-output guided program synthesis.
    \item We evaluate \systemname on a set of benchmarks built from AWS automation runbooks, previous work on API synthesis, and publicly available libraries.

\end{itemize}

\section{From Partial Traces to Programs}\label{sec:motivating-example}

We start by illustrating one execution of our synthesis procedure.  In this example, we synthesize a cloud management script that shuts down computing instances. The specification is a set of traces built from logs collected by the cloud provider (in this case, AWS \cite{what-is-aws}, one of the largest cloud providers) as the user performed the desired task manually a few times using a visual interface. The logs contain calls to the AWS API. For our synthesizer, the API calls are \textit{visible functions} in our program, whereas the data transformations with input and output data used for these API calls will be \textit{hidden functions}.

\paragraph{Motivating Example} An administrator might stop computing instances using the AWS console (shown in \autoref{fig:stop-ec2-instances-console}) by: 
\begin{enumerate}[noitemsep,topsep=.1em]
    \item selecting the instances they want to stop,
    \item clicking ``Instance state'', and
    \item selecting ``Stop instance'' from the options in the drop-down menu that appears.
\end{enumerate}
Clicks on the visual interface trigger calls to a cloud API, in this case, from the \dsl{ec2} service. A program could accomplish the same task by making the exact same API calls.

\begin{figure}
  \centering
  \begin{subfigure}[b]{0.43\textwidth}
    \centering
    \includegraphics[width=\textwidth]{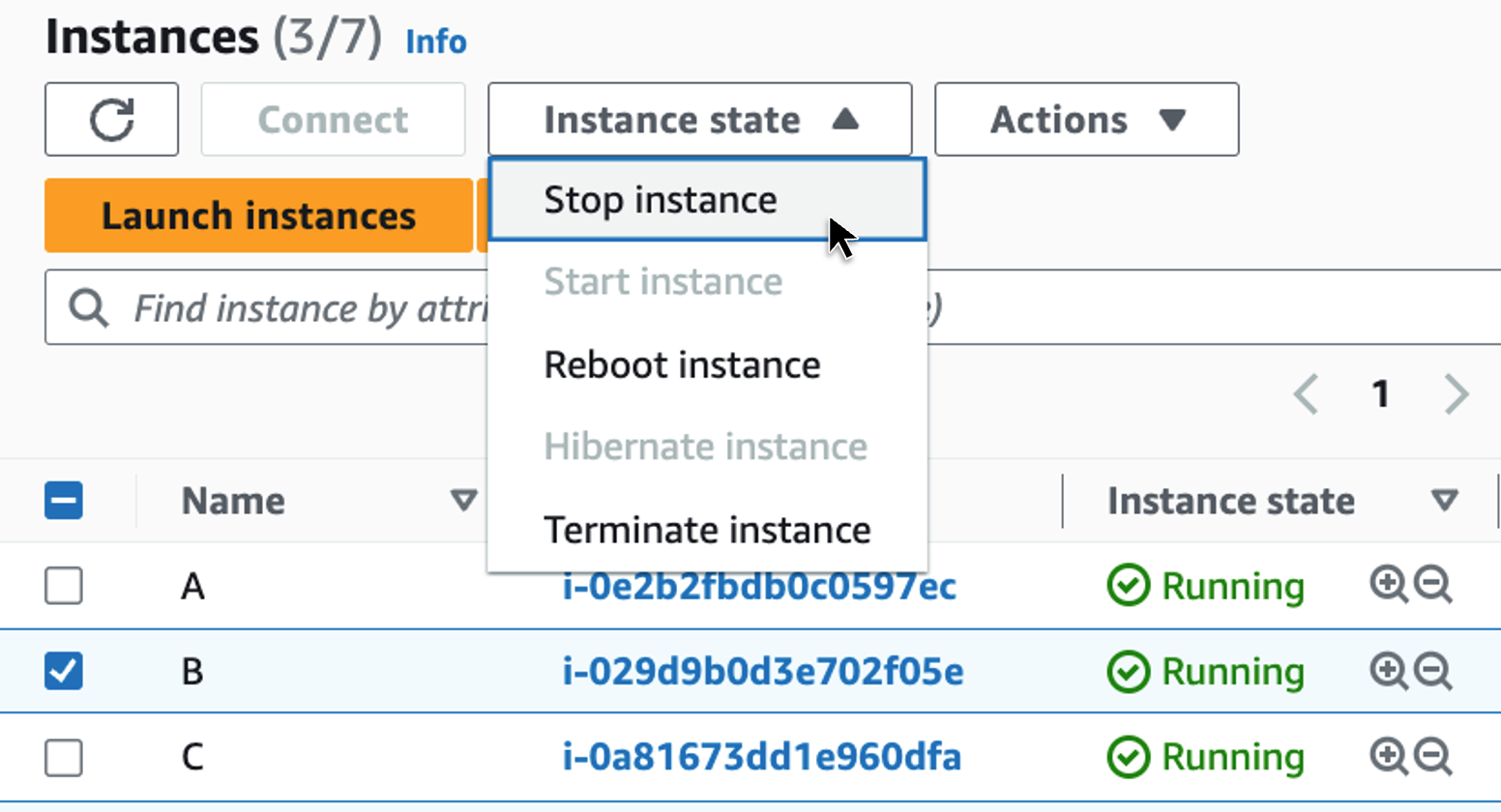}
    \vspace{-1em}
    \caption{Stopping instances in the AWS console}
    \Description{An AWS visual interface shows a list of computing instances, one of which is selected. A mouse cursor is clicking a button labeled "Stop Instance."}
    \label{fig:stop-ec2-instances-console}
  \end{subfigure}
  \hspace{2em}
  \begin{subfigure}[b]{0.43\textwidth}
    \centering
    \includegraphics[width=\textwidth]{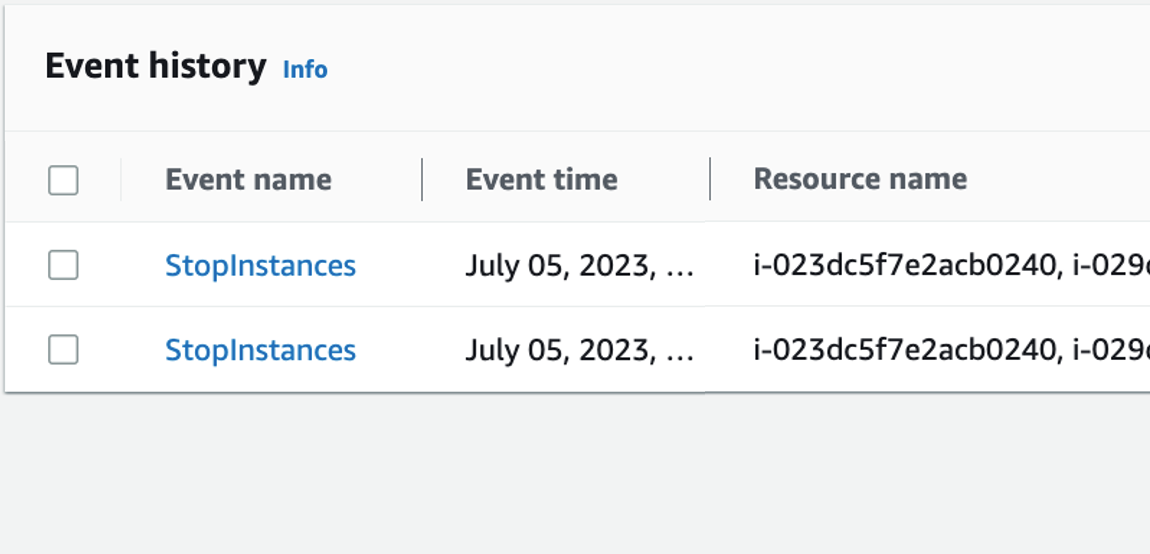}
    \vspace{-1em}
    \caption{The resulting log showing API calls}
    \Description{An AWS visual interface showing a list of timestamped API calls to a function called StopInstances.}
    \label{fig:cloudtrail-log}
  \end{subfigure}
  \vspace{-.8em}         
    \caption{Performing and monitoring actions on the AWS console}
    \Description{The figure contains two subfigures.}
    \label{fig:AWS-console-actions}
    \vspace{-1.3em}
\end{figure}

\begin{figure}
\begin{syrenColorListing}
lambda instanceId.
let ids = list(instanceId)
let _ = ec2.StopInstances(instanceIds=ids, force=false)
let s = ec2.DescribeInstanceStatus(instanceIds=ids, includeAllInstances=true)
let status = extractInstanceStatus(s)
if status != "stopped" { let _ = ec2.StopInstances(instanceIds=ids, force=true) }
where
extractInstanceStatus := $.InstanceStatuses[0].InstanceState.Name
\end{syrenColorListing}
\vspace{-1em}
\caption{Example of a program that stops an EC2 instance.}
\label{lst:stop-intances-cond}
\Description{A code snippet showing a program in Syren's programming language. It starts by making a call to ec2.StopInstances for a list of instance IDs, with the parameter force set to false. Then, it makes a call to ec2.DescribeInstanceStatus with the same instance IDs. Then, it uses its output in a call to a function extractInstanceStatus, whose output is saved to a variable status. Then, if status is not equal to "stopped", the script makes another call to ec2.StopInstances with the same IDs but with the parameter force set to true. The program terminates with the implementation of the function extractInstanceStatus, which extracts the field Name from the field InstanceState from the first element of the field InstanceStatuses fo the input object.}
\vspace{-1.3em}
\end{figure}

The program in \autoref{lst:stop-intances-cond} automates a task slightly more complex than the 3 steps described before. To perform this more complex task in the visual interface, a user would have to, after the previous~steps:
\begin{enumerate}[noitemsep,topsep=.1em]
    \setcounter{enumi}{3}
    \item click ``refresh'' to view the current status for their computing instances,
    \begin{enumerate}
        \item if the instance status is ``stopped'', then terminate the task here;
        \item otherwise once again click ``Instance state'', and
    \end{enumerate}
    \item select ``Force stop instance'' from the drop-down menu.
\end{enumerate}

\autoref{fig:cloudtrail-log} shows the logs created in AWS by these actions. We use these logs, which from now on we will refer to as \textit{traces}, as a specification for the synthesis problem. \textcolor{cameraready}{The synthesized program can be used to automate the task, reducing a many-click, repetitive, and error-prone task to a single press of a button.} Our approach ultimately generates a program similar to the one shown in \autoref{lst:stop-intances-cond} from the traces. We now give an overview of how our approach solves this specific problem.

\paragraph{Synthesis algorithm overview}

We start with a set of traces, each containing the API calls made while carrying out a task. The following two traces exemplify the task described previously:

\vspace{.5em}
\noindent
\begin{minipage}{\linewidth}
\begin{syrenListing}
|Trace \#1| := 
  (ec2.StopInstances("InstanceIds": ["i-09dc8"], "force": false), { ... })
  (ec2.DescribeInstanceStatus("InstanceIds":["i-09dc8"]),
      {"Statuses": [{"InstanceState": {"Code":64, "Name":"stopping"}, ...}], ... })
  (ec2.StopInstances("InstanceIds": ["i-09dc8"], "force": true), { ... })
\end{syrenListing}
\vspace{-2.2ex}
\begin{syrenListing}
|Trace \#2| := 
  (ec2.StopInstances("InstanceIds": ["i-07f34"], "force": false), { ... })
  (ec2.DescribeInstanceStatus("InstanceIds": ["i-07f34"]),
      {"Statuses":  [{"InstanceState": {"Code":80, "Name":"stopped"}, ...}], ...}) 
\end{syrenListing}
\end{minipage}
\vspace{.1em}

\noindent
Each trace is a sequence of pairs, and each pair represents an API call: the first element shows the API method name and its inputs, and the second shows the response to that API call. Both example traces start with two calls to the API methods, \dsl{ec2.StopInstances} with the parameter \dsl{force} set to \dsl{false}, and \dsl{ec2.DescribeInstancesStatus}. In trace \#1, the output of the call to \dsl{ec2.DescribeInstancesStatus} does \emph{not} show the current status as \dsl{"stopped"}, so we see a second call to \dsl{ec2.StopInstances} with \dsl{force} set to \dsl{true}. The call to \dsl{ec2.DescribeInstancesStatus} shows the current status as \dsl{"stopped"} in trace \#2, so no further calls are recorded.\looseness=-1

The first step in our synthesis pipeline, as shown in \autoref{fig:overview}, is to build an initial program that provably generates all the input traces for some initial global state. We do this by branching the execution on the value of a fresh integer variable, \dummybranch, and replaying each trace on a different branch. 
For our running example with two traces, we generate the following initial program:
\vspace{-1.7ex}
\begin{syrenColorListing}
lambda br
if br == 1 {
    let x_1_1 = ec2.StopInstances(instanceIds=["i-09dc8"], force=false)
    let x_1_2 = ec2.DescribeInstanceStatus(instanceIds=["i-09dc8"])
} else {
    let x_2_1 = ec2.StopInstances(instanceIds=["i-07f34"], force=false)
    let x_2_2 = ec2.DescribeInstanceStatus(instanceIds=["i-07f34"])
}
\end{syrenColorListing}
\vspace{-1.2ex}
\noindent
Our approach progressively transforms the program by applying rewrite rules that decrease an optimization metric. In this example, we use a metric that measures the syntactic complexity of the program:  we add 10 for each statement, 1 for each parameter, and 1 for each usage of \dummybranch, which is a synthetic variable that should only be used by the initial program. Initially, the program has cost 62.
The first rewrite applied to the program pulls the first call to \dsl{ec2.StopInstances} out of the if-statement and replaces its arguments, which were constants, with a ternary expression. A second application of the same rewrite rule extracts the call to \dsl{ec2.DescribeInstanceStatus}. Each rule reduces the cost by 9, eliminating one statement but introducing one usage of \dummybranch.
After applying these two rules, the intermediate program has cost~44:

\vspace{-1.7ex}
\begin{syrenColorListing}
lambda br
let x_1_1 = ec2.StopInstances(instanceIds=(br==1)?["i-09dc8"]:["i-07f34"], force=false)
let x_1_2 = ec2.DescribeInstanceStatus(instanceIds=(br==1)?["i-09dc8"]:["i-07f34"])
if br == 1 { let x_1_3 = ec2.StopInstances(instanceIds=["i-09dc8"], force=true) } 
\end{syrenColorListing}
\vspace{-1.2ex}

To eliminate usages of \dsl{br} in the ternary expressions, we need to replace the conditional expression \dsl{br==1} with another expression that evaluates to the same value but does not use \dummybranch. There are two ways to achieve this: either introduce a new input parameter that takes the value of the expression, or synthesize a function that will eventually evaluate to the conditional expression value. To synthesize a (nonconstant) function, \systemname considers as potential inputs all variables bound in the scope of the expression being replaced. In the first appearance of the expression \dsl{(br==1)?["i-09dc8"]:["i-07f34"]}, there are no variables bound in the scope that could be used as input to a data transformation. So, we have no choice but to introduce a new input parameter, \dsl{i_1}. \systemname replaces all usages of the original expression with \dsl{i_1}. Within the conditional branches, \systemname also replaces the usages of the value the expression evaluates to (considering the conditional). This results in the following program, with cost~43:

\vspace{-1.7ex}
\begin{syrenColorListing}
lambda br, i_1
let x_1_1 = ec2.StopInstances(instanceIds=i_1, force=false)
let x_1_2 = ec2.DescribeInstanceStatus(instanceIds=i_1)
if br == 1 { let x_1_3 = ec2.StopInstances(instanceIds=i_1, force=true) }
\end{syrenColorListing}
\vspace{-1.2ex}

The final rewrite for this example replaces the last conditional that depends on \dummybranch with the output of a new data transformation \( \phi \) over all variables in scope. 
After this last rewrite rule, the synthesized program is parametric on an implementation of that data transformation:

\vspace{-1.7ex}
\begin{syrenColorListing}
LAMBDA phi. lambda i_1
let x_1_1 = ec2.StopInstances(instanceIds=i_1, force=false)
let x_1_2 = ec2.DescribeInstanceStatus(instanceIds=i_1)
let c = phi(i_1, x_1_1, x_1_2)
if c { let x_1_3 = ec2.StopInstances(instanceIds=i_1, force=true) }
\end{syrenColorListing}
\vspace{-1.2ex}

\noindent
This rewrite is valid only if we can provide an implementation \dsl{f} for \(\phi\) that ensures the program can reproduce the input traces.
During the rewrite process, we maintain a mapping from the identifiers in the program to corresponding values in the traces. Then, we use these mappings to compute a set of input-output constraints that \dsl{f} must satisfy. For the traces and program in this example, we can extract the following two input-output pairs for the desired implementation \dsl{f}:
\vspace{-2ex}
\begin{lstlisting}[language=Syren,numbers=none,basicstyle=\small\linespread{.9}\ttfamily]
f(["i-09dc8"], {"StoppingInstances": [|\tiny\ldots|], "ResponseMetadata": {|\tiny\ldots|}},
   {"Statuses":[{"InstanceState":{"Code":64, "Name":"stopping"},|\tiny\ldots|}],|\tiny\ldots|}) = true |\textrm{\small for trace~\(\tau_1\),}|

f(["i-07f34"], {"StoppingInstances": [|\tiny\ldots|], "ResponseMetadata": {|\tiny\ldots|}},
   {"Statuses":[{"InstanceState":{"Code":80, "Name":"stopped"},|\tiny\ldots|}],|\tiny\ldots|}) = false |\textrm{\small for trace~\(\tau_2\).}|
\end{lstlisting}
\noindent
We encode the problem into a syntax-guided synthesis solver to generate a solution, which yields:\footnote{In practice, we need at least one more trace and its respective input/output example to synthesize this solution. If we consider only the two example traces shown, a simpler implementation is synthesized for \dsl{f}: \simplesyreninline{f := (i_1, x_1_1, x_1_2) -> i_1 == ["i-09dc8"]}.}

\noindent
\lstinline[language=Syren,escapeinside=||,basicstyle=\small\linespread{.9}\ttfamily]^f := (i_1, x_1_1, x_1_2) -> x_1_2.InstancesStatuses[0].InstanceState.Name != "stopped"^.

\noindent
Substituting \(\phi\) for \dsl{f} yields a program that is correct by construction, with a minimal cost of 41. This final program is syntactically equivalent to the one in \autoref{lst:stop-intances-cond}.
\section{Background and Definitions}\label{sec:prelim}

In this section, we formally introduce concepts necessary to explain our approach to synthesizing scripts that compose visible side-effecting function calls with conditionals, loops, and (hidden) pure function calls.
In \autoref{sec:dsl}, we define the domain-specific language (DSL) syntax of our synthesized programs. This DSL is an intermediate representation that we can easily convert to most common scripting languages. 
%
Next, in \autoref{sec:traces-definition}, we define \textit{program traces}, which we use as an input specification for synthesis. 
Finally, in \autoref{sec:semantics}, we define the semantics of our language, which relates a program to input and output states, as well as to traces. These semantics allow us to prove properties about the manipulation of the DSL to prove our approach~correct. 

\subsection{Core Language Syntax}\label{sec:dsl}

\begin{figure}
\footnotesize
    \centering
    \begin{grammar}
\firstcase{P}{\dsllambda \overline{x}. \nonterm{I}\ \dslwhere \ \overline{f := \nonterm{F}}}{Program}

\firstcase{I}{ \epsilon \gralt \nonterm{S}\ \nonterm{I}}{Instructions}
\firstcase{S}{\dsllet\ x = \nonterm{E}\ \dslin}{Pure binding}
\otherform{\dslif\ \nonterm{B}\ \{ \nonterm{I} \} \ \dslelse\ \{ \nonterm{I}\}}{Conditional}
\otherform{\dslretryuntil{\nonterm{I}}{\nonterm{B}}}{Retry until}
\otherform{\dslforeach{x}{L}{\nonterm{S}}}{Foreach loop}
\otherform{\dslend}{Return}

\firstcase{E}{\mathbb{A}(\overline{x})}{Visible function call}
\otherform{f(\overline{x})}{Hidden function call}
\otherform{\nonterm{B}\ \dsl{?}\ \nonterm{E}\ \dsl{:}\ \nonterm{E}}{Ternary expression}

\firstcase{B}{\top \gralt \bot \gralt \nonterm{B} \vee \nonterm{B} \gralt \nonterm{B} \wedge \nonterm{B} \gralt \neg \nonterm{B}}{Predicates}
\otherform{x = \nonterm{C}}{Value check}
\otherform{x (\geq \mid > \mid \leq \mid <) y}{Value comparison}

\firstcase{F} 
{??}{Pure Function}

\firstcase{C}{s \in \mathtt{string} \gralt n \in \mathbb{Z} \gralt b \in \{\dsl{true}, \dsl{false}\} }{Constants}
    \end{grammar}
    \caption{Core scripting language.}
    \label{fig:core-dsl}
    \Description{A set of rules describing a program in the core scripting language. }
    \vspace{-3ex}
\end{figure}

Figure~\ref{fig:core-dsl} presents the syntax of our core DSL. A program $\nonterm{P}$ is a function with input variables\footnote{We write $\overline{x}$ to denote zero or more occurrences of $x$.} $\overline{x}$ and a set of hidden functions $\overline{f := \nonterm{F}}$.  The body of the program is an instruction list $\nonterm{I}$, either empty ($\epsilon$) or with statements. A statement $\nonterm{S}$ can be a simple binding, a conditional, a loop, or the instruction that marks the end of the script. 
A binding $\dsllet \ x = e_1\ \dslin\ s_2$ binds $e_1$ to $x$ in $s_2$. 
Conditionals $\dslite{b}{s_1}{s_2}\ s_3$ execute $s_1$ if $b$ is true, otherwise $s_2$, and then $s_3$.
Our language has two forms of loops. $\dslretry\, s\, \dsluntil\, b$ (retry until) executes the instructions in $s$ at least once, until $b$ is true, or some predefined maximum number of retries is reached. $\dslforeach{x}{L}{s}$ iterates through the list $L$, binding $x$ to each element and executing $s$.

Expressions $\nonterm{E}$ can be visible or hidden function calls. A visible function call $\mathbb{A}(\overline{x})$ is a call to some externally defined function $\mathbb{A}$ with arguments $\overline{x}$, and a hidden call $f(\overline{x})$ is a call to a pure function $f$ whose implementation \textcolor{cameraready}{$\mathcal{F}$} is defined in the program. 
Visible functions can have side effects on the outside world, changing the results of future calls. However, they do not change the \textcolor{cameraready}{local} state of the DSL execution. The hidden functions are \emph{pure}, and their specific syntax depends on the chosen domain.

We clearly separate \textcolor{cameraready}{pure function implementations $\mathcal{F}$} from the rest of the program for two reasons: to simplify reasoning about variable usage and whole-program rewrites, and 
\textcolor{cameraready}{to highlight the fact that our DSL is agnostic of the hidden functions domain.}
In our implementation of \systemname, we consider a minimal language of hidden functions, which includes a JSONPath as well as other basic operations over strings, numbers, and Booleans. However our approach can be generalized to any other language for hidden functions, as long as expressions in that language can be synthesized.

\begin{example}
The script in \autoref{lst:stop-intances-cond} is an example of a script written in our DSL. The script takes a single input, \dsl{instanceIDs} and defines one data operation \dsl{extractInstanceStatus}. The visible functions are the API methods \dsl{StopInstances} (called twice), and \dsl{DescribeInstanceStatus}.
\end{example}

\subsection{Program Traces}\label{sec:traces-definition}

Our synthesis starts from \emph{observable} traces that can be produced by the program's execution. The \emph{observable} traces contain only records of the visible function calls made by the program and their results; the hidden function calls do not appear in traces 
.
Formally, a trace is a (possibly empty) finite sequence of records of all visible calls that the run of the program makes:
\[ \tau := \emptytrace \mid (\mathbb{A}(\overline{v}), e) :: \tau \quad \text{(traces)}, \]
where $\emptytrace$ is the empty trace, operator $\bullet$ performs concatenation, and $\emptytrace \bullet \tau = \tau \bullet \emptytrace = \tau$ for any trace~$\tau$. 
Each record is a pair \((\mathbb{A}(\overline{v}), e)\), where the first element states the name of the function \(\mathbb{A}\) and the inputs to the call \(\overline{v}\); the second element, \(e\), is the response to the call. \(e\) is an expression in the same language as the hidden functions. 

\subsection{DSL Semantics}\label{sec:semantics}

Next, we define the semantics of our language as the relation~$\eval$, presented in Figure~\ref{fig:small-step-sem-noret}. The relation~$\eval$ maps a pair of a program body and state to a triple of \textcolor{cameraready}{local} state, trace and continuation token. \textcolor{cameraready}{In our DSL semantics, we refer to two different notions of state. The local state, \(\sigma\), stores the bindings of every variable assigned in the program at a given point. The global state, \(G\), represents external resources accessed by the visible functions in the trace. The notion of global state is necessary because visible functions are not pure functions of their inputs; depending on the resources they access, two calls to the same function with the same inputs might return different outputs.} The continuation token is either $\continue$, indicating that evaluation must continue, or~$\returns$, indicating that the evaluation must stop.

\begin{figure}
\footnotesize
\begin{mathpar}
\inferrule[Seq]{
(s, \sigma) \eval (\sigma', \tau, \continue) \\ (S', \sigma') \eval (\sigma'', \tau', cr)
}{
(s\ S', \sigma) \eval (\sigma'', \tau \bullet \tau', cr)
}
\and
\inferrule[Seq-S-Term]{
(s, \sigma) \eval (\sigma', \tau, \returns)
}{
(s\ S', \sigma) \eval (\sigma', \tau, \returns)
}
\and
\inferrule[Ret]{
}{
    (\dslend, \sigma) \eval (\sigma, \emptytrace, \returns)
}
\and 
\inferrule[Emp]{
}{
    (\epsilon, \sigma) \eval (\sigma, \emptytrace, \continue)
}

\and

\inferrule[Ite-$\top$]{
\sigma \models b \\ (S_\top, \sigma) \eval (\sigma', \tau, cr)
}{
(\dslite{b}{S_\top}{S_\bot}, \sigma) \eval (\sigma', \tau, cr)
}

\and

\inferrule[Ite-$\bot$]{
\sigma \not\models b \\ (S_\bot, \sigma) \eval (\sigma'', \tau', cr)
}{
(\dslite{b}{S_\top}{S_\bot}, \sigma) \eval (\sigma'', \tau', cr)
}
\and
\inferrule[Retry-Until-Continue]{
    (S, \sigma) \eval (\sigma', \tau', cr) \\ \sigma' \not\models b \wedge \#_\iota < K \\
\dslretryuntil{S}{b}, \sigma'[\#_\iota \to \#_\iota + 1]) \eval (\sigma'', \tau'', cr) 
}{
    ({}^{\#\iota}\dslretryuntil{S}{b}, \sigma) \eval (\sigma'', \tau' \bullet \tau'', cr)
}
\and
\inferrule[Retry-Until-Stop]{
    (S, \sigma) \eval (\sigma', \tau', cr) \\ \sigma' \models b \vee \#_\iota \geq K
}{
    ({}^{\#\iota}\dslretryuntil{S}{b}, \sigma) \eval (\sigma'[\#_\iota \to 0], \tau', cr)
}
\and
\inferrule[Retry-S-Term]{
    (S, \sigma) \eval (\sigma', \tau', \returns)
}{
    ({}^{\#\iota}\dslretryuntil{S}{b}, \sigma) \eval (\sigma', \tau', \returns)
}
\and
\inferrule[For-Continue]{
    (S, \sigma[x \to L[\#_\iota]]) \eval (\sigma', \tau', cr) \\ \sigma' \models \#_{\iota} < |L| \\
    ({}^{\#\iota}\dslforeach{x}{L}{S}, \sigma'[\#_\iota \to \#_\iota + 1 ]) \eval (\sigma'', \tau'', cr) 
}{
    ({}^{\#\iota}\dslforeach{x}{L}{S}, \sigma) \eval (\sigma'', \tau' \bullet \tau'', cr)
}
\and
\inferrule[For-Stop]{
    (S, \sigma) \eval (\sigma', \tau', cr) \\ \sigma' \#_\iota \geq |L|
}{
    ({}^{\#\iota}\dslforeach{x}{L}{S}, \sigma) \eval (\sigma'[\#_\iota \to 0], \tau', cr)
}
\and
\inferrule[For-S-Term]{
    (S, \sigma) \eval (\sigma', \tau', \returns)
}{
    ({}^{\#\iota}\dslforeach{x}{L}{S}, \sigma) \eval (\sigma', \tau', \returns)
}
\and
\inferrule[Hidden]{
    \sigma \models f:= \mathcal{F} \wedge \exists e \cdot \mathcal{F}(\overline{y}) = e
}{
    (\dsllet\ x = f(\overline{y}) \dslin, \sigma) \eval (\sigma[x \to e], \emptytrace, \continue) 
}
\and
\inferrule[Visible]{
    \sigma \models \exists \overline{v} \cdot \overline{y} = \overline{v} \\ \mathbb{A}(G, \overline{v}) \downarrow e
}{
    (\dsllet\ x = \mathbb{A}(\overline{y}) \dslin, \sigma) \eval (\sigma[x \to e], (\mathbb{A}(G, \overline{v}), e), \continue) 
}
\end{mathpar}
\caption{Big-step semantics}\label{fig:small-step-sem-noret}
\Description{This figure contains a set of rules that define Syren's language's big step semantics. The rules are described in detail in the main body of the paper.}
\vspace{-1.3em}
\end{figure}

The rule \textsc{Seq} specifies how a statement $s$ followed by instructions $S'$ is evaluated sequentially and traces are concatenated. The rule \textsc{Seq-S-Term} handles the case where the first statement \emph{terminates} the evaluation of the program.
\textsc{Ret} states that the statement $\dslend$ always terminates early with an empty trace.
\textsc{Emp} generates an empty trace, does not change the \textcolor{cameraready}{local} state, and always continues evaluation.
In \textsc{Ite-$\top$} and \textsc{Ite-$\bot$} for conditionals, either the branch with instructions~$S_\bot$ or $S_\top$ are evaluated depending on whether the \textcolor{cameraready}{local} state entails $b$ or not. The continuation or termination token~$cr$ of the if-then-else is the same as the token in the evaluation of the branch, in particular, the statement terminates the evaluation when the branch terminates evaluation.

The rules \textsc{Retry-Until-Continue}, \textsc{Retry-Until-Stop}, and \textsc{Retry-S-Term} define how the retry-until statements are evaluated. 
Note that retry-until does not have the same semantics as a while loop: it will always terminate, and the predicate $b$ is not guaranteed to hold when the loop ends. We assume a constant $K$ that bounds the number of times the body of a retry-until statement can be "retried". For any set of traces, we can select a $K$ higher than the longest trace. This ensures that $K$ is high enough that any trace can be regenerated by the transformed program without timing out. We discuss this more in \autoref{sec:evaluation}.

Each retry-until statement is given a unique identifier, $\#_\iota$, and each of those identifiers is assigned $0$ in the initial \textcolor{cameraready}{local} state. The rule \textsc{Retry-Until-Continue} states that a retry-until statement with identifier $\#_\iota$ evaluates to state~$\sigma''$ and trace $\tau' \bullet \tau''$  when one iteration results in $\sigma'$ and $\tau'$, $b \wedge \#_\iota < K$ holds in state $\sigma'$,  and evaluating again the retry-until statement with the state $\sigma'$ where $\#_\iota$ is incremented results in $\sigma''$ and $\tau''$. The rule \textsc{Retry-Until-Stop} handles the case where $b \wedge \#_\iota < K$ does \emph{not} hold after evaluating the body of the loop. The rule \textsc{Retry-S-Term} handles the case where the body of the retry loop returns, and therefore the entire program returns. The rules for for-loops (\textsc{For-Continue}, \textsc{For-Stop} and \textsc{For-S-Term}) are similar, the main difference being that the variable~$x$ is bound at each new iteration and the stop condition depends on the size of the list~$L$, not the value of the Boolean $b$.

We differentiate binding on the type of expression they bind. If it is a call to a hidden function \textcolor{cameraready}{$f$ (rule \textsc{Hidden}) then the local state is modified by binding $x$ to the value $\mathcal{F}(\overline{y})$ evaluates to in the current state~$\sigma$, according to the semantics of the data-transformation domain and assuming $\mathcal{F}$ is $f$'s implementation}. The trace is unchanged \textcolor{cameraready}{by the hidden function}. If it is a call to a visible function (rule \textsc{Visible}), the arguments of the call are evaluated in $\sigma$, the result of the call is bound to $x$ in the \textcolor{cameraready}{local} state, and the call to $\mathbb{A}$ with the input values is recorded in the trace. 

The relation $\mathbb{A}(G, \overline{v}) \downarrow e$ means that the call to externally defined function $\mathbb{A}$ with input $\overline{v}$ in global state $G$ returns a response $e$. We implicitly update the global state as a function of each visible call and transfer it through sequences. This means that two programs that start in the same global state and execute the same visible calls receive the same responses to those calls. This formulation allows us to reason about the semantics of the program given existing pairs of input-output examples of calls (the traces) without actually executing any of the calls; we only need to assume the initial global state is the same as in the traces. A limitation of this approach is that time is not considered, so our approach will be unsound in situations where responses implicitly depend on time as opposed to ordering.

Finally, we introduce notation for relating traces with programs and states.

\begin{definition}[Program Evaluation]
Let $P := \lambda \overline{x}. S \ \dslwhere \ \overline{f := \nonterm{F}}$ a program and $\sigma$ a state mapping every variable in $\overline{x}$ to some value and every $f$ into the corresponding $\nonterm{F}$. 
Then, given a starting global state, there might exist exactly one trace $\tau$ and termination token \(cr\) such that $(S, \sigma) \eval (\sigma', \tau, cr)$. If and only if the trace and termination token exist, we say that $\tau$ is a \emph{trace of} $P$ with input $\sigma$ and write $P(\sigma) = \tau$. While not all syntactically valid programs fully evaluate, all \emph{synthesized} programs evaluate by construction. Given that all programs we discuss are synthesized, we no longer need to consider programs that do not successfully evaluate.
\end{definition}
\section{Synthesis Problem}\label{sec:problem}

As we illustrated with our motivating example, the synthesis problem solved in this paper consists in finding a program \(P\) in the language described in Figure~\ref{fig:core-dsl} that reproduces a set of input traces~\( T_{in} \). In this section, we formalize this intuitive correctness constraint and define our synthesis problem as a combination of the correctness constraint and another constraint on the quality of the program.

\subsection{Correct Solutions}
We use the notion of \emph{trace subsumption} to describe one program that can generate at least the same traces as another:

\begin{definition}[$\sqsupseteq$]\label{def:subsumes}
A program $P'$ subsumes a program $P$ ($P' \sqsupseteq P$) if and only if for every state $\sigma$ and trace $\tau$ such that $P(\sigma) = \tau$, there exists a state $\sigma'$ such that $P'(\sigma') = \tau$.
\end{definition}

Note that $\sqsupseteq$ is a partial order on programs. If $P \sqsupseteq P'$ and $P' \sqsupseteq P$ then $P$ and $P'$ are \emph{trace equivalent}. In general, we are interested in transformations that preserve subsumption (i.e. $P \rightsquigarrow P'$ only if $P' \sqsupseteq P$, where $\rightsquigarrow$ is a transformation), not just trace equivalence.
Formally, our correctness constraint \( \Psi \) is
\begin{equation}\label{eq:correctness}
\Psi(P, T_{in}) \equiv \forall \tau_i \in T_{in} \cdot \exists \sigma \cdot P(\sigma) = \tau_i.
\end{equation}

The set \(T_{in}\) is a set of finite input traces \(\tau_1, \tau_2, ..., \tau_t\)\footnote{We always assume that \(|T_{in}| > 1\).}.
There is always a trivial solution to \( \Psi \) for a given set of traces \( T_{in} \). It can be constructed using a single integer parameter \dummybranch and \( |T_{in}|\) branches, where each branch can be selected with a value for \dummybranch, and the branch makes the API calls contained in the \dummybranch-th trace of \( T_{in} \). We show in \autoref{sec:motivating-example} how this solution is built directly from the traces; it simply replays each of the traces, and the set of possible traces of the program is exactly \( T_{in} \).

Another less trivial solution would consist of combining visible function calls when possible but leaving all the inputs of the visible calls as parameters of the program, thus always discarding the output of the visible function call. This is also not an acceptable solution. Although it generalizes to other inputs, the generalization only comes from the program being entirely parameterized.

\subsection{Quality Constraint}

Although trivial solutions exist, they usually will not be what a user would expect as output; there is an expectation that the solution is a generalization of the traces. There are also infinitely many correct (solutions to \( \Psi \)) programs that are very general; consider, for example, a program listing all possible syntactic productions that satisfy the correctness constraint in branches. However, those programs are also not typically what the user expects.

To address this challenge, we assume the existence of a \emph{program cost function} \( \chi \) that, given a program \( P \) and a set of traces \( T_{in} \), returns a positive number. This program cost function reflects what the user expects; a good program is one with a low cost. For example, the program cost function could return the count of branches and the count of parameters of the program, indicating that the user desires a program with low complexity that is likely to generalize well.
The goal of the synthesizer is to find a program that minimizes the cost function. Formally, the goal is to solve, given a fixed set of traces \( T_{in}\),
\begin{equation}\label{eq:problem}
    \min_{\forall P \cdot \Psi(P,T_{in})} \chi(P, T_{in}).
\end{equation}

In this paper, we describe a generic algorithm that is parametric in \(\chi\), first by describing our rewrites in \autoref{sec:rewrite}, and then by describing the search approach in \autoref{sec:synthesis}.

\section{Rewriting Programs}\label{sec:rewrite}

Our synthesis algorithm applies a succession of rewrite rules to transform an initial trivial program into a more general and user-friendly one.
Each of these rewrite rules provably maintains the program’s correctness constraint, \(\Psi\), so that all intermediate programs can generate all input traces in \(T_{in}\).  
We split our rules into two categories, \emph{synthesis} rules and \emph{refinement} rules, depending on how they maintain correctness.

\begin{lemma}
\label{lem:sub-preserves}
Subsumption preserves correctness: 
\(\Psi(P, T_{in}) \bigwedge P' \sqsupseteq P \implies \Psi(P', T_{in})\). 
\end{lemma}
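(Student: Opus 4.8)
The plan is to prove the lemma by directly unfolding the two definitions involved --- the correctness predicate $\Psi$ from \eqref{eq:correctness} and the subsumption relation $\sqsupseteq$ from \autoref{def:subsumes} --- and composing them quantifier by quantifier. Nothing deeper than definition chasing should be required, since both notions are phrased in terms of the single-trace statement "there exists an input state realizing this trace."

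Concretely, I would proceed as follows. Assume $\Psi(P, T_{in})$ and $P' \sqsupseteq P$, and let $\tau_i \in T_{in}$ be arbitrary; the goal is to exhibit a state $\sigma'$ with $P'(\sigma') = \tau_i$. First, instantiate $\Psi(P, T_{in})$ at $\tau_i$ to obtain a state $\sigma$ with $P(\sigma) = \tau_i$. Next, instantiate the subsumption hypothesis $P' \sqsupseteq P$ at this pair $(\sigma, \tau_i)$: since $P(\sigma) = \tau_i$, \autoref{def:subsumes} yields a state $\sigma'$ with $P'(\sigma') = \tau_i$. This $\sigma'$ is the required witness, and since $\tau_i$ was an arbitrary element of $T_{in}$, we conclude $\Psi(P', T_{in})$.

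I do not expect a genuine obstacle here; the statement is essentially a bookkeeping consequence of how the two definitions were phrased. The one point worth flagging is that the input state realizing a given trace need not be preserved when passing from $P$ to $P'$: the witness for $P'$ is $\sigma'$, generally distinct from the witness $\sigma$ for $P$. This is exactly why both $\Psi$ and $\sqsupseteq$ are stated with an existential over the input state rather than with a fixed state, so the composition goes through without any hypothesis relating the state spaces (or parameter lists) of $P$ and $P'$. It is also the reason the paper works with subsumption rather than trace equivalence --- only the direction "$P'$ can do at least what $P$ does" is needed --- and, combined with $\sqsupseteq$ being a partial order, it is what justifies applying a whole chain of such rewrites while keeping $\Psi$ intact.
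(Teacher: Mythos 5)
Your proof is correct and is exactly the paper's argument, merely spelled out: the paper's one-sentence proof (``if $P$ can generate all traces in $T_{in}$, and $P'$ can generate all traces that $P$ can generate, then $P'$ can generate all traces in $T_{in}$'') is precisely the quantifier-by-quantifier composition of $\Psi$ and $\sqsupseteq$ that you carry out. No discrepancy to report.
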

\begin{proof}
If \(P\) can generate all traces in \(T_{in}\), and \(P'\) can generate all traces that \(P\) can generate, then \(P'\) can generate all traces in \(T_{in}\).
\end{proof}

\emph{Refinement rewrite rules} preserve subsumption: for all \(P\) and \(P'\), if a refinement rule rewrites \(P\) into \(P'\), \(P \rightsquigarrow P'\), then \(P' \sqsupseteq P\). By \autoref{lem:sub-preserves}, these rules preserve the correctness when applied to a correct program.
The following is an example of the application of a refinement rule that extracts an identical instruction \(\color{red}\mathcal{R}\) from both branches of an if-then-else statement. This rewrite does not change the semantics of the program but improves its readability by reducing its number of instructions.

\begin{center} \vspace{-1.3em}
\begin{tabular}{m{0.375\textwidth} m{.04\textwidth} m{0.42\textwidth}}
\begin{lstlisting}[language=Syren,numbers=none,basicstyle=\small\linespread{.9}\ttfamily]
lambda |\(\overline{x}\)|.
|\(\mathcal{U}\)| if |\(\mathcal{C}\)| { |\(\color{red}\mathcal{R}\)| |\(\mathcal{S}\)| } else { |\(\color{red}\mathcal{R}\)| |\(\mathcal{T}\)| } |\(\mathcal{V}\)|
\end{lstlisting} 
& \(\rightsquigarrow\)& 
\begin{lstlisting}[language=Syren,numbers=none,basicstyle=\small\linespread{.9}\ttfamily]
lambda |\(\overline{x}\)|.
|\(\mathcal{U}\)| |\(\color{red}\mathcal{R}\)| if |\(\mathcal{C}\)| { |\(\mathcal{S}\)| } else { |\(\mathcal{T}\)| } |\(\mathcal{V}\)|
\end{lstlisting} 
\\
\end{tabular}
\vspace{-1.2em}
\end{center}

\emph{Synthesis rewrite rules} all follow the same pattern: replace an expression \(e\) with the output of a call to a to-be-synthesized pure function \(\phi\). \(\phi\) is not visible in the traces, so we refer to it as a \emph{hidden} function. All the bound variables available at the location are used as arguments to \(\phi\), except when the rule is trying to eliminate a parameter.
The correctness of a synthesis rewrite rule is conditioned by the existence of a solution for the hidden function calls they introduce. Formally, we denote by \(\Lambda \overline{\phi} \cdot P\) a program \(P\) parametric on a set of hidden functions~\(\overline{\phi}\). For a given set of implementations \(\overline{f}\), \( (\Lambda \overline{\phi} \cdot P)(\overline{f})\) is a valid program in our DSL.
A~single synthesis rule \(\rightsquigarrow\) rewrites \(P\) to a program \(P' := \Lambda \phi \cdot P_s\) parametric on some hidden function~\(\phi\). The rewrite rule \(P \rightsquigarrow P'(f)\) is correct for some function \(f\) only if \(P'(f) \sqsupseteq P_{in}\) where \(P_{in}\) is the initial program. By \autoref{lem:sub-preserves}, if \(\Psi(P_{in}, T_{in})\) and  $P'(f) \sqsupseteq P_{in}$, then \(\Psi(P', T_{in})\). Note that we can chain multiple synthesis rules together and check for correctness only later, i.e. rewrite \( P \rightsquigarrow \Lambda \phi \cdot P_s \rightsquigarrow \Lambda \phi,\phi' \cdot P_s' \) and then find \(f\) and \(f'\) later to instantiate \( \phi\) and \(\phi'\). 
We explain how to find the implementation of hidden functions \(f\) in \autoref{sec:data-transform-synthesis}.
For a list of \systemname's rewrite rules, the reader can refer to 
\iffullversion
\autoref{sec:appendix-rewrite-rules}.
\else
Appendix~B in the full version of this paper~\cite{fullsyren}.
\fi
\begin{example}\label{ex:rewrites}
We illustrate below how we apply a sequence of rewrite rules to generalize programs and produce an acceptable solution. Suppose that we have constants \dsl{c1,c2,c3}, visible functions \(\dsl{A},\dsl{B}\)  and some initial program \(P_{in}\) as shown below:

\vspace{.5em}
\noindent
\begin{tabular}{m{0.18\textwidth} m{.04\textwidth} m{0.30\textwidth} m{.04\textwidth} m{0.29\textwidth}}
\(P_{in}:\) & & \(P_1:\) & & \(P_2:\) \\[-1.3em]
\begin{lstlisting}[language=Syren,numbers=none,basicstyle=\small\linespread{.9}\ttfamily]
lambda br.
if br = 1 {
  let x1 = A(c1)
  let y = B(c2)
} else {
  let x2 = A(c3)
}
\end{lstlisting} 
& \(\rightsquigarrow\)& 
\begin{lstlisting}[language=Syren,numbers=none,basicstyle=\small\linespread{.9}\ttfamily]
lambda br.
let a = (br = 1) ? c1 : c3
if br = 1 {
  let x1 = A(a)
  let y = B(c2)
} else { let x2 = A(a) }
\end{lstlisting}
& \(\rightsquigarrow\)& 
\begin{lstlisting}[language=Syren,numbers=none,basicstyle=\small\linespread{.9}\ttfamily]
lambda br.
let a = (br = 1) ? c1 : c3
let x = A(a)
if br = 1 { let y = B(c2) }
\end{lstlisting}
\end{tabular}

\vspace{-1.1em}
\noindent
\begin{tabular}{m{.05\textwidth} m{0.18\textwidth} m{.05\textwidth} m{0.25\textwidth} m{.05\textwidth} m{0.25\textwidth}}
& \(P_3:\) & & \(P_4:\) & & \(P_5:\) \\[-1.3em]
\(\rightsquigarrow\)& 
\begin{lstlisting}[language=Syren,numbers=none,basicstyle=\small\linespread{.9}\ttfamily]
lambda br, d.
let x = A(d)
if br = 1 {
  let y = B(c2)
}
\end{lstlisting}
& \(\rightsquigarrow\)& 
\begin{lstlisting}[language=Syren,numbers=none,basicstyle=\small\linespread{.9}\ttfamily]
LAMBDA phi lambda br, d.
let x = A(d)
let c = phi(d, x)
if c { let y = B(c2) }
\end{lstlisting}
& \(\rightsquigarrow\)& 
\begin{lstlisting}[language=Syren,numbers=none,basicstyle=\small\linespread{.9}\ttfamily]
LAMBDA phi lambda br, d.
let x = A(d)
let c = phi(d, x)
if c { let y = B(c2) }
\end{lstlisting}
\end{tabular}
\vspace{-.9em}

\noindent
We rewrite \(P_{in}\) using a refinement rule that introduces a new variable \dsl{a}, which is bound to the constants \dsl{c1} or \dsl{c3} in the conditional, and then used as argument to the calls to \dsl{A}. \(P_1\) is the resulting program. Then, we apply to \(P_1\) the refinement rule shown in \autoref{sec:rewrite}, which factors the calls to \dsl{A} out of the conditional, resulting in \(P_2\).
The third rewrite eliminates the expression \lstinline[language=Syren]^let a = if br = 1 {c1} {c3}^ which depends on \dsl{br} and introduces a parameter \dsl{d} that takes its value.
This rewrite provably maintains correctness and produces a program, \(P_3\), generalized to any input \dsl{d}. 
A fourth rewrite introduces a function parameter \(\phi\) (to be synthesized) to eliminate \dsl{br} from the conditional, resulting in~\(P_4\).
The final rewrite eliminates the unused parameter~\dsl{br}.
 
\end{example}

\subsection{Trace Valuation}\label{sec:synthesis-rewrite-rules}


%

Rewrites maintain correctness by ensuring that, given an implementation for the hidden function introduced, the program can still generate the initial set of traces. While refinement rewrite rules are correct for all inputs of the program, synthesis rewrite rules require more attention.
To keep track of this correctness constraint, we maintain an augmented \textcolor{cameraready}{local} state \(\sigma\), the trace valuation of the program, which relates variables in the program with a specific trace and a concrete value (and an iteration number for variables in loops).
The trace valuation stores the relationships necessary for the rewritten program to reproduce each trace \(\tau \in T_{in}\), and each rewrite rule application modifies that state to maintain the invariant. This ensures that the value of all expressions of the program for a certain trace is always known, either because the variable's value is known, or the expression's value can be computed from those known values. Given an expression \(e\), trace valuations \( \sigma\) and trace \( \tau \) we denote the value of \( e \) in trace \( \tau \) and state \( \sigma \) by \( \valuation{e}{\sigma}{\tau} \).

The initial program has only one variable \dsl{br}, and initially  \(\valuation{\dsl{br}}{\sigma}{\tau_i}= i\) for each trace \(\tau_i \in T_{in}\). 
Then, each rewrite rule \( \rightsquigarrow \) in our system is accompanied with a trace valuation transformation \(t\), which we denote by \(\rightsquigarrow_t\).
We introduce a new function $I$, which extracts the program parameters from a state, and overload $\rightsquigarrow$ to apply to sequences of instructions as well, instead of entire programs only.
Each \(t\) and associated rewrite $\rightsquigarrow_t$ is correct when for each input trace, when a rewrite and corresponding trace valuation transformation are applied, if the program runs with the inputs of the updated state then it produces the same input trace:

\vspace{-1.5em}
\[\tau_i \in T_{in} \wedge t(\sigma_0') = \sigma_1' \wedge (S, \sigma_0) \eval (\sigma_0', \tau_i, cr) \wedge S \rightsquigarrow_t S' \implies (S', I(\sigma_1')) \eval (\sigma_1', \tau_i, cr)\]

\vspace{-.2em}
The function $t$ will encapsulate the parameter updates of the rewrite and any functions introduced. This rule requires the correctness of~$S$ (in the third conjunct of the hypothesis), and the conclusion directly implies the correctness of~$S'$, where $I(\psi_1')$ witnesses the existential needed by the correctness statement.

Synthesis rules replace an expression or number of expressions with a hidden function.

\begin{example}\label{ex:trace-state}
The refinement rule of Example~\ref{ex:rewrites} can specified with its transformation \(t_3\):
\begin{center}
\begin{tabular}{m{0.42\textwidth} m{.06\textwidth} m{0.33\textwidth}}
\(P_2\): &  & \(P_3\): \\[-1.2em]
\begin{lstlisting}[language=Syren,numbers=none,basicstyle=\small\linespread{.9}\ttfamily]
lambda br.
let a = if br = 1 { c1 } else { c3 }
let x = A(a)
if br = 1 { let y = B(c2) }
\end{lstlisting}
& \(\rightsquigarrow_{\color{red} t_3}\)  & 
\begin{lstlisting}[language=Syren,numbers=none,basicstyle=\small\linespread{.9}\ttfamily]
lambda br, d.
let x = A(d)
if br = 1 { let y = B(c2) }
\end{lstlisting}
\end{tabular}
\vspace{-1em}

{\small \raggedleft where \({\color{red} t_3}(\sigma) = \sigma[(d,\tau) \mapsto \valuation{\dsl{if br = 1 {c1} else {c3}}}{\sigma}{\tau}]\)}
\end{center}

\noindent
That is, the trace valuation transformation \(t_3\) corresponding to this rewrite assigns the resulting value of evaluating the eliminated expression \(\dsl{if br = 1 {c1} else {c3}}\) to the new parameter \dsl{d}. 
Concretely, if the program above is synthesized from the two traces:
\[
\tau_1 = (\dsl{A(c1)}, o_{1}) :: (\dsl{B(c2)}, o_{2})
\quad \text{ and } \quad
\tau_2 = (\dsl{A(c3)}, o_{3})\\
\]

Given that \(\valuation{\dsl{br}}{\sigma}{\tau_1} = 1\), we have \(\valuation{\dsl{if br = 1 {c1} else {c3}}}{\sigma}{\tau_1} = \dsl{c1}\), and therefore 
\(\valuation{d}{t_3(\sigma)}{\tau_1} = \dsl{c1}\). For the second trace, we would have \(\valuation{d}{t_3(\sigma)}{\tau_1} = \dsl{c3}\).

In a following step in Example~\ref{ex:rewrites}, we apply the following synthesis rule to the program:
\begin{center}
\vspace{-.8em}
\begin{tabular}{m{0.31\textwidth} m{.01\textwidth} m{0.22\textwidth} m{0.38\textwidth}}
\(P_3\): &  & \(P_4\): &  \\[-1.2em]
\begin{lstlisting}[language=Syren,numbers=none,basicstyle=\small\linespread{.9}\ttfamily]
lambda br, d.
let x = A(d)
if br = 1 { let y = B(c2) }
\end{lstlisting}
& \hspace{-1.5em} \(\rightsquigarrow_{\color{red}{t_4}}\)  & 
\begin{lstlisting}[language=Syren,numbers=none,basicstyle=\small\linespread{.9}\ttfamily]
LAMBDA phi lambda br, d.
let x = A(d)
let c = phi(d, x)
if c { let y = B(c2) }
\end{lstlisting}

&
{\small where \({\color{red}t_4}(\sigma) = \sigma[(c,\tau) \mapsto \valuation{\dummybranch = 1}{\sigma}{\tau}] \)}
\end{tabular}
\vspace{-1.5em}
\end{center}

The trace valuation for program \(P_4\) will map \dsl{d} to the correct boolean value in each trace, that is, \(\valuation{c}{\sigma}{\tau_1} = \dsl{true}\) and \(\valuation{c}{\sigma}{\tau_2}=\dsl{false}\). Additionally, the values for the inputs of \(c\) and \(x\) will also be known from the state, for example for trace 1 \(\valuation{d}{\sigma}{\tau_1} = \dsl{c1}\) and \(\valuation{x}{\sigma}{\tau_1} = o_1\) (see trace \(\tau_1\)).



\end{example}

\begin{example}
\color{cameraready}
The rewrite rules that introduce retry loops are synthesis rules because the condition on which to stop the loop needs to be synthesized. Syntactically, the rewrite identifies a sequence of statements, possibly with conditionals, and rolls them into a loop. The following is an example of a loop introduction rewrite:

\vspace{-1em}
\begin{center}
\begin{tabular}{m{0.26\textwidth} m{.05\textwidth} m{0.32\textwidth}}
\begin{lstlisting}[language=Syren,numbers=none,basicstyle=\small\linespread{.9}\color{black}\ttfamily]
lambda br, |$\overline{y}$|.
let a = A(c1)
let b1 = B(c2)
let b2 = B(c2)
if br=1 { let b3 = B(c2) }
\end{lstlisting} 
& \(\rightsquigarrow_t\)  
& \begin{lstlisting}[language=Syren,numbers=none,basicstyle=\small\linespread{.9}\color{black}\ttfamily] 
LAMBDA phi lambda br, |$\overline{y}$|.
let a = A(c1)
retry {
  let b = B(c2)
  let s = phi(b, a, |$\overline{y}$|)
} until s
\end{lstlisting}
\end{tabular}
\vspace{-.9em}

{\small \raggedleft where $t(\sigma) = \sigma[
(\dsl{(b,b,b)}, \tau) \mapsto \valuation{\dsl{(b1,b2,b3)}}{\sigma}{\tau}
] \cup [
(\dsl{(s,s,s)}, \tau) \mapsto \dsl{(false,}\valuation{\dsl{br != 1}}{\sigma}{\tau}\dsl{,true)}
]$
}
\end{center}
\vspace{-.2em}

\noindent

In the rewritten program syntax, a new variable \dsl{s} is bound to the result of the hidden function \dsl{phi} and used as a stopping condition for the retry loop. 
The key in ensuring this is a correct rewrite is in the valuation transformation $t$. 
The new trace valuation maps \emph{iterations} of \dsl{b} to the values of each statement that has been captured in the loop, represented by the vector \dsl{(b1,b2,b3)}. When evaluating the program for a trace, the variable \dsl{b3} will not be defined for the traces where \dummybranch \(\ne\) 1, in which case the value is null. The valuation of condition \dsl{s} is also a vector \dsl{(s,s,s)} that is computed by assigning the truth value of whether the statement in the trace should be the last one; at the end of the second iteration, \dsl{s} is \dsl{true} for the trace where \dsl{br!=1}. 

\end{example}

\subsection{Synthesizing Hidden Functions}\label{sec:data-transform-synthesis}

\begin{wrapfigure}{r}{0.52\textwidth}
\vspace{-1.8em}
\centering
\small
\begin{grammar}
  \firstcase{B}{\nonterm{J} == \nonterm{V}}{string or integer equality}
    \otherform{empty(\nonterm{J})}{emptiness check}
    \otherform{! \nonterm{B}}{negation}
  \firstcase{J}{\$}{input}
    \otherform{\nonterm{J} .\nonterm{K}}{select child by name}
    \otherform{\nonterm{J} ..\nonterm{K}}{select descendants by name}
    \otherform{\nonterm{J}[\nonterm{I}]}{select by index}
    \otherform{\nonterm{J}[\nonterm{I}:\nonterm{I}]}{slice by index}
    \otherform{length(\nonterm{J})}{length}
    \otherform{\nonterm{V} + \nonterm{J}}{numerical addition}
    \otherform{\nonterm{V} \bullet \nonterm{J}}{/ string concatenation}
    \firstcase{K}{k \in \textit{keys}}{}
    \firstcase{V}{v \in \textit{values}}{}
    \firstcase{I}{i \in \textit{indices}}{}
\end{grammar}
\caption{Hidden functions synthesis DSL.}
\label{fig:data-transform-synthesis-dsl}
\end{wrapfigure}

The correctness of the result of applying a synthesis rewrite rule \( P \rightsquigarrow P'(f) \) depends on satisfying a set of constraints imposed on \( f \) by the condition  \(\exists f. P'(f) \sqsupseteq P_{in}\). 
As we explained in the previous section, all rewrite rules update an extended state that keeps track of the valuations of the variables in a correct program.
Given the value in the extended state, the synthesis of \( f\) is reducible to a standard programming-by-example (PBE) synthesis problem, deducible from \(\sigma\) only. 
Those constraints are solved with an off-the-shelf synthesizer to produce either an implementation for \(f\) or an unsatisfiability result. In the latter case, the synthesis rule cannot be applied while maintaining correctness.

\paragraph{Generating input/output examples} 
Synthesis rewrite rules replace an expression \(e\) in the program with a function call \(\phi(\overline{x})\) whose result is bound to a new variable \dsl{y}. Before the rewrite, for each trace \( \tau \) we had some value for \(e\), i.e., \(\valuation{e}{\sigma}{\tau} = v_\tau\). To maintain trace subsumption, the transformation~\(t\) ensures \(\valuation{\dsl{y}}{t(\sigma)}{\tau} = v_\tau\) by mapping the new variable \(\dsl{y}\) to the appropriate value. A correct implementation for \(\phi\) must satisfy for each trace \(\tau\) the input-output constraint \( \phi(\valuation{\overline{x}}{\sigma}{\tau}) = v_\tau\).

\begin{example}
Recall Example~\ref{ex:trace-state}. Program \(P_4\) is parametric on \(\phi\), which appears in the statement 
\lstinline[language=Syren]^let c = phi(d,x)^, 
and is correct for a specific implementation of \(\phi\) iff for all traces \(\tau\), \(\phi(\valuation{d}{\sigma}{\tau},\valuation{x}{\sigma}{\tau}) = \valuation{c}{\sigma}{\tau}\). Since we have two traces we have the constraints
\(\phi(\dsl{c1}, o_1) =\dsl{true}\) and \( \phi(\dsl{c3}, o_3) =\dsl{false}\).
\end{example}

\paragraph{Synthesizing solutions} The input-output pairs for each hidden function are used to synthesize the expression for that hidden function. To achieve this, we encode the problems of our synthesis domain into an existing example-based program synthesizer. The hidden functions synthesis can be done in any domain, as long as it is supported by the example-based synthesizer.
In this section, we illustrate using the domain of our running example: cloud automation scripts. The visible functions are typically APIs that accept parameters and return responses in JSON format \cite{bray2014rfc}. Thus, our domain targets JSON data manipulation scripts, including small predicates for generating conditions. This domain covers the majority of use cases of hidden functions between visible function calls in the automation scripts we observed. 
JSON is a lightweight, language-agnostic data interchange format widely used in web applications and APIs. 
The main (recursive) datatypes are lists and dictionaries, which map unique string keys to other JSON objects. The base datatypes are booleans, strings and numbers. Our synthesis domain is summarized in \autoref{fig:data-transform-synthesis-dsl}, which presents a grammar that includes basic comparison between objects and values, and JSONPath~\cite{jsonpath} operations. The non-terminal \(\mathcal{B}\) in the grammar symbolizes the boolean expressions we consider in our DSL, and \(\mathcal{J}\) the JSONPath expressions. Those operations allow the selection of specific indices, members, or descendants of JSON data structures. For example, the path \lstinline^$.element[0]^ selects the \lstinline^element^ field of the object, and then the first element in that list. 

To the best of our knowledge there is no synthesizer that targets this domain, despite the ubiquity of JSON to represent data in applications. Solving it requires encoding our problem into a domain supported by a general purpose synthesizer that allows specifications using input-output examples.
In \systemname, we encode the JSONPath synthesis problem grammar into Rosette~\cite{rosette}, a solver-aided programming language with synthesis constructs. Rosette does not support symbolic strings, thus in our encoding, all strings are constant values extracted from the input and output examples.
The string values are used for \textit{keys} and \textit{values} in the grammar in \autoref{fig:data-transform-synthesis-dsl}, and are the result of enumerating all keys in the dictionaries in input-output constraints, and all values, respectively. In some problems, the size of this set of constants becomes a bottleneck because the objects returned by API calls contain hundreds of keys and values. 
We parallelize the search for a solution by producing sub-grammars for the problem, using different sets of keys and values, and different grammar sizes \cite{abagnale, cubes}.
Rosette was able to synthesize most Boolean expressions in our benchmarks in the grammar including a subset of JSONPath and string operations, shown in \autoref{fig:data-transform-synthesis-dsl}.

We considered an alternative approach to using Rosette: encoding the synthesis problem into SMT theories, and using a SyGuS solver supporting those theories. The SyGuS language~\cite{sygus-lang} allows users to specify synthesis problems with input/output pairs as specifications. We encoded JSON data structures and JSONPath operations using a combination of list and user-defined datatypes for dictionaries and lists, and string and integer theories for the base types. We tested this encoding on CVC5~\cite{cvc5} alongside Rosette~\cite{rosette} on our benchmarks, and found that Rosette consistently outperforms CVC5. CVC5 was unable to solve the problems in our JSONPath benchmarks in reasonable time. We also experimented with PBE problems in the domain of arithmetic operations, and CVC5 and the SyGuS encoding outperformed the grammar defined in Rosette. We conclude that the performance of the PBE solver to which \systemname offloads the hidden function synthesis depends very highly on the domain. \systemname is agnostic to it, and we provide support for using either CVC5 or Rosette, as well as for parallel portfolio solving. 
\section{Rewrite Strategies}\label{sec:synthesis}

The synthesis algorithm is a rewrite process that starts with an initial program \( P_{in} \) that trivially satisfies the correctness criterion \( \Psi \), but is not likely to minimize the cost function \( \chi \). The goal is to transform the initial program by applying refinement and synthesis rules until a program minimizing \( \chi \) is found. Naturally, a naive solution would be to enumerate all possible ways of rewriting \( P_{in} \). However, depending on the program and the set of rewrites available, there may not be a finite set of programs. We consider different strategies to explore the search space of all rewrites efficiently, with the goal of optimizing for \( \chi \).

\begin{wrapfigure}{r}{0.55\textwidth}
    \centering
    \includegraphics[scale=0.7]{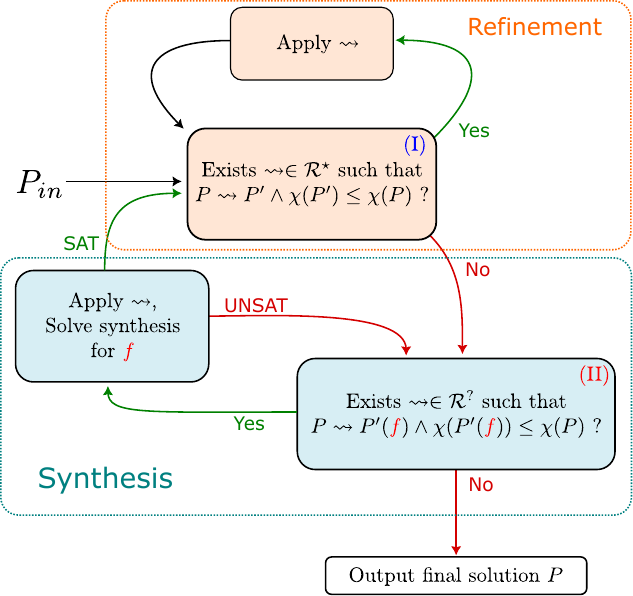}
    \caption{Cost-directed alternating rewrite rule application. }
    \label{fig:cost-directed-search-algo}
    \vspace{1em}
\end{wrapfigure}

\paragraph{Initial Program}
We start with a trivially correct program \( P_{in} \) that provably generates all input traces \( T_{in} \). This program is constructed by introducing a single parameter \dummybranch and a program body that consists of \( |T_{in}| \) branches. Each branch is guarded by a condition \( \dummybranch\ \dsl{==}\ i\),  with \( 0 \leq i < |T_{in}| \). The statements in the then-branch are the API calls of trace \( \tau_i \), written as API call bindings to fresh local variables. The else-branch contains the other branches. \autoref{lst:first-synthesis-program} shows the constructed program.

\begin{figure}
\vspace{-1em}
\begin{lstlisting}[language=Syren, escapeinside=||, numbers=none, basicstyle=\small\linespread{.9}\ttfamily]
lambda br
if br == 1 {
    let y_1 = |\(A^1_1(x^1_1)\)|  let y_2 = |\(A^1_2(x^1_2)\)|  ...  let y_|\(N_1\)| = |\(A^1_{N_1} (x^1_{N_1})\)| (** Replays trace |$\tau_1$| *)
} else if br == 2 {
    let y_1 = |\(A^2_1(x^2_1)\)|  let y_2 = |\(A^2_2(x^2_2)\)|  ...  let y_|\(N_2\)| = |\(A^2_{N_2}(x^2_{N_2})\)| (** Replays trace |$\tau_2$| *)
} else if br == 3 {
    let y_1= |\(A^3_1(x^3_1)\)|  let y_2 = |\(A^3_2(x^3_2)\)|  ...  let y_|\(N_3\)| = |\(A^3_{N_3}(x^3_{N_3})\)|  (** Replays trace |$\tau_3$| *)
} else ...
\end{lstlisting}
\vspace{-1.1em}
\caption{The initial program \( P_{in} \) takes a single integer parameter \(\dsldummybranch\), and has \( |T_{in}|\) branches, where each branch \(i\) simply replays the visible function calls in trace \(\tau_i \in T_{in}\).
\(A_i^q\) is the \(i\)-th function call in trace \#\(q\), and \(x_i^q\) its corresponding~input. 
}
\label{lst:first-synthesis-program}
\Description{A figure showing how the initial program is constructed. The logic of the program is described in the caption.}
\vspace{-.6em}
\end{figure}

In Section~\ref{sec:rewrite}, we distinguish two types of rewrite rules: \emph{refinement rules} \( \mathcal{R}^\star \), which simply rewrite the program maintaining trace subsumption, and \emph{synthesis rules} \( \mathcal{R}^?\) which introduce a data transform synthesis constraint and are correct by construction of the solution of these constraints. Intuitively, refinement rules use less memory and computation, whereas synthesis rules should be applied more carefully. For scalability, one should use refinement rules as much as possible until applying synthesis rules is necessary. 

\paragraph{Alternating Refinement and Synthesis} Figure~\ref{fig:cost-directed-search-algo} illustrates our main algorithm, which alternates between refinement rule and synthesis rule applications until no rewrite rule is applicable. We start with the initial program \( P_{in}\) and look for refinement rules (in \( \mathcal{R}^\star \)) to apply in a way that reduces the cost of the program (step {\color{blue} (I)}). The rule that yields the lowest cost is selected first. If such a rule can be found, we apply it to the current program. We repeat these two steps until no refinement rule can be found. In that case, the algorithm moves inside the bottom loop. It searches for a synthesis rule in \( \mathcal{R}^? \) that reduces the most the cost of the program (step {\color{red}(II)}). If no rule can be found, the synthesis terminates with the current program. Otherwise, the algorithm attempts to apply the rewrite rule \( \exists f \cdot P \rightsquigarrow P'(f)\) and solve for \( f\) using the synthesis process described in Section~\ref{sec:data-transform-synthesis}. There are two possible answers: either a solution is found (\(\mathrm{SAT}\)) or the synthesizer returned \(\mathrm{UNSAT}\). In the first case, the algorithm returns to the upper loop and repeats the entire process. In the other case, the algorithm backtracks on the synthesis rewrite and attempts to find another synthesis rule to apply. When no synthesis rules apply, the algorithm returns the final~program.

This algorithm applies synthesis rules parsimoniously compared to refinement rules. A synthesis problem is solved only when no refinement rules can further lower the program cost, and as soon as a synthesis rule is applied, the algorithm attempts to use more refinement rules.

\subsection{Baselines}\label{sec:baselines}

We give a brief overview of baseline algorithms we implemented as a basis for testing our hypothesis, starting with the observation that motivates them.

\paragraph{Refine-then-Synthesize} Experimentally, we observe that, in many cases, it may be sufficient to apply rules in \emph{only three phases}. First, apply all possible refinement rules to simplify the trivial program \( P_{in} \). Then, we apply every possible synthesis rule that reduces the program's cost. Finally, a final round of refinement rewriting is necessary to clean up the program with the new data transformations. The intuition is that refinement rules operate mostly on the control flow of the program, while synthesis rules operate on the data flow, and interaction between the two is~minimal. 

With that insight in mind, the \emph{refine-then-synthesize} algorithm (denoted \textsc{RTS}) first applies refinement rules, reducing the cost of the program until no refinement rule is applicable, and then synthesis rules until no rule can be found, and finally another round of refinement rules. In other words, it is a modification of the algorithm in Figure~\ref{fig:cost-directed-search-algo} where the \( \mathrm{SAT} \) arrow instead points to step {\color{red}(II)} and updates the program, and the \textcolor{red}{No} arrow returns to refinement for one round.

\paragraph{\(k\)-Bounded Search} One problem of the two previous algorithms is that they \emph{may get stuck on a local minimum of the cost function}. A completely different approach that does not have this problem is a bounded exhaustive search starting from \(P_{in}\). In the \(k\)-bounded search algorithm (denoted by \(k\)-search), rewrite rules from both the refinement set \(\mathcal{R}^\star\) and synthesis set \(\mathcal{R}^?\) are applied, independently of their effect on program cost. Rules are applied again to the resulting programs until all programs resulting from applying \(k\) rules (where \(k\) is a constant) are obtained.

Once all possible rewrites are enumerated, the algorithm ranks all rewrites by increasing cost and attempts to find the program with the lowest cost whose underlying synthesis constraints are satisfiable. Note that in this version of the algorithm, we do not solve the synthesis problem when a synthesis rule is applied. The enumeration is done without a single call to the synthesis solver, which is used only for the programs with low scores. 
\section{Evaluation}\label{sec:evaluation}

We implemented our synthesis approach in a tool, \systemname{}, and evaluated the different algorithms and cost functions against a set of benchmarks, showing a promising approach for synthesizing real-world API composing functions.
Since no existing tool can solve the problem out of the box, we compare against the baseline algorithms introduced in \autoref{sec:synthesis}: \emph{refine-then-synthesize} (RTS) and \(k\)-bounded search (\(k\)-search). Although comparison against a monolithic syntax-guided synthesis approach may be possible (e.g., encoding the problem in Rosette), the limitations in the scalability of Rosette to solve even only the subproblems indicate that it would not scale to the entire problem.

\subsection{Implementation}

All experiments were run on a 2022 Macbook Pro with an M1Pro processor (10 physical cores) and 32GB memory.
\systemname{} is implemented in OCaml and Python 3.12 and uses the Rosette~\cite{rosette} solver-aided language (version 4.1 running on Racket version 8.11) with its default solver Z3 (version 4.12) \cite{z3-solver} to synthesize data transformations. The implementation uses a synthesis constraint cache to avoid repeated calls to the solver with the same constraints. This is especially useful since the algorithm will attempt many synthesis rewrites that will not have a solution.

\subsection{Benchmarks}
We test \systemname{} on a set of \numTotalBenchmarks{} benchmarks that implement various tasks that require branching and looping (in the form of retries) using various APIs. The full description of each task is available in 
\iffullversion
\autoref{sec:detailed-benchmarks}.
\else
Appendix~D of the full version of this paper~\cite{fullsyren}.
\fi
We grouped our benchmarks into four different categories to indicate their origin. The first category is a set of \emph{custom benchmarks} that we wrote to perform some tasks on cloud infrastructure, some shell scripts, and SVG manipulation scripts. We then collected tasks from the Blink automation library \cite{blink-automation}, where various APIs are interfaced. A similar set of tasks comes from AWS Systems Manager Automation Runbooks \cite{aws-runbooks}. Our final category consists of tasks adapted from previous literature; we adapt the nested loop-free benchmarks from \textsc{ApiPhany} \cite{apiphany} that use Stripe and Slack APIs\footnote{We collect benchmarks from \textsc{ApiPhany} \cite{apiphany} by simulating traces from their solutions that do not have list comprehensions. However, we cannot make a direct comparison since our specifications are traces and theirs are types.}. \textcolor{cameraready}{
In general, in the benchmarks used to evaluate Syren, there is a clear separation between the visible function calls (cloud API calls, system calls or library calls) and the local operations (which can be encoded into some solver's theory). When constructing the set of traces, two important parameters have to be considered: whether the different sequences of visible calls exemplify the desired program's control flow paths, and whether the various input values to the calls are sufficient to infer the hidden function's implementation in a given domain (e.g. JSON transformations requires fewer examples than arithmetic).
}

\textcolor{cameraready}{
We wrote programs for each benchmark and collected the inputs for the synthesizer by simulating the traces those programs would produce.
We ran each program for enough different inputs that the produced traces exercise all program paths; we manually inspected the synthesized program and added more traces when it did not exemplify all the behaviors of the target benchmark program. We collected between 2 and 10 traces (median 4) for each benchmark. This is not the smallest number of traces necessary to describe the task, but a reasonable amount that the user could provide. 
}

Our benchmarks include synthesis tasks of varying complexity so we can gauge how well \systemname{} scales. Although we cannot predict how complex a given task is to synthesize, we can estimate it by the complexity of the smallest program that performs that task. We do so by considering the number of conditionals, loops, and hidden functions in the program.

\subsection{Cost Functions}
We ran experiments with two different cost functions to evaluate the flexibility of our approach with respect to different user-defined notions of "best program." We follow the general idea that good programs are simple programs that generalize well. The rewrite rules, especially the refinement rules, are generally geared towards syntactic simplification of the program.
The cost functions match this high-level goal and generally assign a lower cost to simpler programs; the exact meaning of simplicity depends on the user.

\paragraph{Syntactic Complexity} The first method we use, denoted by \(\scoreFuncOne\), is a straightforward cost function that describes the syntactic complexity of the program. This intuitively corresponds to a user who desires a program that is syntactically as simple as possible. This function computes a weighted sum of the number of conditionals, loops, and parameters in the program and a penalty for using the dummy branching variable \(\dummybranch{}\) introduced in the initial program. We use a simplified version of the function in the running example of \autoref{sec:motivating-example}. This function can easily be customized, for example, by modifying the weights of each characteristic in the summation. Typically, we prioritize fewer statements, then fewer parameters, and set the weights accordingly. The penalty for the \(\dummybranch{}\) variable ensures that the algorithm will prioritize eliminating this parameter over all else.

\paragraph{Reuse Across Traces} Our second cost function, denoted by \( \scoreFuncTwo \), measures how many times API call statements are reused with respect to the input set of traces. For each trace in \( T_{in} \), we count how many times each API call statement in the program must be called to produce the trace. The cost is the total number of API calls in the traces plus the number of statements minus the sum of the counts for all statements and all traces. Intuitively, a program with a lower cost means that the API call statements are reused more often; for example, unrolling a loop would increase the cost. We also add a penalty for using the \(\dummybranch{}\) variable.
This is another measure of the program's simplicity. This cost function is coarser in that it assigns the same cost to many different programs.

\subsection{Results}

\begin{figure}
 \centering
 \begin{subfigure}[t]{0.37\linewidth}
     \centering
     \includegraphics[width=.95\linewidth]{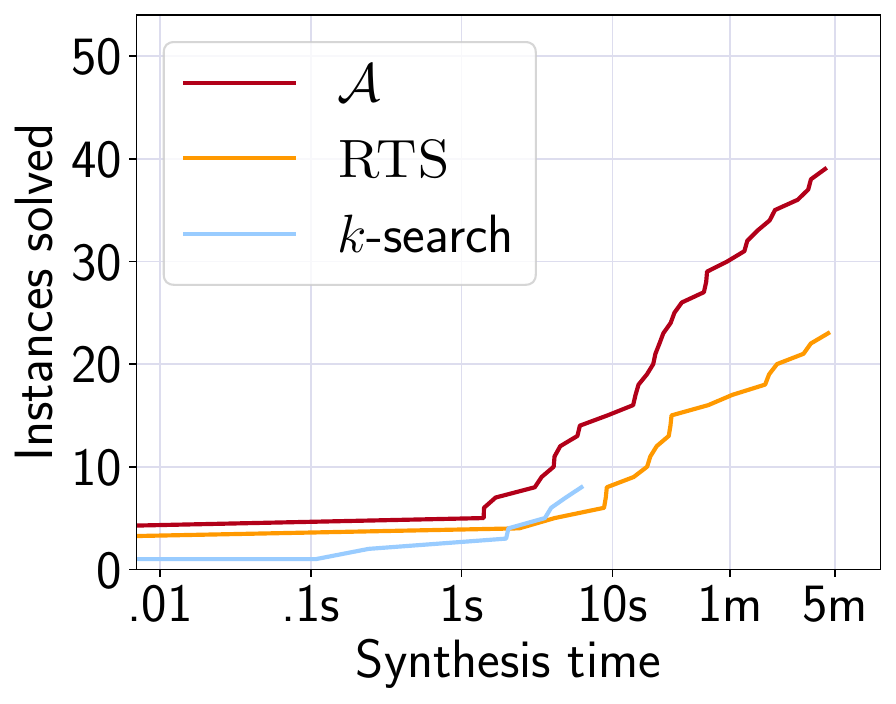}
     \caption{\footnotesize\color{cameraready} 
     Number of benchmarks solved (optimal solution synthesized) for each algorithm of \systemname{} with score function \scoreFuncOne{} against synthesis time.
     }
     \label{fig:solved-per-algo}     
 \end{subfigure}
 \hfill
 \begin{subfigure}[t]{0.27\textwidth}
     \centering
     \includegraphics[scale=.29]{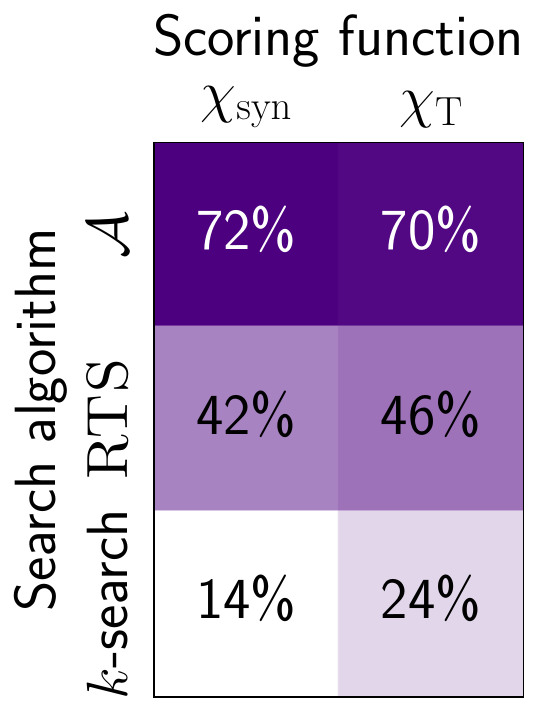}
     \caption{\footnotesize\color{cameraready} Percentage of benchmarks for which \systemname synthesized the optimal solution by search algorithm and scoring function.}
     \label{fig:percentage-smallest}
 \end{subfigure}
 \hfill
 \begin{subfigure}[t]{0.32\textwidth}
     \centering
     \includegraphics[scale=.3]{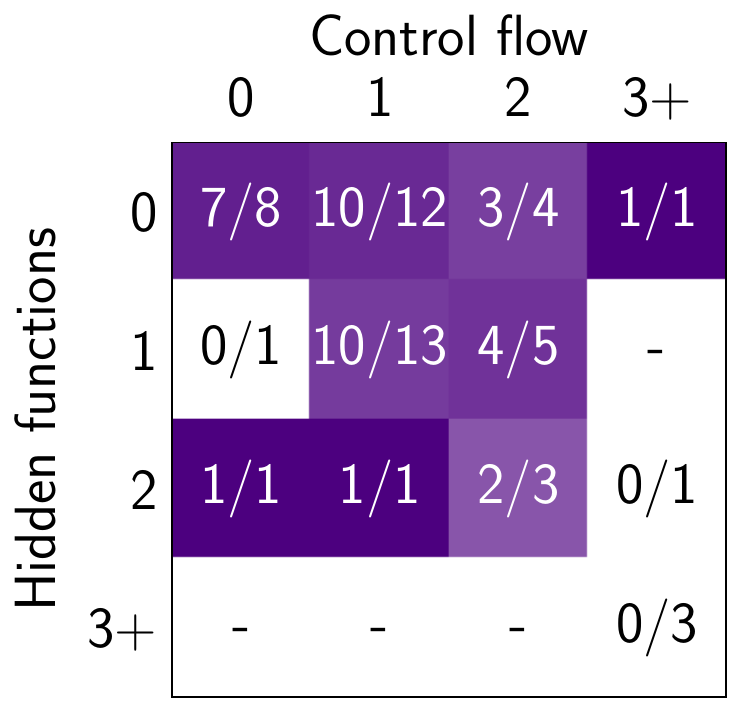}
     \caption{\footnotesize\color{cameraready} Count of optimally synthesized~/~total benchmarks, grouped by complexity measures (\# hidden functions, \# control flow statements) of the program.}
     \label{fig:solved-complexity-amt}
 \end{subfigure}
 \vspace{-.8em}
\caption{\color{cameraready} Comparison of synthesis times and quality of synthesized programs using different search algorithms and cost functions. The background color in the heat map in \ref{fig:percentage-smallest} reflects the same data as the labels. In \ref{fig:solved-complexity-amt}, we show \systemname's ability to scale to complex benchmarks. The darker the color the the better \systemname preformed for benchmarks of that complexity.
}
\vspace{-1em}
\label{fig:solved-instances-times}
\Description{Three plots showing different measurements of Syren's performance. The first plot shows a comparison of the number of benchmarks solved over time using each of the described search strategies: Alternating, RTS, and k-search. The plot shows that the default strategy, alternating, is able to solve more benchmarks per time than any of the other search strategies. The second shows the total percentage of instances each search algorithm was able to solve, with either of the cost functions. The results are the following. With the syntactic complexity cost function, the alternating search solves 72\% of all benchmarks, RTS solves 42\%, and k-search solves 14\%. With the number of traces cost function, the alternating search solves 70\% of all instances, RTS solves 46\%, and k-search solves 24\%. The third plot shows the number of benchmarks solved for different measures of complexity of the desired program. The benchmarks are split by number of hidden functions (0, 1, 2, or 3+) and by number of control flow statements (0, 1, 2, 3+), which can be conditional branching (i.e., if-then-else statements) or loops. For benchmarks with no hidden functions, Syren solves 7 out of 8 benchmarks with no control flow structures, 10 out of 12 for a single control flow structure, 3 out of 4 with 2 control flow structures, and the single benchmark with 3+ control flow structures.
For benchmarks with 1 hidden function, Syren does not solve the single benchmark with no control flow, it solves 10 out of 13 with 1 control flow structure, 4 out of 5 with 2 control flow structures, and there are no benchmarks for 3+ control flow.
At 2 hidden functions, Syren solves the single benchmark with no control flow, the single benchmark with 1 control flow structure, 2 out of 3 with 2 control flow structures, and it does not solve the benchmark with more than 3 control flow structures.
At 3+ hidden functions, there are no benchmarks for 0, 1, or 2 control flow, and Syren solves 0 none of 3 benchmarks with 3+ control flow structures.}
\end{figure}

\textcolor{cameraready}{
The goal is to synthesize a program that is equivalent to the one used to collect traces from; when we report synthesis success, this means the synthesized program is syntactically equivalent to the desired program (modulo variable renaming). Since \systemname{} may terminate with a correct solution that is not optimal (i.e. has a larger cost than our desired program), we also report when the tool terminated but did not solve the benchmark.
}

\autoref{fig:solved-instances-times} plots the synthesis time required for each benchmark and combination of cost function (\(\scoreFuncOne\) or \(\scoreFuncTwo\)) and algorithms (\(\mathcal{A}\) for our algorithm, RTS and \(k\)-bounded for the baselines). Each experiment runs 10 times with a timeout of 10 minutes. We use a fixed \(k = 6\). 

Our proposed algorithm \(\mathcal{A}\) finds the most solutions across benchmarks and cost functions (39 or 72\% \textcolor{cameraready}{optimally synthesized} for syntactic cost, and 38 or 70\% for trace reuse cost out of \numTotalBenchmarks{} benchmarks). The simplified algorithm, RTS, produces fewer \textcolor{cameraready}{optimal} solutions, 23 (42\%) for \(\chi_{\text{syn}}\) and 25 (46\%) for \(\chi_T\)); in many cases the solution produced is not optimal because the algorithm did not attempt enough rewrites. 
This is especially the case for the more complex benchmarks containing loops. We found that the choice of scoring function has little impact on solving time. However, \(\chi_T\) is more coarse in the sense that more programs may have the same score, and when manually inspecting solutions, we found that they are usually farther from the ideal solution. \( \chi_{\text{syn}}\) is better at characterizing the ideal solution.
The solutions obtained required from 4 to 36 rewrites.
The tables in the 
\iffullversion
\autoref{app:tables}
\else
Appendix~A of the full version of this paper \cite{fullsyren}
\fi
list detailed synthesis times for all benchmarks.

The bounded search performs poorly across the benchmarks (13 optimally solved for \(\chi_T\)), either yielding a poor solution (because of a small \(k\)) or timing out. The size of the search space for the bounded search is a combination of both the number of rewrite rules and the complexity of the initial program. In general, we observed that scaling \(k\) to the same number used to find a solution using the other algorithms would produce an intractable search space. Interestingly, we observe many timeouts for the \(\chi_T\) cost function; there are many programs with the same cost, but most of them have unrealizable synthesis subproblems.

Even when the best synthesizable solution does not require a data transformation, the rewrite process might still attempt to find some. Synthesis time is dominated by the number of synthesis rewrites attempted rather than the number that ends up being used, so time often does not correspond to the total number of rewrites. For example, synthesizing a solution for the CreateTable benchmark takes 78s, despite not requiring any data transformation. However, since there are two API calls with many arguments, the algorithm must ensure, by attempting to synthesize data transformations, that none of the arguments of the second API call can be computed from the results of the first API~call.

\colorlet{saved}{.}
\color{cameraready}
\paragraph{Comparison against Large Language Models} Large Language Models (LLMs) allow users to generate code from natural language specifications. One can speculate whether specifications could also be given as a list of traces, as in our problem.  To compare \systemname{} against LLMs, we queried Claude 3.5 Sonnet \cite{claude35sonnet} with a prompt explaining the problem to solve, followed by the same traces used by \systemname, in JSON format (see the full prompt in
\iffullversion
\autoref{app:prompt}).
\else
Appendix~C of the full version of this paper~\cite{fullsyren}).
\fi
The output is expected to be a Python script that should be correct (in the sense of \autoref{eq:correctness}) and semantically close to the ideal program. With that success criterion in mind, the LLM synthesizes a correct and optimal solution for 29 benchmarks (53\% of total). 

Note that the evaluation of correctness and equivalence is manual work on our part. This highlights a crucial difference between the two methods. When \systemname{} succeeds and generates a program, the user trusts it is correct, in the sense that it will reproduce the traces. If \systemname{} cannot generate a good program, it will fail or otherwise generate a program that is not general enough, but the output is always safe to use. Specifically, it will never make API calls different from those in the traces, or calls with new inputs. On the other hand, LLMs can fail to synthesize a correct program silently; their output can be subtly incorrect, and verifying this output would require effort (that Syren avoids by generating a program correct by construction). For example, in the task described in \S\ref{sec:motivating-example}, Claude synthesizes the condition \dsl{status == "running"} instead of \dsl{status != "stopped"}, which does not satisfy traces where the instance status is \dsl{"stopping"}.


\color{saved}

\paragraph{Limitations} 

\textcolor{cameraready}{Our evaluation compares \systemname's default search strategy to our own  baselines. To the best of our knowledge, there is no other tool that currently solves this problem, in which the specification is a set of \emph{partial} traces. 
}

Our diverse range of benchmarks includes a few examples where \systemname{} times out or does not return a good solution. Some of those are due to the limited expressive power of the syntax-guided synthesis approach used to create the data transformations. For example, the benchmark ReportLongRunningInstancesToSlack requires reasoning about dates and time intervals, and our solver cannot currently synthesize a solution. The objective program cannot be represented in our synthesis DSL, so we cannot indicate the number of ifs or loops such a program would have. Similarly, the CreateImage benchmark requires reasoning about string operations. \textcolor{cameraready}{Improving the underlying PBE solver to handle more and more complex date, integer, and string operations is an interesting and promising direction of future research. Such improvements would result in an improvement of \systemname{}'s performance without further modification.} The rewrite system also has limitations when considering benchmarks with complex control flow. For example, our tool fails to synthesize the automation AWSSupport-CopyEC2Instance, which has a loop, 5 conditionals, and 16 data transformations to synthesize. Our approach also cannot synthesize any control flow that happens before the first API call in the automation to be synthesized. For example, the benchmark AWS-ConfigureS3BucketVersioning contains empty traces that correspond to situations where the user ran the script, but it terminated before performing any API call. Our approach is unable to infer this behavior, since it cannot observe what the potential inputs to the decision are.

Our current representation of loops limits the programs we can synthesize. We are unable to synthesize programs that loop a fixed number of times and then time out. If \systemname{} is given the traces that result from this pattern, it will instead attempt to discover a condition that is consistent across all the final API calls, which will not exist in general. We omitted the benchmarks from ApiPhany\cite{apiphany} that could not be generated because either they contain nested loops or the loop iterates on parameter of the script. We believe those limitations could be lifted in future work, \textcolor{cameraready}{allowing \systemname to synthesize programs with more complex structures}.







\section{Related Work}

In this section, we provide a more in-depth overview of previous work related to ours. Although we found strong work on problems related to the one \systemname solves, there is no previous solution to solve the same problem. We compare our work against techniques that target similar outputs, i.e. programs with API calls. Then, we look at work that consider similar trace-based specifications, which we can mainly classify in programming-by-demonstration. Finally, we look at work similar in their approach: rewriting techniques and syntax-guided synthesis.

\paragraph{Synthesis with API calls}

\textsc{SyPet}~\cite{sypet}, TYGAR~\cite{tygar}, \textsc{RbSyn}~\cite{rbsyn} and \textsc{ApiPhany}~\cite{apiphany} propose type-guided approaches to synthesis of programs containing sequences of API calls. All three address component-based synthesis, which focuses on finding a composition of components (API calls or library functions) that implements some desired task. These systems differ from \systemname in more than one aspect. 
First,  \textsc{SyPet}, TYGAR, \textsc{RbSyn}, and \textsc{ApiPhany} take as a specification the desired input and output types of a program. \systemname, on the other hand, synthesizes a program using logs of the desired task executed manually.
The contrasts between their approach and ours go beyond the input specification: in their approach, the main challenge \textsc{SyPet} tackles is the large search space of API calls that it has to consider, which results from the ever-increasing expressivity and size of API libraries. TYGAR and \textsc{ApiPhany} propose additional techniques to better represent and understand API calls, thus effectively reducing this search space.
The challenge in our problem is not to discover which API calls need to be made because they are present in the traces, and thus, the API size does not impact our problem. However, we consider more complex interactions between API calls and synthesize data operations that correspond to the non-observable part of the traces.
Conversely, due to the nature of our approach, where we synthesize scripts based on logs, we already know which API calls are made, so the complexity of our synthesis procedure is not affected by the size of the underlying APIs.

DemoMatch \cite{demomatch} discovers code snippets explaining API usage, but in the authors’ own words ``While DemoMatch produces code, it is not a program synthesis tool; it is an API discovery tool.''

\paragraph{Programming-by-demonstration (PBD)}

There are examples of work in PBD that target the automation of web tasks, one of the domains to which we applied \systemname, but there are significant differences in the setting and method that justify our claim of novelty. 
Ringer \cite{ringer} explores ways to represent one recording of a user interacting with a web page and outputs a script that reproduces a single execution of the demonstrated behavior. Our work attempts to generalize multiple executions into a program from logs; the generalization over data required in our setting is not a problem for Ringer.
Approaches such as Konure \cite{konure}, DemoMatch \cite{demomatch}, PUMICE \cite{pumice}, and \textsc{Sketch-n-Sketch} \cite{sketch-n-sketch} are distinct from ours because we synthesize programs only from the logs of the API calls they make, \emph{without} examples of the local operations of the program to be synthesized. 
None of the works cited above attempts to synthesize hidden functions, especially with conditions that depend on the outputs of visible function calls.
Furthermore, some of the approaches rely on interactive demonstrations to generalize their data (WebRobot \cite{webrobot}) or queries (Konure \cite{konure}), or on an existing program to extract dynamic traces from (\textsc{Chisel} \cite{chisel}). Instead, we rely only on a fixed set of examples. Inferring conditionals and loops when a user or algorithm can test the program paths through demonstrations is a different task from inferring them only by optimizing for a cost function.
Like \systemname, WebRobot \cite{webrobot} and Arborist \cite{arborist} use a rewriting strategy to synthesize programs, with speculative rewrites to generalize patterns. However, their method is interactive and relies on the user to validate the heuristic speculations made by the algorithm. In contrast, we rely on data transformation synthesis to validate the applicability of a synthesis rewrite rule (for conditions and loops) and on the cost function to direct the application of rewrite rules. Neither attempts to synthesize decision-making control flow, i.e., if-statements, or loop conditionals (they synthesize for-each loops and while(true) loops, whereas we focus on loops that are stopped by a condition). The synthesis and generalization of these structures is the main challenge in our~approach.
Our work differentiates itself from Konure \cite{konure} in two significant ways. First, Konure is an active learning system that requires querying of the system that needs modeling. Our system is completely passive (closer to PBE than to PBD) and relies only on a fixed set of examples. Second, Konure has full observability over what it generates. For example, the predicates in their DSL are SQL queries, which are observable in the trace. The authors mention that to support more sophisticated implementations in their DSL they suppose that they would need to have a more fine-grained observation of the application. Instead, we propose that fine-grained data operations can be synthesized using a synthesis solver for small expressions. Our paper shows that this difficulty requires carefully exploring the search space of rewrite rules.

\paragraph{Program Rewriting} There is a rich history of work in applying rewriting strategies for program analysis, refactoring, or optimization~\cite{visser2001survey, szalinski}. 
Superoptimization~\cite{superoptimizer-denali, superoptimizer-stochastic, superoptimization-cryptopt} problems, where programs must be rewritten to optimize for a given cost, are classic applications of such techniques. Our application is different in that many of our rewrites are whole-program rewrites that need to consider the state of the program, as opposed to small local rewrites. Note that our approach would not scale when applying many rewrites in an exploratory search: we also need to solve the underlying syntax-guided synthesis problems, which are much more computationally expensive than syntactic rewrite rules. To the best of our knowledge, this work is the first to introduce a system where the application of a rewrite rule is conditional on solving an input-/output-based synthesis problem.
Rewrite approaches have also been used to solve other synthesis problems, such as automatic parallelization in \textsc{Mold}~\cite{translating-mapreduce-rr}. Although they also consider a two-phase approach with refinement and exploration, they use solely fixed rewrite rules instead of our synthesis rules, and the exploratory phase does not have the same expressive power as our synthesis-based solution. 
\textcolor{cameraready}{Szalinski \cite{szalinski} describes an approach that takes a flat, hard-to-read program and introduces map and fold operators using optimizing rewrites. Although its goal is similar to the rewrite part of our approach, Szalinski has no way to synthesize hidden logic in control flow. }

\paragraph{Off-the-shelf SyGuS solver}
The problem \systemname tackles could, in theory, be encoded into Syntax-Guided Synthesis (SyGuS) \cite{sygus} and solved using an off-the-shelf solver, such as CVC5 \cite{cvc5}. We attempted to implement this approach and ran into two problems. First, it is challenging to encode side-effecting functions as uninterpreted functions because the output is not a direct function of the inputs alone. Their output depends on some global state, updated by other functions in the program. More complex encodings can be designed, but they would be a novel contribution on their own, in particular, if they could be made to perform well. With any straightforward encoding, the resulting formula would be too complex to be solved in a reasonable time by CVC5. As explained in \autoref{sec:data-transform-synthesis}, we were unable to use CVC5 for even the JSON synthesis subproblems. 





\section{Conclusion and Future Work}

In this paper we described a novel approach for the synthesis of scripts with calls to side-effecting methods, such as web API method calls, from partial execution traces containing only those calls. The described approach combines a search through a space of rewrite rules that mutate the program without jeopardizing its correctness, and syntax-guided synthesis, which allows us to fill in the expressions that we cannot observed in the traces. We implement our approach in \systemname, and test it on \numTotalBenchmarks{} benchmarks that combine hand-written examples based on common visual interface tasks, scripts from publicly available libraries, and benchmarks from previous work. We show that our approach can successfully synthesize 39/\numTotalBenchmarks{} scripts in under 5 minutes.

The main bottleneck in \systemname's results is in the data-transformations synthesis, so performance improvements should focus on syntax-based synthesis within the domain of \textit{JSONPath}. Another improvement would be writing more specific grammars for the data transforms synthesis, by leveraging additional information about the types of the inputs and outputs, or even their semantic types, as suggested by \textsc{ApiPhany} \cite{apiphany}. Another interesting orthogonal line of work would be to allow traces which show errors on the calls, and synthesize scripts with error handling. \systemname could also be extended to cover more looping behavior, particularly timeouts.

\textcolor{cameraready}{%
We believe the problem \systemname solves is an important and very general problem, yet we had to construct a new set of benchmarks to evaluate our approach.
In related work, the benchmarks included either types, synthesized programs or complete traces, where a clear separation between API-level and local operations was not clear.
 Our benchmark set could be extended with additional use cases where that separation is clear, and the collection of traces is plausible because the system collecting the traces is external to the hidden program. For example, one can collect traces of library calls, system calls or various cloud API calls. The quantity and quality required to synthesize a given program will depend on the control-flow complexity of that program, and the ability of an underlying hidden function synthesizer to synthesize data transformations from few examples.
}

\section*{Acknowledgments}
We are grateful to Ruben Martins and Ricardo Brancas for their feedback on multiple drafts of this paper. We also thank Ricardo for his very insightful suggestions that improved \systemname's evaluation.
Finally, we thank the anonymous reviewers and shepherd for their comments and guidance. 
This work was partially done during an internship at Amazon. Margarida Ferreira is supported by the Portuguese Foundation for Science and Technology under the Carnegie Mellon Portugal PhD fellowship SFRH/BD/151467/2021.

\section*{Data-Availability Statement}
The source code for \systemname is available on Zenodo \cite{artifact}, along with all the data we used to test it. The artifact contains comprehensive instructions and scripts to reproduce the experiments reported in this paper.

\bibliography{bibliography}

\iffullversion
\newpage
\appendix 
\begin{minipage}{\textwidth}

\section{Benchmarks and Detailed Results}\label{app:tables}

\vspace{-2ex}
\thispagestyle{empty} 
\begin{table}[H]
    \centering
    \footnotesize
    \renewcommand{\arraystretch}{.9}
    \setlength{\tabcolsep}{3pt}
    \caption{
      List of benchmarks, measures of complexity, and outcome of synthesis using \systemname (\scoreFuncOne and $\mathcal{A}$) or an
      LLM.
      \#Traces indicates the number of traces used to generate the program.
      \#If (\#Loop, \#Hidden-\(f\)) report the number of conditionals (resp. loops, hidden functions)
      in the best synthesizable solution. The outcome columns report the synthesis outcome for either
      \systemname (Timeout, Terminated or Optimal) or the LLM (Success or Failure).
    }
    \label{tab:outcomes-table}
    \vspace{-1.3em}
    \begin{tabularx}{\linewidth}{|X|c|c|c|c||c|c|}
    \hline
        \rowcolor{lightgray} \small Name &
        \small \rotatebox{0}{\#Traces} &
        \small \rotatebox{0}{\#If} &
        \small \rotatebox{0}{\#Loop} &
        \small \rotatebox{0}{\#Hidden-\(f\)} &
        \small \systemname outcome &
        \small LLM outcome
    \\

        \rowcolor{lightestgray} \multicolumn{7}{| c|}{\small Custom Benchmarks} \\
         Start Instances & 7 & 0 & 0 & 0 &Optimal & Success \\
        \rowcolor{lblue} Stop Instances & 9 & 0 & 0 & 0 &Optimal & Success \\
         Create Image & 4 & 0 & 0 & 1 &Timeout & Failure \\
        \rowcolor{lblue} Create Image From Ssm Parameter And Log & 3 & 0 & 0 & 4 &Optimal & Success \\
         Create Table & 7 & 0 & 0 & 0 &Optimal & Failure \\
        \rowcolor{lblue} Delete Table & 7 & 0 & 0 & 0 &Optimal & Failure \\
         Create Bucket Then Folder & 3 & 0 & 0 & 1 &Optimal & Success \\
        \rowcolor{lblue} Put Object If Not Present & 6 & 1 & 0 & 1 &Optimal & Failure \\
         Stop Instances Cond & 6 & 1 & 0 & 1 &Optimal & Failure \\
        \rowcolor{lblue} Create Table Insert Item & 8 & 0 & 1 & 1 &Timeout & Success \\
         Backup Then Delete Table & 6 & 0 & 1 & 2 &Optimal & Success \\
        \rowcolor{lblue} Start Instances With Tags & 7 & 0 & 0 & 1 &Optimal & Success \\
         Stop All Running Instances & 4 & 0 & 0 & 1 &Optimal & Success \\
        \rowcolor{lblue} Move Ddb Item & 7 & 0 & 0 & 1 &Optimal & Failure \\
         Move Ddb Item If Present & 8 & 1 & 0 & 2 &Optimal & Success \\
        \rowcolor{lblue} Copy S3Objects & 3 & 1 & 1 & 2 &Optimal & Failure \\
         Tag Instances With Dry Run & 4 & 1 & 0 & 1 &Optimal & Failure \\
        \rowcolor{lblue} Send Email On Input & 3 & 0 & 1 & 1 &Optimal & Success \\
         Retrieve Channel Members & 3 & 0 & 1 & 2 &Optimal & Failure \\
        \rowcolor{lblue} List and move files & 3 & 1 & 1 & 0 &Optimal & Failure \\
         Clean current dir & 2 & 1 & 1 & 1 &Optimal & Success \\
        \rowcolor{lblue} Clean regular files in dir & 2 & 1 & 1 & 2 &Optimal & Success \\
         Conversation members & 4 & 1 & 1 & 3 &Timeout & Failure \\
        \rowcolor{lblue} SVG-Increase Circle Radius-json & 3 & 0 & 0 & 2 &Terminated & Failure \\
         SVG-Increase Circle Radius & 3 & 0 & 0 & 1 &Optimal & Success \\
        \rowcolor{lightestgray} \multicolumn{7}{| c|}{\small Blink Automation} \\
         Copy Instance To New Region & 4 & 0 & 1 & 2 &Timeout & Success \\
        \rowcolor{lblue} Report Long Running Instances & 8 & ?? & ?? & -1 &Timeout & Failure \\
         Create Iam User And Notify & 3 & 0 & 0 & 1 &Optimal & Failure \\
        \rowcolor{lightestgray} \multicolumn{7}{| c|}{\small AWS Automation Runbooks} \\
         Stop EC2Instance & 5 & 0 & 0 & 0 &Optimal & Success \\
        \rowcolor{lblue} Start EC2Instance & 5 & 1 & 0 & 2 &Optimal & Failure \\
         Configure S3Bucket Versioning & 9 & 0 & 0 & 1 &Terminated & Success \\
        \rowcolor{lblue} Configure Cloud Watch On EC2 & 8 & 1 & 0 & 0 &Terminated & Failure \\
         Copy EC2Instance & 3 & 5 & 1 & 16 &Timeout & Failure \\
        \rowcolor{lblue} Resize Instance & 8 & 3 & 2 & 3 &Timeout & Failure \\
         Set Required Tags & 4 & 0 & 0 & 0 &Optimal & Success \\
        \rowcolor{lightestgray} \multicolumn{7}{| c|}{\small Literature benchmarks} \\
         Api Phany Ex. 1.1 & 4 & 1 & 2 & 5 &Timeout & Success \\
        \rowcolor{lblue} Api Phany Ex. 1.2 & 3 & 0 & 0 & 2 &Optimal & Failure \\
         Api Phany Ex. 1.6 & 3 & 0 & 0 & 1 &Optimal & Success \\
        \rowcolor{lblue} Api Phany Ex. 1.8 & 3 & 0 & 0 & 1 &Optimal & Success \\
         Api Phany Ex. 2.1 & 3 & 0 & 1 & 1 &Optimal & Success \\
        \rowcolor{lblue} Api Phany Ex. 2.3 & 3 & 0 & 0 & 2 &Optimal & Success \\
         Api Phany Ex. 2.5 & 3 & 0 & 1 & 1 &Optimal & Success \\
        \rowcolor{lblue} Api Phany Ex. 2.6 & 3 & 0 & 0 & 1 &Optimal & Failure \\
         Api Phany Ex. 2.7 & 4 & 0 & 1 & 1 &Optimal & Success \\
        \rowcolor{lblue} Api Phany Ex. 2.10 & 2 & 0 & 1 & 1 &Timeout & Success \\
         Api Phany Ex. 2.11 & 3 & 0 & 0 & 1 &Optimal & Failure \\
        \rowcolor{lblue} Api Phany Ex. 2.13 & 3 & 0 & 0 & 2 &Optimal & Success \\
         Api Phany Ex. 3.1 & 2 & 0 & 0 & 0 &Optimal & Failure \\
        \rowcolor{lblue} Api Phany Ex. 3.3 & 3 & 0 & 0 & 0 &Terminated & Failure \\
         Api Phany Ex. 3.4* & 2 & 0 & 1 & 1 &Optimal & Failure \\
        \rowcolor{lblue} Api Phany Ex. 3.6 & 3 & 0 & 1 & 1 &Optimal & Success \\
         Api Phany Ex. 3.7 & 2 & 0 & 1 & 1 &Optimal & Success \\
        \rowcolor{lblue} Api Phany Ex. 3.8 & 2 & 0 & 2 & 2 &Timeout & Success \\
         Api Phany Ex. 3.9 & 3 & 0 & 1 & 1 &Timeout & Failure \\
    \hline
    \end{tabularx}
    
    \vspace{-2em}
\end{table}

\end{minipage}

\begin{table}[H]
    \centering
    \footnotesize
    \renewcommand{\arraystretch}{1.2}
    \setlength{\tabcolsep}{3pt}
    \caption{List of Custom benchmarks, with synthesis time and number of SyGuS solver calls.
    For each algorithm (\(\mathcal{A}\), \textsc{RTS} and \(k\)-search)
    we report the synthesis time in seconds, the number of total SyGuS solver calls,
    and the subset of those calls that returned SAT, in parenthesis.
    When \systemname times out (TO), it does not report the number of solver calls (N/A).}
    \label{tab:synthesis-times-custom}
    \begin{tabularx}{\linewidth}{|X| |c|c||c|c||c|c| |c|c||c|c||c|c|}
    \hline
        \rowcolor{lightestgray} \multicolumn{13}{| c|}{\small Custom Benchmarks} \\
    \hline
        \rowcolor{lightgray}
        \ &
        \multicolumn{6}{c|}{ \(\scoreFuncOne\) } &
        \multicolumn{6}{c|}{ \(\scoreFuncTwo\) }
    \\
    \hline
        \rowcolor{lightgray} &

        \multicolumn{2}{c|}{\rotatebox{0}{\(\mathcal{A}\)}} &
        \multicolumn{2}{c|}{\rotatebox{0}{\textsc{RTS}}} &
        \multicolumn{2}{c|}{\rotatebox{0}{\(k\)-search}} &

        \multicolumn{2}{c|}{\rotatebox{0}{\(\mathcal{A}\)}} &
        \multicolumn{2}{c|}{\rotatebox{0}{\textsc{RTS}}} &
        \multicolumn{2}{c|}{\rotatebox{0}{\(k\)-search}}
    \\
    \hline
    \rowcolor{lightgray} Name &

        \rotatebox{90}{\systemname runtime\phantom{x}} &
        \rotatebox{90}{SyGuS calls (sat)\phantom{x}} &

        \rotatebox{90}{\systemname runtime} &
        \rotatebox{90}{SyGuS calls (sat)} &

        \rotatebox{90}{\systemname runtime} &
        \rotatebox{90}{SyGuS calls (sat)} &

        \rotatebox{90}{\systemname runtime} &
        \rotatebox{90}{SyGuS calls (sat)} &

        \rotatebox{90}{\systemname runtime} &
        \rotatebox{90}{SyGuS calls (sat)} &

        \rotatebox{90}{\systemname runtime} &
        \rotatebox{90}{SyGuS calls (sat)}
    \\
    \hline

         Start Instances &\scriptsize \text{<0.01} & \scriptsize  0 (0) & \scriptsize \text{<0.01} & \scriptsize  0 (0) & \scriptsize \textcolor{darkgray}{TO} & \scriptsize \text{\textcolor{darkgray}{N/A}} & \scriptsize \text{<0.01} & \scriptsize  0 (0) & \scriptsize \text{<0.01} & \scriptsize  0 (0) & \scriptsize \textcolor{darkgray}{TO} & \scriptsize \text{\textcolor{darkgray}{N/A}}  \\
        \rowcolor{lblue} Stop Instances &\scriptsize \text{<0.01} & \scriptsize  0 (0) & \scriptsize \text{<0.01} & \scriptsize  0 (0) & \scriptsize \textcolor{darkgray}{TO} & \scriptsize \text{\textcolor{darkgray}{N/A}} & \scriptsize \text{<0.01} & \scriptsize  0 (0) & \scriptsize \text{<0.01} & \scriptsize  0 (0) & \scriptsize \textcolor{darkgray}{TO} & \scriptsize \text{\textcolor{darkgray}{N/A}}  \\
         Create Image &\scriptsize \textcolor{darkgray}{TO} & \scriptsize \text{\textcolor{darkgray}{N/A}} & \scriptsize \textcolor{darkgray}{TO} & \scriptsize \text{\textcolor{darkgray}{N/A}} & \scriptsize \color{darkgray} 0.16 & \scriptsize \textcolor{darkgray}{ 0} (\textcolor{darkgray}{ 0}) & \scriptsize \textcolor{darkgray}{TO} & \scriptsize \text{\textcolor{darkgray}{N/A}} & \scriptsize \textcolor{darkgray}{TO} & \scriptsize \text{\textcolor{darkgray}{N/A}} & \scriptsize \textcolor{darkgray}{TO} & \scriptsize \text{\textcolor{darkgray}{N/A}}  \\
        \rowcolor{lblue} Create Image From Ssm Parameter And Log &\scriptsize 207.02 & \scriptsize  8 (4) & \scriptsize 206.53 & \scriptsize  8 (4) & \scriptsize \color{darkgray} 201.87 & \scriptsize \textcolor{darkgray}{ 0} (\textcolor{darkgray}{ 0}) & \scriptsize 210.21 & \scriptsize  8 (4) & \scriptsize 205.58 & \scriptsize  8 (4) & \scriptsize \textcolor{darkgray}{TO} & \scriptsize \text{\textcolor{darkgray}{N/A}}  \\
         Create Table &\scriptsize 78.33 & \scriptsize  4 (0) & \scriptsize 123.40 & \scriptsize  7 (0) & \scriptsize \color{darkgray} 0.44 & \scriptsize \textcolor{darkgray}{ 0} (\textcolor{darkgray}{ 0}) & \scriptsize 77.97 & \scriptsize  4 (0) & \scriptsize 123.41 & \scriptsize  7 (0) & \scriptsize \textcolor{darkgray}{TO} & \scriptsize \text{\textcolor{darkgray}{N/A}}  \\
        \rowcolor{lblue} Delete Table &\scriptsize \text{<0.01} & \scriptsize  0 (0) & \scriptsize \text{<0.01} & \scriptsize  0 (0) & \scriptsize \textcolor{darkgray}{TO} & \scriptsize \text{\textcolor{darkgray}{N/A}} & \scriptsize \text{<0.01} & \scriptsize  0 (0) & \scriptsize \text{<0.01} & \scriptsize  0 (0) & \scriptsize \textcolor{darkgray}{TO} & \scriptsize \text{\textcolor{darkgray}{N/A}}  \\
         Create Bucket Then Folder &\scriptsize 25.76 & \scriptsize  3 (2) & \scriptsize 4.14 & \scriptsize  2 (3) & \scriptsize 0.11 & \scriptsize  6 (61) & \scriptsize 25.85 & \scriptsize  3 (2) & \scriptsize 4.13 & \scriptsize  2 (3) & \scriptsize 0.06 & \scriptsize  1 (1)  \\
        \rowcolor{lblue} Put Object If Not Present &\scriptsize 256.88 & \scriptsize  5 (1) & \scriptsize 269.27 & \scriptsize  8 (1) & \scriptsize \color{darkgray} 0.74 & \scriptsize \textcolor{darkgray}{ 0} (\textcolor{darkgray}{ 0}) & \scriptsize 13.65 & \scriptsize  3 (1) & \scriptsize 268.51 & \scriptsize  8 (1) & \scriptsize \color{darkgray} 175.23 & \scriptsize \textcolor{darkgray}{ 10} (\textcolor{darkgray}{ 1})  \\
         Stop Instances Cond &\scriptsize 57.80 & \scriptsize  7 (7) & \scriptsize \color{darkgray} 51.84 & \scriptsize \textcolor{darkgray}{ 5} (\textcolor{darkgray}{ 6}) & \scriptsize \color{darkgray} 0.37 & \scriptsize \textcolor{darkgray}{ 0} (\textcolor{darkgray}{ 0}) & \scriptsize 50.71 & \scriptsize  3 (3) & \scriptsize 54.64 & \scriptsize  5 (7) & \scriptsize \color{darkgray} 104.34 & \scriptsize \textcolor{darkgray}{ 7} (\textcolor{darkgray}{ 2})  \\
        \rowcolor{lblue} Create Table Insert Item &\scriptsize \textcolor{darkgray}{TO} & \scriptsize \text{\textcolor{darkgray}{N/A}} & \scriptsize \textcolor{darkgray}{TO} & \scriptsize \text{\textcolor{darkgray}{N/A}} & \scriptsize \color{darkgray} 310.20 & \scriptsize \textcolor{darkgray}{6} (\textcolor{darkgray}{3}) & \scriptsize \textcolor{darkgray}{TO} & \scriptsize \text{\textcolor{darkgray}{N/A}} & \scriptsize \textcolor{darkgray}{TO} & \scriptsize \text{\textcolor{darkgray}{N/A}} & \scriptsize \color{darkgray} 310.29 & \scriptsize \textcolor{darkgray}{ 6} (\textcolor{darkgray}{ 1})  \\
         Backup Then Delete Table &\scriptsize 198.66 & \scriptsize  11 (4) & \scriptsize \color{darkgray} 200.20 & \scriptsize \textcolor{darkgray}{ 9} (\textcolor{darkgray}{ 4}) & \scriptsize \color{darkgray} 2.08 & \scriptsize \textcolor{darkgray}{0} (\textcolor{darkgray}{0}) & \scriptsize \color{darkgray} 196.25 & \scriptsize \textcolor{darkgray}{ 8} (\textcolor{darkgray}{ 3}) & \scriptsize \textcolor{darkgray}{TO} & \scriptsize \text{\textcolor{darkgray}{N/A}} & \scriptsize \color{darkgray} 2.08 & \scriptsize \textcolor{darkgray}{ 0} (\textcolor{darkgray}{ 0})  \\
        \rowcolor{lblue} Start Instances With Tags &\scriptsize 14.22 & \scriptsize  2 (1) & \scriptsize 17.83 & \scriptsize  2 (1) & \scriptsize \color{darkgray} 0.35 & \scriptsize \textcolor{darkgray}{ 1} (\textcolor{darkgray}{ 1}) & \scriptsize 19.32 & \scriptsize  2 (1) & \scriptsize 20.56 & \scriptsize  2 (1) & \scriptsize 0.43 & \scriptsize  1 (0)  \\
         Stop All Running Instances &\scriptsize 24.30 & \scriptsize  1 (1) & \scriptsize 24.24 & \scriptsize  1 (1) & \scriptsize \textcolor{darkgray}{TO} & \scriptsize \text{\textcolor{darkgray}{N/A}} & \scriptsize 16.29 & \scriptsize  1 (1) & \scriptsize 21.72 & \scriptsize  1 (1) & \scriptsize \textcolor{darkgray}{TO} & \scriptsize \text{\textcolor{darkgray}{N/A}}  \\
        \rowcolor{lblue} Move Ddb Item &\scriptsize 119.49 & \scriptsize  6 (1) & \scriptsize 109.31 & \scriptsize  7 (1) & \scriptsize \color{darkgray} 0.38 & \scriptsize \textcolor{darkgray}{ 0} (\textcolor{darkgray}{ 0}) & \scriptsize 89.38 & \scriptsize  6 (1) & \scriptsize 108.40 & \scriptsize  7 (1) & \scriptsize \textcolor{darkgray}{TO} & \scriptsize \text{\textcolor{darkgray}{N/A}}  \\
         Move Ddb Item If Present &\scriptsize 91.50 & \scriptsize  7 (2) & \scriptsize 102.89 & \scriptsize  8 (2) & \scriptsize \color{darkgray} 2.18 & \scriptsize \textcolor{darkgray}{ 0} (\textcolor{darkgray}{ 0}) & \scriptsize 32.67 & \scriptsize  3 (1) & \scriptsize 102.27 & \scriptsize  8 (3) & \scriptsize \textcolor{darkgray}{TO} & \scriptsize \text{\textcolor{darkgray}{N/A}}  \\
        \rowcolor{lblue} Copy S3Objects &\scriptsize 110.38 & \scriptsize  4 (2) & \scriptsize \textcolor{darkgray}{TO} & \scriptsize \text{\textcolor{darkgray}{N/A}} & \scriptsize \color{darkgray} 3.32 & \scriptsize \textcolor{darkgray}{ 2} (\textcolor{darkgray}{ 2}) & \scriptsize 110.17 & \scriptsize  4 (2) & \scriptsize \textcolor{darkgray}{TO} & \scriptsize \text{\textcolor{darkgray}{N/A}} & \scriptsize \color{darkgray} 3.18 & \scriptsize \textcolor{darkgray}{ 2} (\textcolor{darkgray}{ 2})  \\
         Tag Instances With Dry Run &\scriptsize 74.95 & \scriptsize  9 (3) & \scriptsize 62.35 & \scriptsize  8 (3) & \scriptsize \color{darkgray} 0.10 & \scriptsize \textcolor{darkgray}{ 0} (\textcolor{darkgray}{ 0}) & \scriptsize 28.36 & \scriptsize  3 (1) & \scriptsize 62.51 & \scriptsize  8 (4) & \scriptsize \color{darkgray} 61.31 & \scriptsize \textcolor{darkgray}{ 12} (\textcolor{darkgray}{ 263})  \\
        \rowcolor{lblue} Send Email On Input &\scriptsize 5.86 & \scriptsize  3 (2) & \scriptsize \color{darkgray} 8.88 & \scriptsize \textcolor{darkgray}{ 4} (\textcolor{darkgray}{ 2}) & \scriptsize \color{darkgray} 10.14 & \scriptsize \textcolor{darkgray}{ 1} (\textcolor{darkgray}{ 1}) & \scriptsize 5.75 & \scriptsize  3 (2) & \scriptsize \color{darkgray} 8.83 & \scriptsize \textcolor{darkgray}{ 4} (\textcolor{darkgray}{ 2}) & \scriptsize 10.14 & \scriptsize  1 (1)  \\
         Retrieve Channel Members &\scriptsize 6.08 & \scriptsize  2 (2) & \scriptsize \color{darkgray} 0.67 & \scriptsize \textcolor{darkgray}{ 7} (\textcolor{darkgray}{ 4}) & \scriptsize \textcolor{darkgray}{TO} & \scriptsize \text{\textcolor{darkgray}{N/A}} & \scriptsize 6.02 & \scriptsize  2 (2) & \scriptsize \textcolor{darkgray}{TO} & \scriptsize \text{\textcolor{darkgray}{N/A}} & \scriptsize \textcolor{darkgray}{TO} & \scriptsize \text{\textcolor{darkgray}{N/A}}  \\
        \rowcolor{lblue} List and move files &\scriptsize 14.89 & \scriptsize  9 (3) & \scriptsize \color{darkgray} 199.48 & \scriptsize \textcolor{darkgray}{ 11} (\textcolor{darkgray}{ 3}) & \scriptsize \textcolor{darkgray}{TO} & \scriptsize \text{\textcolor{darkgray}{N/A}} & \scriptsize 14.92 & \scriptsize  10 (3) & \scriptsize \color{darkgray} 199.76 & \scriptsize \textcolor{darkgray}{ 11} (\textcolor{darkgray}{ 3}) & \scriptsize \color{darkgray} 6.49 & \scriptsize \textcolor{darkgray}{ 0} (\textcolor{darkgray}{ 0})  \\
         Clean current dir &\scriptsize 1.68 & \scriptsize  1 (1) & \scriptsize \textcolor{darkgray}{TO} & \scriptsize \text{\textcolor{darkgray}{N/A}} & \scriptsize \textcolor{darkgray}{TO} & \scriptsize \text{\textcolor{darkgray}{N/A}} & \scriptsize 1.70 & \scriptsize  1 (1) & \scriptsize \textcolor{darkgray}{TO} & \scriptsize \text{\textcolor{darkgray}{N/A}} & \scriptsize \textcolor{darkgray}{TO} & \scriptsize \text{\textcolor{darkgray}{N/A}}  \\
        \rowcolor{lblue} Clean regular files in dir &\scriptsize 9.27 & \scriptsize  2 (2) & \scriptsize \color{darkgray} 6.53 & \scriptsize \textcolor{darkgray}{ 3} (\textcolor{darkgray}{ 3}) & \scriptsize \textcolor{darkgray}{TO} & \scriptsize \text{\textcolor{darkgray}{N/A}} & \scriptsize 11.32 & \scriptsize  2 (2) & \scriptsize \color{darkgray} 8.60 & \scriptsize \textcolor{darkgray}{ 3} (\textcolor{darkgray}{ 3}) & \scriptsize \textcolor{darkgray}{TO} & \scriptsize \text{\textcolor{darkgray}{N/A}}  \\
         Conversation members &\scriptsize \textcolor{darkgray}{TO} & \scriptsize \text{\textcolor{darkgray}{N/A}} & \scriptsize \textcolor{darkgray}{TO} & \scriptsize \text{\textcolor{darkgray}{N/A}} & \scriptsize \textcolor{darkgray}{TO} & \scriptsize \text{\textcolor{darkgray}{N/A}} & \scriptsize \textcolor{darkgray}{TO} & \scriptsize \text{\textcolor{darkgray}{N/A}} & \scriptsize \textcolor{darkgray}{TO} & \scriptsize \text{\textcolor{darkgray}{N/A}} & \scriptsize \textcolor{darkgray}{TO} & \scriptsize \text{\textcolor{darkgray}{N/A}}  \\
        \rowcolor{lblue} SVG-Increase Circle Radius-json &\scriptsize \color{darkgray} 165.69 & \scriptsize \textcolor{darkgray}{ 6} (\textcolor{darkgray}{ 4}) & \scriptsize \color{darkgray} 207.81 & \scriptsize \textcolor{darkgray}{ 6} (\textcolor{darkgray}{ 4}) & \scriptsize \color{darkgray} 0.09 & \scriptsize \textcolor{darkgray}{ 0} (\textcolor{darkgray}{ 0}) & \scriptsize 165.94 & \scriptsize  6 (4) & \scriptsize 207.65 & \scriptsize  6 (4) & \scriptsize \color{darkgray} 8.33 & \scriptsize \textcolor{darkgray}{ 3} (\textcolor{darkgray}{ 4})  \\
         SVG-Increase Circle Radius &\scriptsize 42.38 & \scriptsize  4 (3) & \scriptsize 2.41 & \scriptsize  4 (4) & \scriptsize \color{darkgray} 0.09 & \scriptsize \textcolor{darkgray}{ 1} (\textcolor{darkgray}{ 1}) & \scriptsize 42.44 & \scriptsize  4 (3) & \scriptsize 2.40 & \scriptsize  4 (4) & \scriptsize 0.09 & \scriptsize  14 (8)  \\
    \hline
    \end{tabularx}
\end{table}

\begin{table}[H]
    \centering
    \footnotesize
    \renewcommand{\arraystretch}{1.2}
    \setlength{\tabcolsep}{3pt}
    \caption{List of Blink Automation benchmarks, with synthesis time and number of SyGuS solver calls.
    For each algorithm (\(\mathcal{A}\), \textsc{RTS} and \(k\)-search)
    we report the synthesis time in seconds, the number of total SyGuS solver calls,
    and the subset of those calls that returned SAT, in parenthesis.
    When \systemname times out (TO), it does not report the number of solver calls (N/A).}
    \label{tab:synthesis-times-blink-automation}
    \begin{tabularx}{\linewidth}{|X| |c|c||c|c||c|c| |c|c||c|c||c|c|}
    \hline
        \rowcolor{lightestgray} \multicolumn{13}{| c|}{\small Blink Automation} \\
    \hline
        \rowcolor{lightgray}
        \ &
        \multicolumn{6}{c|}{ \(\scoreFuncOne\) } &
        \multicolumn{6}{c|}{ \(\scoreFuncTwo\) }
    \\
    \hline
        \rowcolor{lightgray} &

        \multicolumn{2}{c|}{\rotatebox{0}{\(\mathcal{A}\)}} &
        \multicolumn{2}{c|}{\rotatebox{0}{\textsc{RTS}}} &
        \multicolumn{2}{c|}{\rotatebox{0}{\(k\)-search}} &

        \multicolumn{2}{c|}{\rotatebox{0}{\(\mathcal{A}\)}} &
        \multicolumn{2}{c|}{\rotatebox{0}{\textsc{RTS}}} &
        \multicolumn{2}{c|}{\rotatebox{0}{\(k\)-search}}
    \\
    \hline
    \rowcolor{lightgray} Name &

        \rotatebox{90}{\systemname runtime\phantom{x}} &
        \rotatebox{90}{SyGuS calls (sat)\phantom{x}} &

        \rotatebox{90}{\systemname runtime} &
        \rotatebox{90}{SyGuS calls (sat)} &

        \rotatebox{90}{\systemname runtime} &
        \rotatebox{90}{SyGuS calls (sat)} &

        \rotatebox{90}{\systemname runtime} &
        \rotatebox{90}{SyGuS calls (sat)} &

        \rotatebox{90}{\systemname runtime} &
        \rotatebox{90}{SyGuS calls (sat)} &

        \rotatebox{90}{\systemname runtime} &
        \rotatebox{90}{SyGuS calls (sat)}
    \\
    \hline

         Copy Instance To New Region &\scriptsize \textcolor{darkgray}{TO} & \scriptsize \text{\textcolor{darkgray}{N/A}} & \scriptsize \textcolor{darkgray}{TO} & \scriptsize \text{\textcolor{darkgray}{N/A}} & \scriptsize \textcolor{darkgray}{TO} & \scriptsize \text{\textcolor{darkgray}{N/A}} & \scriptsize \textcolor{darkgray}{TO} & \scriptsize \text{\textcolor{darkgray}{N/A}} & \scriptsize \textcolor{darkgray}{TO} & \scriptsize \text{\textcolor{darkgray}{N/A}} & \scriptsize \textcolor{darkgray}{TO} & \scriptsize \text{\textcolor{darkgray}{N/A}}  \\
        \rowcolor{lblue} Report Long Running Instances &\scriptsize \textcolor{darkgray}{TO} & \scriptsize \text{\textcolor{darkgray}{N/A}} & \scriptsize \textcolor{darkgray}{TO} & \scriptsize \text{\textcolor{darkgray}{N/A}} & \scriptsize \textcolor{darkgray}{TO} & \scriptsize \text{\textcolor{darkgray}{N/A}} & \scriptsize \textcolor{darkgray}{TO} & \scriptsize \text{\textcolor{darkgray}{N/A}} & \scriptsize \textcolor{darkgray}{TO} & \scriptsize \text{\textcolor{darkgray}{N/A}} & \scriptsize \textcolor{darkgray}{TO} & \scriptsize \text{\textcolor{darkgray}{N/A}}  \\
         Create Iam User And Notify &\scriptsize 41.83 & \scriptsize  2 (1) & \scriptsize 43.17 & \scriptsize  2 (1) & \scriptsize \color{darkgray} 0.05 & \scriptsize \textcolor{darkgray}{ 2} (\textcolor{darkgray}{ 1}) & \scriptsize 40.75 & \scriptsize  2 (1) & \scriptsize 41.26 & \scriptsize  2 (1) & \scriptsize 0.05 & \scriptsize  2 (2)  \\
    \hline
    \end{tabularx}
\end{table}

\begin{table}[H]
    \centering
    \footnotesize
    \renewcommand{\arraystretch}{1.2}
    \setlength{\tabcolsep}{3pt}
    \caption{List of AWS Automation Runbooks benchmarks, with synthesis time and number of SyGuS solver calls.
    For each algorithm (\(\mathcal{A}\), \textsc{RTS} and \(k\)-search)
    we report the synthesis time in seconds, the number of total SyGuS solver calls,
    and the subset of those calls that returned SAT, in parenthesis.
    When \systemname times out (TO), it does not report the number of solver calls (N/A).}
    \label{tab:synthesis-times-aws-automation-runbooks}
    \begin{tabularx}{\linewidth}{|X| |c|c||c|c||c|c| |c|c||c|c||c|c|}
    \hline
        \rowcolor{lightestgray} \multicolumn{13}{| c|}{\small AWS Automation Runbooks} \\
    \hline
        \rowcolor{lightgray}
        \ &
        \multicolumn{6}{c|}{ \(\scoreFuncOne\) } &
        \multicolumn{6}{c|}{ \(\scoreFuncTwo\) }
    \\
    \hline
        \rowcolor{lightgray} &

        \multicolumn{2}{c|}{\rotatebox{0}{\(\mathcal{A}\)}} &
        \multicolumn{2}{c|}{\rotatebox{0}{\textsc{RTS}}} &
        \multicolumn{2}{c|}{\rotatebox{0}{\(k\)-search}} &

        \multicolumn{2}{c|}{\rotatebox{0}{\(\mathcal{A}\)}} &
        \multicolumn{2}{c|}{\rotatebox{0}{\textsc{RTS}}} &
        \multicolumn{2}{c|}{\rotatebox{0}{\(k\)-search}}
    \\
    \hline
    \rowcolor{lightgray} Name &

        \rotatebox{90}{\systemname runtime\phantom{x}} &
        \rotatebox{90}{SyGuS calls (sat)\phantom{x}} &

        \rotatebox{90}{\systemname runtime} &
        \rotatebox{90}{SyGuS calls (sat)} &

        \rotatebox{90}{\systemname runtime} &
        \rotatebox{90}{SyGuS calls (sat)} &

        \rotatebox{90}{\systemname runtime} &
        \rotatebox{90}{SyGuS calls (sat)} &

        \rotatebox{90}{\systemname runtime} &
        \rotatebox{90}{SyGuS calls (sat)} &

        \rotatebox{90}{\systemname runtime} &
        \rotatebox{90}{SyGuS calls (sat)}
    \\
    \hline

         Stop EC2Instance &\scriptsize 4.51 & \scriptsize  0 (1) & \scriptsize 9.03 & \scriptsize  0 (1) & \scriptsize 0.24 & \scriptsize  0 (1) & \scriptsize 4.55 & \scriptsize  0 (1) & \scriptsize 9.05 & \scriptsize  0 (1) & \scriptsize 0.18 & \scriptsize  0 (1)  \\
        \rowcolor{lblue} Start EC2Instance &\scriptsize 19.26 & \scriptsize  4 (3) & \scriptsize 13.83 & \scriptsize  5 (2) & \scriptsize \color{darkgray} 0.30 & \scriptsize \textcolor{darkgray}{ 1} (\textcolor{darkgray}{ 1}) & \scriptsize \color{darkgray} 11.50 & \scriptsize \textcolor{darkgray}{ 2} (\textcolor{darkgray}{ 1}) & \scriptsize \color{darkgray} 15.33 & \scriptsize \textcolor{darkgray}{ 4} (\textcolor{darkgray}{ 2}) & \scriptsize \color{darkgray} 351.02 & \scriptsize \textcolor{darkgray}{ 14} (\textcolor{darkgray}{ 58})  \\
         Configure S3Bucket Versioning &\scriptsize \color{darkgray} 71.48 & \scriptsize \textcolor{darkgray}{ 3} (\textcolor{darkgray}{ 0}) & \scriptsize \color{darkgray} 80.84 & \scriptsize \textcolor{darkgray}{3} (\textcolor{darkgray}{0}) & \scriptsize \color{darkgray} 77.89 & \scriptsize \textcolor{darkgray}{3} (\textcolor{darkgray}{0}) & \scriptsize \color{darkgray} 50.56 & \scriptsize \textcolor{darkgray}{ 1} (\textcolor{darkgray}{ 0}) & \scriptsize \color{darkgray} 77.11 & \scriptsize \textcolor{darkgray}{1} (\textcolor{darkgray}{0}) & \scriptsize \color{darkgray} 56.80 & \scriptsize \textcolor{darkgray}{1} (\textcolor{darkgray}{0})  \\
        \rowcolor{lblue} Configure Cloud Watch On EC2 &\scriptsize \color{darkgray} 13.72 & \scriptsize \textcolor{darkgray}{ 2} (\textcolor{darkgray}{ 0}) & \scriptsize \color{darkgray} 13.72 & \scriptsize \textcolor{darkgray}{ 2} (\textcolor{darkgray}{ 0}) & \scriptsize \textcolor{darkgray}{TO} & \scriptsize \text{\textcolor{darkgray}{N/A}} & \scriptsize \color{darkgray} 8.77 & \scriptsize \textcolor{darkgray}{ 1} (\textcolor{darkgray}{ 0}) & \scriptsize \color{darkgray} 13.73 & \scriptsize \textcolor{darkgray}{ 2} (\textcolor{darkgray}{ 0}) & \scriptsize \textcolor{darkgray}{TO} & \scriptsize \text{\textcolor{darkgray}{N/A}}  \\
         Copy EC2Instance &\scriptsize \textcolor{darkgray}{TO} & \scriptsize \text{\textcolor{darkgray}{N/A}} & \scriptsize \textcolor{darkgray}{TO} & \scriptsize \text{\textcolor{darkgray}{N/A}} & \scriptsize \textcolor{darkgray}{TO} & \scriptsize \text{\textcolor{darkgray}{N/A}} & \scriptsize \textcolor{darkgray}{TO} & \scriptsize \text{\textcolor{darkgray}{N/A}} & \scriptsize \textcolor{darkgray}{TO} & \scriptsize \text{\textcolor{darkgray}{N/A}} & \scriptsize \textcolor{darkgray}{TO} & \scriptsize \text{\textcolor{darkgray}{N/A}}  \\
        \rowcolor{lblue} Resize Instance &\scriptsize \textcolor{darkgray}{TO} & \scriptsize \text{\textcolor{darkgray}{N/A}} & \scriptsize \textcolor{darkgray}{TO} & \scriptsize \text{\textcolor{darkgray}{N/A}} & \scriptsize \textcolor{darkgray}{TO} & \scriptsize \text{\textcolor{darkgray}{N/A}} & \scriptsize \textcolor{darkgray}{TO} & \scriptsize \text{\textcolor{darkgray}{N/A}} & \scriptsize \textcolor{darkgray}{TO} & \scriptsize \text{\textcolor{darkgray}{N/A}} & \scriptsize \textcolor{darkgray}{TO} & \scriptsize \text{\textcolor{darkgray}{N/A}}  \\
         Set Required Tags &\scriptsize 16.93 & \scriptsize  2 (0) & \scriptsize 16.97 & \scriptsize  2 (0) & \scriptsize 4.91 & \scriptsize  13 (43) & \scriptsize 16.97 & \scriptsize  2 (0) & \scriptsize 16.97 & \scriptsize  2 (0) & \scriptsize 243.93 & \scriptsize  3 (0)  \\
    \hline
    \end{tabularx}
\end{table}

\begin{table}[H]
    \centering
    \footnotesize
    \renewcommand{\arraystretch}{1.2}
    \setlength{\tabcolsep}{3pt}
    \caption{List of Literature benchmarks, with synthesis time and number of SyGuS solver calls.
    For each algorithm (\(\mathcal{A}\), \textsc{RTS} and \(k\)-search)
    we report the synthesis time in seconds, the number of total SyGuS solver calls,
    and the subset of those calls that returned SAT, in parenthesis.
    When \systemname times out (TO), it does not report the number of solver calls (N/A).}
    \label{tab:synthesis-times-literature}
    \begin{tabularx}{\linewidth}{|X| |c|c||c|c||c|c| |c|c||c|c||c|c|}
    \hline
        \rowcolor{lightestgray} \multicolumn{13}{| c|}{\small Literature benchmarks} \\
    \hline
        \rowcolor{lightgray}
        \ &
        \multicolumn{6}{c|}{ \(\scoreFuncOne\) } &
        \multicolumn{6}{c|}{ \(\scoreFuncTwo\) }
    \\
    \hline
        \rowcolor{lightgray} &

        \multicolumn{2}{c|}{\rotatebox{0}{\(\mathcal{A}\)}} &
        \multicolumn{2}{c|}{\rotatebox{0}{\textsc{RTS}}} &
        \multicolumn{2}{c|}{\rotatebox{0}{\(k\)-search}} &

        \multicolumn{2}{c|}{\rotatebox{0}{\(\mathcal{A}\)}} &
        \multicolumn{2}{c|}{\rotatebox{0}{\textsc{RTS}}} &
        \multicolumn{2}{c|}{\rotatebox{0}{\(k\)-search}}
    \\
    \hline
    \rowcolor{lightgray} Name &

        \rotatebox{90}{\systemname runtime\phantom{x}} &
        \rotatebox{90}{SyGuS calls (sat)\phantom{x}} &

        \rotatebox{90}{\systemname runtime} &
        \rotatebox{90}{SyGuS calls (sat)} &

        \rotatebox{90}{\systemname runtime} &
        \rotatebox{90}{SyGuS calls (sat)} &

        \rotatebox{90}{\systemname runtime} &
        \rotatebox{90}{SyGuS calls (sat)} &

        \rotatebox{90}{\systemname runtime} &
        \rotatebox{90}{SyGuS calls (sat)} &

        \rotatebox{90}{\systemname runtime} &
        \rotatebox{90}{SyGuS calls (sat)}
    \\
    \hline

         Api Phany Ex. 1.1 &\scriptsize \textcolor{darkgray}{TO} & \scriptsize \text{\textcolor{darkgray}{N/A}} & \scriptsize \textcolor{darkgray}{TO} & \scriptsize \text{\textcolor{darkgray}{N/A}} & \scriptsize \textcolor{darkgray}{TO} & \scriptsize \text{\textcolor{darkgray}{N/A}} & \scriptsize \textcolor{darkgray}{TO} & \scriptsize \text{\textcolor{darkgray}{N/A}} & \scriptsize \textcolor{darkgray}{TO} & \scriptsize \text{\textcolor{darkgray}{N/A}} & \scriptsize \textcolor{darkgray}{TO} & \scriptsize \text{\textcolor{darkgray}{N/A}}  \\
        \rowcolor{lblue} Api Phany Ex. 1.2 &\scriptsize 18.64 & \scriptsize  4 (2) & \scriptsize 19.64 & \scriptsize  4 (2) & \scriptsize \color{darkgray} 0.07 & \scriptsize \textcolor{darkgray}{ 0} (\textcolor{darkgray}{ 0}) & \scriptsize 17.72 & \scriptsize  4 (2) & \scriptsize 19.43 & \scriptsize  4 (2) & \scriptsize \color{darkgray} 84.93 & \scriptsize \textcolor{darkgray}{ 1} (\textcolor{darkgray}{ 1})  \\
         Api Phany Ex. 1.6 &\scriptsize 28.84 & \scriptsize  6 (2) & \scriptsize 24.64 & \scriptsize  6 (3) & \scriptsize \color{darkgray} 0.09 & \scriptsize \textcolor{darkgray}{ 1} (\textcolor{darkgray}{ 1}) & \scriptsize 27.49 & \scriptsize  6 (2) & \scriptsize 24.45 & \scriptsize  6 (3) & \scriptsize \color{darkgray} 174.70 & \scriptsize \textcolor{darkgray}{ 1} (\textcolor{darkgray}{ 1})  \\
        \rowcolor{lblue} Api Phany Ex. 1.8 &\scriptsize 21.74 & \scriptsize  4 (3) & \scriptsize 9.18 & \scriptsize  3 (3) & \scriptsize \color{darkgray} 0.16 & \scriptsize \textcolor{darkgray}{ 1} (\textcolor{darkgray}{ 1}) & \scriptsize 23.76 & \scriptsize  4 (3) & \scriptsize 11.60 & \scriptsize  3 (3) & \scriptsize 0.18 & \scriptsize  1 (1)  \\
         Api Phany Ex. 2.1 &\scriptsize 1.41 & \scriptsize  1 (1) & \scriptsize \textcolor{darkgray}{TO} & \scriptsize \text{\textcolor{darkgray}{N/A}} & \scriptsize 2.04 & \scriptsize  2 (2) & \scriptsize 1.42 & \scriptsize  1 (1) & \scriptsize \textcolor{darkgray}{TO} & \scriptsize \text{\textcolor{darkgray}{N/A}} & \scriptsize 1.68 & \scriptsize  2 (2)  \\
        \rowcolor{lblue} Api Phany Ex. 2.3 &\scriptsize 168.98 & \scriptsize  12 (2) & \scriptsize 184.82 & \scriptsize  15 (2) & \scriptsize \color{darkgray} 0.04 & \scriptsize \textcolor{darkgray}{ 0} (\textcolor{darkgray}{ 0}) & \scriptsize 161.13 & \scriptsize  12 (2) & \scriptsize 184.17 & \scriptsize  15 (2) & \scriptsize \color{darkgray} 161.41 & \scriptsize \textcolor{darkgray}{ 10} (\textcolor{darkgray}{1})  \\
         Api Phany Ex. 2.5 &\scriptsize 1.40 & \scriptsize  1 (1) & \scriptsize \textcolor{darkgray}{TO} & \scriptsize \text{\textcolor{darkgray}{N/A}} & \scriptsize 6.23 & \scriptsize  2 (2) & \scriptsize 1.33 & \scriptsize  1 (1) & \scriptsize \textcolor{darkgray}{TO} & \scriptsize \text{\textcolor{darkgray}{N/A}} & \scriptsize 6.07 & \scriptsize  2 (2)  \\
        \rowcolor{lblue} Api Phany Ex. 2.6 &\scriptsize 20.50 & \scriptsize  4 (2) & \scriptsize 23.62 & \scriptsize  4 (2) & \scriptsize \color{darkgray} 0.14 & \scriptsize \textcolor{darkgray}{ 0} (\textcolor{darkgray}{ 0}) & \scriptsize 30.28 & \scriptsize  4 (2) & \scriptsize 31.88 & \scriptsize  4 (2) & \scriptsize \textcolor{darkgray}{TO} & \scriptsize \text{\textcolor{darkgray}{N/A}}  \\
         Api Phany Ex. 2.7 &\scriptsize 4.13 & \scriptsize  1 (1) & \scriptsize \textcolor{darkgray}{TO} & \scriptsize \text{\textcolor{darkgray}{N/A}} & \scriptsize \textcolor{darkgray}{TO} & \scriptsize \text{\textcolor{darkgray}{N/A}} & \scriptsize 4.38 & \scriptsize  1 (1) & \scriptsize \textcolor{darkgray}{TO} & \scriptsize \text{\textcolor{darkgray}{N/A}} & \scriptsize \textcolor{darkgray}{TO} & \scriptsize \text{\textcolor{darkgray}{N/A}}  \\
        \rowcolor{lblue} Api Phany Ex. 2.10 &\scriptsize \textcolor{darkgray}{TO} & \scriptsize \text{\textcolor{darkgray}{N/A}} & \scriptsize \textcolor{darkgray}{TO} & \scriptsize \text{\textcolor{darkgray}{N/A}} & \scriptsize \textcolor{darkgray}{TO} & \scriptsize \text{\textcolor{darkgray}{N/A}} & \scriptsize \textcolor{darkgray}{TO} & \scriptsize \text{\textcolor{darkgray}{N/A}} & \scriptsize \textcolor{darkgray}{TO} & \scriptsize \text{\textcolor{darkgray}{N/A}} & \scriptsize \textcolor{darkgray}{TO} & \scriptsize \text{\textcolor{darkgray}{N/A}}  \\
         Api Phany Ex. 2.11 &\scriptsize 13.72 & \scriptsize  4 (3) & \scriptsize 8.79 & \scriptsize  3 (3) & \scriptsize \color{darkgray} 0.11 & \scriptsize \textcolor{darkgray}{ 1} (\textcolor{darkgray}{ 1}) & \scriptsize 13.90 & \scriptsize  4 (3) & \scriptsize 8.21 & \scriptsize  3 (3) & \scriptsize 0.12 & \scriptsize  1 (1)  \\
        \rowcolor{lblue} Api Phany Ex. 2.13 &\scriptsize 40.39 & \scriptsize  9 (3) & \scriptsize \color{darkgray} 31.81 & \scriptsize \textcolor{darkgray}{ 7} (\textcolor{darkgray}{ 2}) & \scriptsize \color{darkgray} 0.08 & \scriptsize \textcolor{darkgray}{ 0} (\textcolor{darkgray}{ 0}) & \scriptsize 40.53 & \scriptsize  9 (3) & \scriptsize 31.15 & \scriptsize  7 (2) & \scriptsize \color{darkgray} 28.94 & \scriptsize \textcolor{darkgray}{ 0} (\textcolor{darkgray}{ 0})  \\
         Api Phany Ex. 3.1 &\scriptsize \text{<0.01} & \scriptsize  0 (0) & \scriptsize \textcolor{darkgray}{TO} & \scriptsize \text{\textcolor{darkgray}{N/A}} & \scriptsize 1.97 & \scriptsize  1 (1) & \scriptsize \text{<0.01} & \scriptsize  0 (0) & \scriptsize \textcolor{darkgray}{TO} & \scriptsize \text{\textcolor{darkgray}{N/A}} & \scriptsize 1.61 & \scriptsize  1 (1)  \\
        \rowcolor{lblue} Api Phany Ex. 3.3 &\scriptsize \color{darkgray} 162.15 & \scriptsize \textcolor{darkgray}{ 3} (\textcolor{darkgray}{ 1}) & \scriptsize \color{darkgray} 165.80 & \scriptsize \textcolor{darkgray}{ 2} (\textcolor{darkgray}{ 1}) & \scriptsize \color{darkgray} 181.45 & \scriptsize \textcolor{darkgray}{2} (\textcolor{darkgray}{0}) & \scriptsize \color{darkgray} 162.83 & \scriptsize \textcolor{darkgray}{ 3} (\textcolor{darkgray}{ 1}) & \scriptsize \color{darkgray} 164.49 & \scriptsize \textcolor{darkgray}{ 2} (\textcolor{darkgray}{ 1}) & \scriptsize \color{darkgray} 183.63 & \scriptsize \textcolor{darkgray}{3} (\textcolor{darkgray}{0})  \\
         Api Phany Ex. 3.4* &\scriptsize 4.08 & \scriptsize  1 (1) & \scriptsize \textcolor{darkgray}{TO} & \scriptsize \text{\textcolor{darkgray}{N/A}} & \scriptsize 3.92 & \scriptsize  1 (1) & \scriptsize 3.86 & \scriptsize  1 (1) & \scriptsize \textcolor{darkgray}{TO} & \scriptsize \text{\textcolor{darkgray}{N/A}} & \scriptsize 4.03 & \scriptsize  5 (6)  \\
        \rowcolor{lblue} Api Phany Ex. 3.6 &\scriptsize 3.39 & \scriptsize  1 (1) & \scriptsize \textcolor{darkgray}{TO} & \scriptsize \text{\textcolor{darkgray}{N/A}} & \scriptsize \color{darkgray} 2.85 & \scriptsize \textcolor{darkgray}{ 3} (\textcolor{darkgray}{ 3}) & \scriptsize 3.57 & \scriptsize  1 (1) & \scriptsize \textcolor{darkgray}{TO} & \scriptsize \text{\textcolor{darkgray}{N/A}} & \scriptsize \color{darkgray} 3.02 & \scriptsize \textcolor{darkgray}{ 3} (\textcolor{darkgray}{ 3})  \\
         Api Phany Ex. 3.7 &\scriptsize 3.06 & \scriptsize  1 (1) & \scriptsize \color{darkgray} 106.23 & \scriptsize \textcolor{darkgray}{ 2} (\textcolor{darkgray}{ 1}) & \scriptsize 3.57 & \scriptsize  1 (1) & \scriptsize 3.07 & \scriptsize  1 (1) & \scriptsize \color{darkgray} 105.35 & \scriptsize \textcolor{darkgray}{ 2} (\textcolor{darkgray}{ 1}) & \scriptsize \textcolor{darkgray}{TO} & \scriptsize \text{\textcolor{darkgray}{N/A}}  \\
        \rowcolor{lblue} Api Phany Ex. 3.8 &\scriptsize \textcolor{darkgray}{TO} & \scriptsize \text{\textcolor{darkgray}{N/A}} & \scriptsize \textcolor{darkgray}{TO} & \scriptsize \text{\textcolor{darkgray}{N/A}} & \scriptsize \textcolor{darkgray}{TO} & \scriptsize \text{\textcolor{darkgray}{N/A}} & \scriptsize \textcolor{darkgray}{TO} & \scriptsize \text{\textcolor{darkgray}{N/A}} & \scriptsize \textcolor{darkgray}{TO} & \scriptsize \text{\textcolor{darkgray}{N/A}} & \scriptsize \textcolor{darkgray}{TO} & \scriptsize \text{\textcolor{darkgray}{N/A}}  \\
         Api Phany Ex. 3.9 &\scriptsize \textcolor{darkgray}{TO} & \scriptsize \text{\textcolor{darkgray}{N/A}} & \scriptsize \textcolor{darkgray}{TO} & \scriptsize \text{\textcolor{darkgray}{N/A}} & \scriptsize \textcolor{darkgray}{TO} & \scriptsize \text{\textcolor{darkgray}{N/A}} & \scriptsize \textcolor{darkgray}{TO} & \scriptsize \text{\textcolor{darkgray}{N/A}} & \scriptsize \textcolor{darkgray}{TO} & \scriptsize \text{\textcolor{darkgray}{N/A}} & \scriptsize \textcolor{darkgray}{TO} & \scriptsize \text{\textcolor{darkgray}{N/A}}  \\
    \hline
    \end{tabularx}
\end{table}

\section{Rewrite Rules}\label{sec:appendix-rewrite-rules}
\subsection{Refinement Rewrite Rules}\label{sec:refinement-rewrite-rules}
Recall our two sets of rewrite rules: refinement rewrite rules 
and synthesis rewrite rules 
. 
We ignore the data transformation definitions associated with the program, since they do not intervene in the rewrite.

\paragraph{Control Flow Manipulation}
The following rewrite rule pulls calls to the same API out of a conditional:
\begin{multline}\label{rule:pull-same-calls-of-ite}
\dsllambda \overline{z} \ \mathcal{S}\ 
\dslite{c}{
\dsllet\ x = \mathbb{A}(\overline{y})\ \mathcal{S}'}{\dsllet\ x' = \mathbb{A}(\overline{y}')\ \mathcal{S}''}\ \mathcal{R} \\
\rightsquigarrow_t 
\dsllambda \overline{z}\ \mathcal{S}\ \dsllet\ x = \mathbb{A}(\dsltern{c}{\overline{y}}{\overline{y}'})
\; \dslite{c}{\mathcal{S}'}{\left.\mathcal{S}''  \left[x' \to x\right] \right.}\ 
\left. \mathcal{R}\left[x' \to x\right] \right.\\
\text{ with } t(\sigma) = \sigma[x \to \dsltern{c}{\sigma(x)}{\sigma(x')}] 
\end{multline}
Note that in this transformation, the condition is used both in transforming the arguments of the API call moved out of the branches (in \( \dsltern{c}{\overline{y}}{\overline{y}'}\)) and in the update to the state (i.e. the mapping to \(x\) becomes the evaluated expression \(\dsltern{c}{\sigma(x)}{\sigma(x')}\).

A symmetric rewrite rule pushes calls out of branches:
\begin{multline}\label{rule:push-same-calls-of-ite}
    \dsllambda \overline{z}\ \mathcal{S}\ 
\dslite{c}{
\mathcal{S}' \
\dsllet\ x = \mathbb{A}(\overline{y})
}{
\mathcal{S}''\
\dsllet\ x' = \mathbb{A}(\overline{y}')
}\ \mathcal{R} 
\\
\rightsquigarrow_t 
\dsllambda \overline{z}\ \mathcal{S}\ 
\dslite{c}{\mathcal{S}'}{\mathcal{S}'' }\ 
\dsllet\ x = \mathbb{A}(\dsltern{c}{\overline{y}}{\overline{y}'})
\left. \mathcal{R}\left[x' \to x\right] \right.\\
\text{ with } t(\sigma) = \sigma[x \to \sigma(x) \cup \sigma(x')] 
\end{multline}

The following rewrite rule eliminates empty conditional statements:
\begin{multline}\label{rule:eliminate-empty-ifs}
\dsllambda \overline{z}\
\mathcal{S}\ \dslite{c}{}{}\ \mathcal{R}
\rightsquigarrow_t
\dsllambda \overline{z}\
\mathcal{S}\ \mathcal{R} \\
\text{ with  } t(\sigma) = \sigma
\end{multline}
And this rewrite rule inverts the branches of a conditional statement when the then branch is~empty:
\begin{multline}
\dsllambda \overline{z}\ 
\mathcal{S}\ \dslite{c}{}{\mathcal{S}'}\ \mathcal{R}
\rightsquigarrow_t
\mathcal{S}\ \dslite{\neg c}{\mathcal{S}'}{} \mathcal{R}
\\ \text{ with } t(\sigma) = \sigma
\end{multline}

More complex rewrite rules manipulate the control flow of the program to combine conditional statements in order to group API calls:
\begin{multline}
\dsllambda \overline{z}\ 
\mathcal{S}\
\dslite{c}{
    \dsllet\ x = \mathbb{A}(\overline{y})
}{
    \dslite{c'}{
        \dsllet\ x' = \mathbb{A}(\overline{y}')
    }{
    \mathcal{S'}
    }
}\
\mathcal{R}
\\ 
\rightsquigarrow_t
\dsllambda \overline{z}\ \mathcal{S}\
\dslite{c \vee c'}{
\dsllet\ x = \mathbb{A}\left(\dsltern{c}{\overline{y}}{\overline{y}'}\right)
}{
\mathcal{S'}
}\
\left. \mathcal{R}\left[x' \to x\right] \right.
\\ 
\text{ with } t(\sigma) = \sigma[x \to \dsltern{c}{\sigma(x)}{(\dsltern{c'}{\sigma(x')}{\emptyset})}]
\end{multline}

A symmetric rule handles statements where the branches in the nested conditionals are reversed:
\begin{multline}
\dsllambda \overline{z}\ 
\mathcal{S}\
\dslite{c}{
    \dsllet\ x = \mathbb{A}(\overline{y})
}{
    \dslite{c'}{ \mathcal{S'} }{ \dsllet\ x' = \mathbb{A}(\overline{y}') }
}\
\mathcal{R}
\\ 
\rightsquigarrow
\dsllambda \overline{z}\ \mathcal{S}\
\dslite{c \vee \neg c'}{
\dsllet\ x = \mathbb{A}\left(\dsltern{c}{\overline{y}}{\overline{y}'}\right)
}{
\mathcal{S'}
}\
\left. \mathcal{R}\left[x' \to x\right] \right.
\\ 
\text{ with } t(\sigma) = \sigma[x \to \dsltern{c}{\sigma(x)}{(\dsltern{c'}{\emptyset}{\sigma(x')})}]
\end{multline}
In general, all rules have similar variation that consider symmetries in the program. We do not list all of those.

Nested conditionals can be sequenced if the appropriate branches are empty:
\begin{multline}
\dsllambda \overline{z}\ \mathcal{S}\
\dslite{c}{
    \dsllet\ x = \mathbb{A}(\overline{y})\
    \dslite{c'}{
        \dsllet\ x' = \mathbb{A}(\overline{y}')
    }{}
}{}
\\
\rightsquigarrow_t
\dsllambda \overline{z}\ \mathcal{S}\
\dslite{c}{
    \dsllet\ x = \mathbb{A}(\overline{y})
}{}\ 
\dslite{c' \wedge c}{
    \dsllet\ x' = \mathbb{A}(\overline{y}')
}{}\ 
\mathcal{R}
\\ 
\text{ with } t(\sigma) = \sigma
\end{multline}
And similarly nested conditionals can be simplified:
\begin{multline}
\dsllambda \overline{z}\ \mathcal{S}\
\dslite{c}{
    \dslite{c'}{
        \mathcal{S'}
    }{}
}{}
\\
\rightsquigarrow_t
\dsllambda \overline{z}\ \mathcal{S}\
\dslite{c' \wedge c}{
    \mathcal{S'}
}{}\ 
\mathcal{R}
\\ 
\text{ with } t(\sigma) = \sigma
\end{multline}

\paragraph{Parameter Elimination}
A rewrite rule eliminates a parameter \(x\) if it is not used in the body of the program:
\[ 
    x \not\in FV(\mathcal{S}) \implies \dsllambda x,\overline{y}\ \mathcal{S} \\
    \rightsquigarrow_t \dsllambda \overline{y}\ \mathcal{S} 
    \\
    \text{ where } t(\sigma) = \sigma
\]
where \(x \not\in FV(\mathcal{S})\) means that \(x\) is not in the free variables of \(\mathcal{S}\), i.e. not used in \(\mathcal{S}\)

\paragraph{Data Transform Elimination} Simple data transformations can be eliminated by inlining then in the program. In general, we inline data transformations that return constants or return one of their arguments.
For example, a program \(\dsllambda \overline{x}\ \mathcal{S}\ \dslwhere\ f := (x,y) \to y\) is transformed into a program \( \dsllambda \overline{x}\ \mathcal{S}'\) , in which \(\mathcal{S}'\) is the same as \(\mathcal{S}\) modulo the inlining of \(f\). Inlining consists in substituting all variables bound to the result of \(f\) by the second argument of the binding.

\paragraph{Expression Simplification}
There are additional refinement rules that operate only on the expressions that appear in the program and have no impact on the control flow (and their state transformation functions are identity).

\subsection{Synthesis Rewrite rules}

The following synthesis rewrite rule replaces the branching condition of an if-then-else instruction whose condition \(\mathcal{C}(\dummybranch)\) depends on \dummybranch with a hidden function \(\phi\): in place of \(\mathcal{C}(\dummybranch)\), the branching condition becomes the output of a function \(\phi\). This function may take as input all parameters besides \dummybranch, or any variable bound in the previous instructions of the program.

\begin{center}
\begin{tabular}{m{0.14\textwidth} m{.05\textwidth} m{0.2\textwidth}}
\begin{lstlisting}[language=Syren,numbers=none]
|\textcolor{purple}{$\dsllambda$} $\dummybranch, \overline{y}$|.
|$\mathcal{S}$|
if |\(\mathcal{C}(\dummybranch)\)| {|\(\mathcal{T}\)|}
else {|\(\mathcal{R}\)|}
\end{lstlisting} 
& \(\rightsquigarrow_t\)  & \begin{lstlisting}[language=Syren,numbers=none]
|\textcolor{purple}{$\dsllambda$} $\phi$. \textcolor{purple}{$\dsllambda$} $\dummybranch, \overline{y}$|.
|$\mathcal{S}$|
let b = |$\phi$|(|$\overline{y}$|, |BV|(|$\mathcal{S}$|))
if b {|\(\mathcal{T}\)|}
else {|\(\mathcal{R}\)|}
\end{lstlisting}
\label{example-of-synthesis-rule-remove-br}
\end{tabular}

\vspace{-2em}
{\small \raggedleft where \(t(\sigma) = \sigma \cup [b \to C(\dummybranch{})]\).

}
\end{center}

\noindent
In the rewrite above, \(BV(\mathcal{S})\) denotes the set of bound variables in the statements of \(\mathcal{S}\). The rewrite introduces a fresh variable \dsl{b}, which is assigned the value of the function \(\phi\) and is used as the if-condition. 
To maintain correctness, \(t\) updates the augmented program state to ensure the value of \dsl{b} is the same as \(\mathcal{C}(\dummybranch)\) for every input trace.

Retry loops are introduced by synthesis rules because the condition of the loop needs to be synthesized in order for the loop to be valid.
Retry loop introduction rewrites have the following form:

\begin{center}
\begin{tabular}{m{0.12\textwidth} m{.05\textwidth} m{0.3\textwidth}}
\begin{lstlisting}[language=Syren,numbers=none]
|\textcolor{purple}{$\dsllambda$} $\overline{y}$|.
|$\mathcal{R}$|
|$\mathcal{S}$|
|$\mathcal{S'}$|
if |$\mathcal{C}$| {|$\mathcal{S''}$|}
|$\mathcal{T}$|
\end{lstlisting} 
& \(\rightsquigarrow_t\)  & \begin{lstlisting}[language=Syren,numbers=none] 
|\textcolor{purple}{$\dsllambda$} $\phi$ \textcolor{purple}{$\dsllambda$} $\overline{y}$|.
|$\mathcal{R}$|
retry {
  |$\mathcal{S}$|
  let b = |$\phi$|(|BV($\mathcal{S}$),BV($\mathcal{R}$),$\overline{y}$|)
} until b
|$\mathcal{T}$|
\end{lstlisting}
\end{tabular}
\vspace{-.7em}

{\small \raggedleft where $t(\sigma) = \sigma[\overline{BV(\mathcal{S}),BV(\mathcal{S}),BV(\mathcal{S})} \to \overline{BV(\mathcal{S}),BV(\mathcal{S'}),BV(\mathcal{S''})}] \,\cup\, [\overline{b} \to \overline{?\mathcal{S}, ?\mathcal{S'}, ?\mathcal{S''}}]$
}

\end{center}

\noindent
A new variable \dsl{b} is introduced, bound to the result of the hidden function \(\phi\), and then used as a stopping condition for the retry loop. Statements \(\mathcal{S}\), \(\mathcal{S'}\), and \(\mathcal{S''}\) must all be calls to the same API method. The state transformation now maps \emph{iterations} of the bound variables of \(\mathcal{S}\) to the values of each statement that has been captured in the loop, represented by the vector \(\overline{BV(\mathcal{S}),BV(\mathcal{S'}),BV(\mathcal{S''})}\). When evaluating the program for a trace, the variables \( BV(\mathcal{S''})\) will not be defined for the traces where \(\mathcal{C}\) is false, in which case the value is null. The valuation of condition \dsl{b} is also a vector \(\overline{?\mathcal{S}, ?\mathcal{S'}, ?\mathcal{S''}}\) that is computed by assigning the truth value of whether the statement replaced by the iteration is the last statement in the trace (denoted by \(\overline{?S}\)).
In practice, we generalize this rule by considering patterns where matching statements are in sequence, and the matching can happen within conditionals, as is the case with the last statement in this example~rule.

\section{LLM Experiment Prompt}\label{app:prompt}

\colorlet{saved}{.}
\color{cameraready}

The following is the prompt used in every call to Claude 3.5 Sonnet in our comparison against LLMs experiment described in \S\ref{sec:evaluation}. We describe the task that \systemname solves, including the goals of the synthesized program and lightweight restrictions on the output language. We choose to let the LLM express the script in any language to improve its chances of success.
We also emphasize that the desired program must be able to replicate every provided trace, as that is our formal definition of correctness. We also added text that attempts to counteract a tendency to include unnecessary loops in the program.

\begin{lstlisting}[breaklines, numbers=none, breakautoindent=false,breakindent=0ex, basicstyle=\linespread{.9}\small\ttfamily]
You are a helpful assistant. You are given a set of traces from a program execution, and you need to generate a script that the user can then use in some automation.
The traces are in the format of a list of json events, where each event is of the form:
```
{
    "api": <THE API NAME>,
    "request": {
        <ARGUMENT 1>: <VALUE 1>,
        <ARGUMENT 2>: <VALUE 2>,
        ...
    },
    "response": <RESPONSE VALUE>
}
```
That is, the event contains an API name in the "api" field, the request parameters in the "request" field and the response value in the "response" field. The "request" is a JSON object, where each member is one of the arguments of the API call.

The script you generate can use retry loops, list maps and conditionals. Each statement in the script should either be a control flow statement (conditional or loop), an API call (with the same name as in the traces and same arguments) or a data transformation function. Data transformation functions can transform the output of an API call into another value that can be used in a conditional or in another API call's arguments. You need to write the script as part of a function with clearly identified parameters.

The script you generate MUST reproduce at least the same traces that it was given, for some value of the parameters of the script.
The script SHOULD be also more general, meaning that different parameter values can be used, and they would generate similar traces as the original ones.
You SHOULD NOT add loops when the traces do not show the need for loops, for example, if each trace calls an API once, then the script should only call once.
\end{lstlisting}

\color{saved}
\section{Detailed Benchmarks}\label{sec:detailed-benchmarks}

This sections shows our detailed benchmarks of Cloud automation scripts. These scripts were manually written to reflect Cloud automation tasks, and were used to evaluate \systemname.

\subsection{Custom Benchmarks}
Hand-written benchmarks that perform common cloud automation tasks.

\subsubsection{BackupThenDeleteTable} \phantom{break}

\begin{syrenColorListing}
lambda tableName, backupName
let backupCreation = dynamodb.CreateBackup(TableName=tableName, BackupName=backupName)
let backupArns = getBackupArn(backupCreation)
let backupArn = first(backupArns)
retry {
   let backupDesc = dynamodb.DescribeBackup(BackupArn=backupArn)
   let backupStatus = getBackupStatus(backupDesc)
} until (backupStatus[0] == "AVAILABLE")

let tableDeletion = dynamodb.DeleteTable(TableName=tableName)

where
getBackupArn := "$.BackupDetails.BackupArn"
getBackupStatus := "$.BackupDescription.BackupDetails.BackupStatus"
first := (x) -> x[0]
\end{syrenColorListing}

\pagebreak[1] 

\subsubsection{CreateBucketThenFolder}\phantom{break}

\begin{syrenColorListing}
lambda bucketName
let bucketCreation = s3.CreateBucket(bucket=bucketName)
let folderCreation = s3.PutObject(bucket=bucketName, key="custom-logs/")
\end{syrenColorListing}

\pagebreak[1] 

\subsubsection{CreateImage}\phantom{break}

\begin{syrenColorListing}
lambda instanceId, noReboot
let instanceIds = list(instanceId)
let description = ec2.DescribeInstances(instanceIds=instanceIds)
let requestId = getRequestId(description)
let imageName = buildName(instanceId, requestId)
let result = ec2.CreateImage(instanceId=instanceId, noReboot=noReboot, name=imageName)
where
buildName := (X, Y) -> (X + "_auto_") + Y[0]
getRequestId := "$..x-amzn-requestid"
\end{syrenColorListing}

\pagebreak[1] 

\subsubsection{CreateImageFromSsmParameterAndLog}\phantom{break}

\begin{syrenBreakableColorListing}
lambda instances, x_3
let instanceDesc = ec2.DescribeInstances(InstanceIds=instances)
let parameter = ssm.GetParameter(Name="image-builder")
let instanceId = getExistingInstance(instances)
let imageName = getParameterValue(parameter)
let mustReboot = checkReboot(x_3)
let imageCreationRes = ec2.CreateImage(
  InstanceId=instanceId, Name=imageName, NoReboot=mustReboot)
let imageId = getImageId(imageCreationRes)
let api_res_2_3 = console.Log(message=imageId)
where
getExistingInstance := (x) -> x[0]
getParameterValue := (x) -> x..Value[0]
checkReboot := (x) -> x == "ami-e27e9b0896ea30tb2"
getImageId := (x) -> x.ImageId
    
\end{syrenBreakableColorListing}

\pagebreak[1]

\subsubsection{CreateTable}\phantom{break}
\begin{syrenColorListing}
lambda tableName, definitions, keys, provisioning
let tableCreating = dynamodb.CreateTable(tableName=tableName,
    attributeDefinitions=definitions,
    keySchema=keys,
    provisionedThroughput=provisioning)
\end{syrenColorListing}

\pagebreak[1] 

\subsubsection{CreateTableThenInsertItem}\phantom{break}

\begin{syrenColorListing}
lambda tableName, definitions, keys, item
let provisionedThroughput = throughputSettings()
let tableCreating = dynamodb.CreateTable(TableName=tableName,
    AttributeDefinitions=definitions,
    KeySchema=keys,
    ProvisionedThroughput=provisionedThroughput)

retry {
   let tableDescription = dynamodb.DescribeTable(TableName=tableName)
   let tableStatus = extractStatus(tableDescription)
} until (tableStatus[0] == "ACTIVE")
let tableInserting = dynamodb.PutItem(TableName=tableName,Item=item)

where
extractStatus := "$.Table.TableStatus"
throughputSettings := () -> { "ReadCapacityUnits" : 5, "WriteCapacityUnits" : 5 }
\end{syrenColorListing}

\pagebreak[1] 

\subsubsection{DeleteTable}\phantom{break}

\begin{syrenColorListing}
lambda tableName
let tableDeletion = dynamodb.DeleteTable(TableName=tableName)
\end{syrenColorListing}

\pagebreak[1] 

\subsubsection{MoveDdbItem}\phantom{break}

\begin{syrenColorListing}
lambda originTableName, destinationTableName, keyValue, keyId
let itemKey = composeKey(keyValue, keyId)
let itemResponse = dynamodb.GetItem(TableName=originTableName,Key=itemKey)
let items = selectItem(itemResponse)
let item = first(items)
let response = dynamodb.PutItem(TableName=destinationTableName,Item=item)
where
composeKey := (x, y) -> {y : {"S": x}}
first := (x) -> x[0]
selectItem := "$.Item"
\end{syrenColorListing}

\pagebreak[1] 

\subsubsection{MoveDdbItemIfPresent}\phantom{break}

\begin{syrenColorListing}
lambda originTableName, destinationTableName, keyValue, keyId
let itemKey = composeKey(keyValue, keyId)
let itemResponse = dynamodb.GetItem(TableName=originTableName,Key=itemKey)
let items = selectItem(itemResponse)
if len(items) > 0 {
  let item = first(items)
  let response = dynamodb.PutItem(TableName=destinationTableName,Item=item)
}
where
composeKey := (x, y) -> {y : {"S": x}}
first := (x) -> x[0]
selectItem := "$.Item"
\end{syrenColorListing}

\pagebreak[1] 

\subsubsection{PutObjectIfNotPresent}\phantom{break}

\begin{syrenColorListing}
lambda bucketName, objectName
let listObjectsResult = s3.ListObjectsV2(Bucket=bucketName,Prefix=objectName)
let counts = getCount(listObjectsResult)
let count = fst(counts)
if count == 1 {
  let contentKeys = getContentKey(listObjectsResult)
  let contentKey = fst(contentKeys)
  if contentKey == objectName {
    return
  }
}
let createObject = s3.PutObject(Bucket=bucketName,Key=objectName)
where
getCount := "$.KeyCount"
getContentKey := "$.Contents[0].Key"
fst := (X) -> X[0]
\end{syrenColorListing}

\pagebreak[1] 

\subsubsection{CopyS3Objects}\phantom{break}

\begin{syrenColorListing}
lambda originBucket, dest
let list_objects_result = s3.ListObjects(Bucket=originBucket, Prefix="*")
let keys = getObjectsKeys(list_objects_result)
for (key) in keys {
  let objectInfo = s3.HeadObject(Bucket=dest, Key=key)
  let c = getObjectVersion(objectInfo)
  if isFile {
    let moved = s3.Copy(
      SourceBucket=originBucket, 
      SourceKey=key, 
      DestinationBucket=dest, 
      DestinationKey=key
    )
  }
}
where
 getObjectsKeys := (x) -> x..key
 getObjectVersion := (x) -> x..version == "v1"    
\end{syrenColorListing}

\pagebreak[1] 

\subsubsection{StartInstances}\phantom{break}

\begin{syrenColorListing}
lambda instanceIds
let x = ec2.StartInstances(InstanceIds=instanceIds)

\end{syrenColorListing}

\pagebreak[1] 

\subsubsection{StartInstancesWithTags}\phantom{break}

\begin{syrenColorListing}
lambda key, value
let filters = makeFilterTag(key,value)
let x = ec2.DescribeInstances(Filters=filters)
let iids = extractIids(x)
let resp = ec2.StartInstances(InstanceIds=iids)
where
extractIids := "$..InstanceId"
makeFilterTag := (k,v) -> [{"Name": "tag:" + k, "Values" : [v]}]

\end{syrenColorListing}

\pagebreak[1] 

\subsubsection{StopAllRunningInstances}\phantom{break}

\begin{syrenColorListing}
lambda 
let filters = runningInstancesFilter()
let runningInstances = ec2.DescribeInstances(Filters=filters)
let instanceIds = extractInstanceIds(runningInstances)
let res = ec2.StopInstances(InstanceIds=instanceIds)
where
runningInstancesFilter := () -> [dict(Name="instance-state-name",Values=["running"])]
extractInstanceIds := "$..InstanceId"
\end{syrenColorListing}

\pagebreak[1] 

\subsubsection{StopInstances}\phantom{break}

\begin{syrenColorListing}
lambda instanceIds
let x = ec2.StopInstances(InstanceIds=instanceIds)
\end{syrenColorListing}

\pagebreak[1] 

\subsubsection{StopInstancesCond}\phantom{break}

\begin{syrenColorListing}
lambda instanceId
let ids = list(instanceId)
let _ = ec2.StopInstances(InstanceIds=ids, Force=False)
let s = ec2.DescribeInstanceStatus(InstanceIds=ids, IncludeAllInstances=True)
let statuses = extractInstanceStatus(s)
let status = first(statuses)
if status != "stopped" {
    let _ = ec2.StopInstances(InstanceIds=ids, Force=True)
}
where
first := (X) -> X[0]
extractInstanceStatus := "$.InstanceStatuses[0].InstanceState.Name"
\end{syrenColorListing}

\pagebreak[1] 

\subsubsection{TagInstancesWithDryRun}\phantom{break}

\begin{syrenColorListing}
lambda instanceIds, key, value
let tags = makeTags(key,value)
let x = ec2.CreateTags(Resources=instanceIds, Tags=tags, DryRun=True)
if x == "ok" {
  let _ = ec2.CreateTags(Resources=instanceIds, Tags=tags)
}
where
makeTags := (k,v) -> [dict(Key=k, Value=v)]
\end{syrenColorListing}

\pagebreak[1] 

\subsubsection{WaitForInputThenSendEmail}\phantom{break}
\begin{syrenColorListing}
lambda  user_email
retry {
  let payload = os.QueryInput()
  let b = isNonEmpty(_r)
} until ( b )
let _ = messages.SendEmail(payload=payload, userId=isNonEmpty)
where
isNonEmpty := (x) -> ! (x == "")
\end{syrenColorListing}

\pagebreak[1] 

\subsubsection{RetrieveChannelMembers}\phantom{break}

\begin{syrenColorListing}
lambda
let slackResponse = slack.GetConversationsList()
let channelId = getFirstChannelId(slackResponse)
let conversationMembersResponse = slack.GetConversationMembers(channel=channelId)
let responseUserList = getMembersFunction(conversationMembersResponse)
for (memberUser) in responseUserList {
  let user_info = slack.GetUserInfo(user=memberUser)
}
 where
  getFirstChannelId := (x) -> x..id[0]
  getMembersFunction := (x) -> x.members
\end{syrenColorListing}

\pagebreak[1] 

\subsubsection{SVG-IncreaseCircleRadius}\phantom{break}

\begin{syrenColorListing}
lambda elementId
let elt = js.getElementAttributeById(Attribute=0, Id=elementId)
let incr = f(api_res_2_0)
let api_res_2_1 = js.setElementAttribute(Attribute=0, Id=elementId, Value=incr)
 where
  f := (x) -> 1 + x
\end{syrenColorListing}

\pagebreak[1] 

\subsubsection{SVG-IncreaseCircleRadius-json}\phantom{break}

\begin{syrenColorListing}
lambda document
let circle = js.getElementById(Document=document,Id="circle1")
let radius = js.getElementAttribute(Element=circle, Name="r")
let result = js.setElementAttribute(Element="circle1", Name="r", Value=radius)
where
add10 := (circle) -> circle + 10
\end{syrenColorListing}

\pagebreak[1] 

\subsubsection{SVG-ScaleAndMove}\phantom{break}

\begin{syrenColorListing}
lambda  x_1
let api_res_2_0 = js.getElementAttributeById(Attribute=0, Id=x_1)
let param_elim_12 = f?_22(api_res_2_0)
let api_res_2_1 = js.setElementAttribute(Attribute=0, Id=x_1, Value=param_elim_12)
 where
  f?_22 := (arg0_2) -> 1 + arg0_2
\end{syrenColorListing}

\pagebreak[1] 

\subsubsection{List and Move Files}\phantom{break}

\begin{syrenColorListing}
lambda source, dest
let paths = os.Ls(Dir=source, Pattern="/*")
for (path) in paths {
  let isFile = os.IsFile(Path=path)
  if isFile {
    let moved = os.Mv(Path=path, Dest=dest)
  }
}
\end{syrenColorListing}

\pagebreak[1] 

\subsubsection{Clean Current Dir}\phantom{break}
\begin{syrenColorListing}
lambda dirName
let files = os.OpenAndGetdents(Dir=dirName)
let names = f1(files)
for (path) in (names) {
  let isfile_res = os.IsFile(Path=path) 
  if isfile_res {
    let rm_res = os.Rm(Path=path)
  }
}
where:
f1 : x -> $..d_name
\end{syrenColorListing}

\pagebreak[1] 

\subsubsection{Clean Regular Files in Dir}\phantom{break}
\begin{syrenColorListing}
lambda dirName
let getdents_res = os.OpenAndGetdents(Dir=dirName)
let f1_res = f1(getdents_res)
for (path) in (f1_res) {
  let fstat_res = os.OpenAndGetdentsFstat(Path=path) 
  let f2_res = f2(fstat_res)
  if f2_res {
    let mv_res = os.Rm(Path=path)
  }
}
where:
f1 : x -> $..d_name
f2 : x -> $.st_type == "S_IFREG"
\end{syrenColorListing}

\pagebreak[1] 

\subsection{Blink automation benchmarks}
Tasks from the Blink automation library \cite{blink-automation}, where various APIs are interfaced.

\subsubsection{CopyEC2InstanceToNewRegion}\phantom{break}

\begin{syrenColorListing}
lambda  instanceId, imageName, sourceRegion, destRegion, destName
let createImageResponse = ec2.CreateImage(InstanceId=instanceId, Name=imageName)
let imageId = extractImageId(createImageResponse)
let imageIds = list(imageId)
retry {
  let describeImagesResponse = ec2.DescribeImages(ImageIds=imageIds)
  let state = extractImageStatus(describeImagesResponse)
} until (state == "available")
let copyResponse = ec2.CopyImage(
    Name=destName, SourceImageId=imageId, SourceRegion=sourceRegion, Region=destRegion)

where
extractImageId := "$.ImageId"
extractImageStatus := "$..State"
\end{syrenColorListing}

\pagebreak[1] 

\subsubsection{CreateIamUserAndNotify}\phantom{break}

\begin{syrenColorListing}
lambda userName, notificationEmail
let createUserResponse = iam.CreateUser(UserName=userName)
let user = extractUser(createUserResponse)
let sendEmail = blink.SendEmail(Content=)
where
extractUser := "$.User"
\end{syrenColorListing}

\pagebreak[1] 

\subsection{AWS Automation Runbooks Benchmarks}
Scripts collected from AWS Systems Manager Automation Runbooks.

\subsubsection{AWS-ConfigureCloudWatchOnEC2Instance}\phantom{break}

\begin{syrenColorListing}
lambda instanceId, status, properties
let instanceIds = list(instanceId)
if status == "Enabled" {
  let _ = ec2.MonitorInstances(InstanceIds=instanceIds)
} else {
  let _ = ec2.UnmonitorInstances(InstanceIds=instanceIds)
}
\end{syrenColorListing}

\pagebreak[1] 

\subsubsection{AWS-ConfigureS3BucketVersioning}\phantom{break}

\begin{syrenColorListing}
lambda bucketName, versioningState
if ((versioningState != "Enabled") && (versioningState != "Suspended")) {
  return
}
let config = makeConfig(versioningState)
let result = s3.PutBucketVersioning(Bucket=bucketName, VersioningConfiguration=config)
where
makeConfig := (x) -> dict(MFADelete="Disabled",Status=x)
\end{syrenColorListing}

\pagebreak[1] 

\subsubsection{AWS-ResizeInstance}\phantom{break}

\begin{syrenBreakableColorListing}
lambda instanceId, instanceType
let instanceIds = list(instanceId)
let description = ec2.DescribeInstances(InstanceIds=instanceIds)
let actualType = getInstanceType(description)
if (actualType != instanceType) {
  let _ = ec2.StopInstances(InstanceIds=instanceIds)
  retry {
    let x = ec2.DescribeInstanceStatus()
    let status = extractInstanceStatus(x)
  } until (status == "stopped")
  let v = makeValue(instanceType)
  let _ = ec2.ModifyInstanceAttribute(InstanceId=instanceId, InstanceType=v)
  let _ = ec2.StartInstances(InstanceIds=instanceIds)
  retry {
    let x = ec2.DescribeInstanceStatus()
    let status = extractInstanceStatus(x)
  } until (status == "running")
}
where
extractInstanceStatus := "$.InstanceStatuses[0].InstanceState.Name"
getInstanceType := "$.Reservations[0].Instances[0].InstanceType"
makeValue := (x) -> {'Value' : x}
\end{syrenBreakableColorListing}

\pagebreak[1] 

\subsubsection{AWS-SetRequiredTags}\phantom{break}

\begin{syrenColorListing}
lambda tags, resources
let taggedResourcesResponse = resourcegroupstaggingapi.TagResources(
  resourceARNList=resources,
   tags=tags)
\end{syrenColorListing}

\pagebreak[1] 

\subsubsection{AWS-StartEC2Instance}\phantom{break}

\begin{syrenColorListing}
lambda instanceId
let ids = list(instanceId)
let s = ec2.DescribeInstances(InstanceIds=ids)
let status = extractInstanceStatus(s)
if status != "running" {
  let _ = ec2.StartInstances(InstanceIds=ids)
}
where
first := (X) -> X[0]
extractInstanceStatus := "$.Reservations[0].Instances[0].State.Name"
\end{syrenColorListing}

\pagebreak[1] 

\subsubsection{AWS-StopEC2Instance}\phantom{break}

\begin{syrenColorListing}
lambda instanceId
let ids = list(instanceId)
let _ = ec2.StopInstances(InstanceIds=ids, Force=False)
let _ = ec2.StopInstances(InstanceIds=ids, Force=True)
\end{syrenColorListing}

\pagebreak[1] 

\subsubsection{AWSSupport-CopyEC2Instance}\phantom{break}

\begin{syrenBreakableColorListing}
lambda instanceId,
    keyPair,
    region,
    subnetId,
    instanceType,
    securityGroupIds,
    keepImageSourceRegion,
    keepImage,
    keepImageDestinationRegion,
    noRebootInstanceBeforeTakingImage

(** Extract information from instance *)
let instanceIds = list(instanceId)
let instanceInfo = ec2.DescribeInstances(InstanceIds=instanceIds)
if instanceInfo == null {
   return
}
let sourceInstanceTypes = extractSourceInstanceType(instanceInfo)
let sourceAvailabilityZones = extractSourceAvailabilityZone(instanceInfo)
let sourceSubnetIds = extractSourceSubnetId(instanceInfo)
let sourceKeyPairs = extractSourceKeyPair(instanceInfo)
let sourceSecurityGroupIds = extractSourceSecurityGroupIds(instanceInfo)
let sourceRootDeviceNames = extractSourceRootDeviceName(instanceInfo)

(** Select *)
let sourceInstanceType = first(sourceInstanceTypes)
let sourceAZ = first(sourceAvailabilityZones)
let sourceSubnetId = first(sourceSubnetIds)
let sourceKeyPair = first(sourceKeyPairs)

(** Prepare the parameters for the new instance *)
let regionToUse = selectFirstNonEmpty(region, sourceAZ)
let isSameRegion = isSame(regionToUse, sourceAZ)
let instanceTypeToUse = selectFirstNonEmpty(instanceType, sourceInstanceType)
let subnetIdToUse = select(isSameRegion, subnetId, sourceSubnetId)
let securityGroupIdsToUse = 
  select(isSameRegion, securityGroupIds, sourceSecurityGroupIds)
// Create new image
let imageName = makeName(instanceId)
let newImage = ec2.CreateImage(
  InstanceId=instanceId, NoReboot=noRebootInstanceBeforeTakingImage, Name=imageName)
let newImageIds = extractImageId(newImage)
retry {
   let imageInfo = ec2.DescribeImages(ImageIds=newImageIds)
   let imageState = extractImageState(imageInfo)
} until (imageState[0] == "available")
if (imageState[0] != "available") {
   return
}
(** Tag image *)
let tags = makeTags(instanceId)
let _ = ec2.CreateTags(Resources=newImageIds, Tags=tags)
(** Launch instance *)
let tagSpecs = makeTagSpecs(instanceId)
let newImageId = first(newImageIds)
if isSameRegion {
  if keyPair == "" {
      let r1 = ec2.RunInstances(
          ImageId=newImageId,
          SubnetId=subnetIdToUse,
          InstanceType=instanceTypeToUse,
          SecurityGroupIds=securityGroupIdsToUse,
          TagSpecifications=tagSpecs,
          MinCount=1,
          MaxCount=1)
  } else {
      let r1 = ec2.RunInstances(
          ImageId=newImageId,
          SubnetId=subnetIdToUse,
          InstanceType=instanceTypeToUse,
          SecurityGroupIds=securityGroupIdsToUse,
          KeyName=keyPair,
          TagSpecifications=tagSpecs,
          MinCount=1,
          MaxCount=1)
  }
  let newInstanceIds = getInstanceId(r1)

  // Wait for the new instance to be running
  let newInstanceId = first(newInstanceIds)
  retry {
     let newInstances = ec2.DescribeInstanceStatus(InstanceIds=newInstanceIds)
     let newInstanceStatus = extractInstanceStatus(newInstances)
  } until ((len(newInstanceStatus) > 0) && (newInstanceStatus[0] == "running"))

  if !keepImageSourceRegion {
     let deregisterResult = ec2.DeregisterImage(ImageId=newImageId)
     let snapshots = getSnapshots(imageInfo)
     let snapshot = first(snapshots)
     let deleteSnapshotResult = ec2.DeleteSnapshot(SnapshotId=snapshot)
  }
}

(** Transformation Definitions *)
where
extractSourceInstanceType := "$.Reservations[0].Instances[0].InstanceType"
extractSourceAvailabilityZone := 
  "$.Reservations[0].Instances[0].Placement.AvailabilityZone"
extractSourceSubnetId := "$.Reservations[0].Instances[0].SubnetId"
extractSourceKeyPair := "$.Reservations[0].Instances[0].KeyName"
extractSourceSecurityGroupIds := 
  "$.Reservations[0].Instances[0].SecurityGroups..GroupId"
extractSourceRootDeviceName := "$.Reservations[0].Instances[0].RootDeviceName"
selectFirstNonEmpty := (X, Y) -> X == "" ? Y : X
select := (B, X, Y) -> B ? (X == "" ? Y : X) : X
isSame := (X,Y) -> X == Y
first := (X) -> X[0]
makeName := (X) -> "Api Doc AWSSupport-CopyEC2Instance LocalAmi for " + X
extractImageId := "$.ImageId"
extractImageState := "$.Images[0].State"
getSnapshots := "$.Images[0]..SnapshotId"
getInstanceId := "$.Instances[0].InstanceId"
extractInstanceStatus := "$.InstanceStatuses[0].InstanceState.Name"
makeTags := (X) -> [
   {"Key" : "Name", "Value" : "AWSSupport-CopyEC2Instance LocalAmi for" + X },
   {"Key" : "AWSSupport-CopyEC2Instance", "Value" : "some-unique-id" },
   {"Key" : "CreatedBy", "Value": "AWSSupport-CopyEC2Instance" }
]
makeTagSpecs := (X) -> [{
    "ResourceType": "instance",
    "Tags" : [
        { "Key" : "Name", "Value" : "AWSSupport-CopyEC2Instance Source: " + X },
        { "Key": "CreatedBy", "Value": "AWSSupport-CopyEC2Instance" }
    ]
}]
\end{syrenBreakableColorListing}

\pagebreak[1] 

\subsection{\textsc{ApiPhany} Benchmarks}

These are tasks adapted from previous literature on API-composing programs Synthesis, \textsc{ApiPhany}  \cite{apiphany}, that use Stripe and Slack APIs.

\subsubsection{ApiPhanyExample.1.1}\phantom{break}

\begin{syrenColorListing}
lambda location_id
let res = stripe.GetTransactions(Location=location_id)
let transactions = extractTransactionsOrderIds(res)
for transaction in transactions {
  let _ = console.Log(Message=transaction)
}
where
extractTransactionsOrderIds := (x) -> x..OrderId
\end{syrenColorListing}

\pagebreak[1] 

\subsubsection{ApiPhanyExample.1.2}\phantom{break}

\begin{syrenBreakableColorListing}
lambda user_email
let userLookupResult = slack.LookupByEmail(name=user_email)
let user_id = getUserIdFromLookupResult(userLookupResult)
let user_conversations = slack.ConversationsOpen(users=user_id)
let channelId = getChannelIdFromConversations(user_conversations)
let slack_response = slack.PostMessage(channel=channelId)
 where
  getChannelIdFromConversations := (x) -> ((x[0])[2]).channelId
  getUserIdFromLookupResult := (x) -> x.user..id
\end{syrenBreakableColorListing}

\pagebreak[1] 

\subsubsection{ApiPhanyExample.1.6}\phantom{break}

\begin{syrenColorListing}
lambda channelId, messageTimestamp
let slackResponse = slack.ChatPostMessage(channel=channelId, ts=messageTimestamp)
let messageThreadId = getThreadFromResponse(slackResponse)
let response = slack.ChatUpdate(channel=channelId, ts=messageThreadId)
 where
  getThreadFromResponse := (x) -> x..thread[1]
\end{syrenColorListing}

\pagebreak[1] 

\subsubsection{ApiPhanyExample.1.8}\phantom{break}

\begin{syrenColorListing}
lambda channelId
let conversationInfoResponse = slack.GetConversationInfo(channel=channelId)
let latestMessageTimestamp = getLatestMessageTimestamp(conversationInfoResponse)
let channelId = getChannelId(conversationInfoResponse)
let conversationHistoryResponse = slack.GetConversationsHistory(
    channel=channelId, oldest=latestMessageTimestamp)
 where
  getChannelId := (x) -> x.id
  getLatestMessageTimestamp := (y) -> y.latest
\end{syrenColorListing}

\pagebreak[1] 

\subsubsection{ApiPhanyExample.2.1}\phantom{break}

\begin{syrenColorListing}
lambda customer, productId
let api_res_2_0 = v1.GetPrices(currency="USD", productId=productId)
let typ = listTypes(api_res_2_0)
for (typ) in (types) { 
 et _r = v1.PostSubcriptions(customer=customer, productId=productId, type=typ) 
}
 where
  listTypes := (x) -> x..type
\end{syrenColorListing}

\pagebreak[1] 

\subsubsection{ApiPhanyExample.2.3}\phantom{break}

\begin{syrenColorListing}
lambda customerId, productCurrency, productName, unitAmount
let productResponse = v1.PostProducts(name=productName)
let productId = getProductIdFromProductResponse(productResponse)
let productPriceResponse = v1.PostPrices(
  currency=productCurrency, product=productId, unit_amount=unitAmount)
let productId = getPriceId(productPriceResponse)
let productInvoiceItemsResponse = v1.PostInvoiceItems(customer=customerId, price=productId)
 where
  getPriceId := (x) -> x..id[0]
  getProductIdFromProductResponse := (y) -> y..id[0]
\end{syrenColorListing}

\pagebreak[1] 

\subsubsection{ApiPhanyExample.2.5}\phantom{break}

\begin{syrenColorListing}
lambda customer
let invoiceApiResponse = v1.GetInvoices(customer=customer)
let invoiceIds = getInvoiceIdFromResponse(invoiceApiResponse)
for (invoiceId) in (invoiceIds) {
  let chargesResponse = v1.GetCharges(invoiceId=invoiceId)
}
where
  getInvoiceIdFromResponse := (x) -> x.data
\end{syrenColorListing}

\pagebreak[1] 

\subsubsection{ApiPhanyExample.2.6}\phantom{break}

\begin{syrenColorListing}
lambda subscriptionId
let subscriptionResponse = stripe.GetSubscription(SubscriptionExposedId=subscriptionId)
let subscription_id = getSubscriptionId(subscriptionResponse)
let subscriptionInvoice = stripe.GetInvoice(invoice=subscription_id)
let invoiceId = getInvoiceAmountPaid(subscriptionInvoice)
let refundedInvoice = stripe.Refund(charge=invoiceId)
 where
  getInvoiceAmountPaid := (x) -> x.amount_paid
  getSubscriptionId := (x) -> x..id[1]
\end{syrenColorListing}

\pagebreak[1] 

\subsubsection{ApiPhanyExample.2.7}\phantom{break}

\begin{syrenColorListing}
let customerApiResponse = v1.GetCustomers()
let filteredCustomers = getCustomerEmails(customerApiResponse)
for (customerEmail) in (filteredCustomers) {
  let customerEmail = mail.Show(email=customerEmail)
}
where
  getCustomerEmails := (x) -> x..email
\end{syrenColorListing}

\pagebreak[1] 

\subsubsection{ApiPhanyExample.2.10}\phantom{break}

\begin{syrenColorListing}
lambda customerId, defaultPaymentId
let api_res_1_0 = v1.GetSubscriptions(customer=customerId)
let subscriptionIds = getSubscriptionIds(api_res_1_0)
for (subcriptionId) in (subscriptionIds) {
  let _r =
    v1.PostSubscriptionPayment(
      default_payment=defaultPaymentId,
        subscriptionExposedId=subscriptionId)
}
where
 getSubscriptionIds := (x) -> x.data..id
\end{syrenColorListing}

\pagebreak[1] 

\subsubsection{ApiPhanyExample.2.11}\phantom{break}

\begin{syrenColorListing}
lambda customer_id
let getCustomerResult = stripe.GetCustomer(customer=customer_id)
let default_source = get_default_source(getCustomerResult)
let _ = stripe.DeletePaymentSource(customer=customer_id, id=default_source)
 where
  get_default_source := (x) -> x.default_source
\end{syrenColorListing}

\pagebreak[1] 

\subsubsection{ApiPhanyExample.2.13}\phantom{break}

\begin{syrenColorListing}
lambda customerId, price
let invoiceItemsResponse = stripe.InvoiceItems(customer=customerId, price=price)
let invoiceId = getInvoiceId(invoiceItemsResponse)
let invoice = stripe.Invoice(customer=customerId, invoice=invoiceId)
let invoice = stripe.SendInvoice(invoice=invoiceId)

where
getInvoiceId := (x) -> x..invoiceId[0]
\end{syrenColorListing}

\pagebreak[1] 

\subsubsection{ApiPhanyExample.3.1}\phantom{break}

\begin{syrenColorListing}
lambda locationId
let invoice_response = square.GetInvoices(locationId=locationId)
\end{syrenColorListing}

\pagebreak[1] 

\subsubsection{ApiPhanyExample.3.3}\phantom{break}

\begin{syrenColorListing}
lambda tax_id
let list_res = slack.GetCatalog()
let objects = extractObjects(list_res)
for (object) in objects {
  let tax_ids = taxIds(object)
  for (id) in tax_ids {
    let p = tax_predicate(tax_id, id)
    if (p)) {
      let x = console.Log(object)
    }
  }
}

where
extractObjects := (x) -> x.objects
taxIds := (x) -> x.item_data.tax_ids
tax_predicate := (x,y) -> x == y
\end{syrenColorListing}

\pagebreak[1] 

\subsubsection{ApiPhanyExample.3.4}\phantom{break}

\begin{syrenColorListing}
let catalogList = square.ListCatalog()
let catalogMessages = getDiscountTypeFromResponse(catalogList)
for (message) in (catalogMessages) {let discountDetails = console.Log(message=message) }
where
  getDiscountTypeFromResponse := (x) -> x..discountType
\end{syrenColorListing}

\pagebreak[1] 

\subsubsection{ApiPhanyExample.3.6}\phantom{break}

\begin{syrenColorListing}
let paymentsResponse = square.GetPayments()
let paymentMessages = getPaymentNotes(paymentsResponse)
for (paymentMessage) in (paymentMessages) {
  let paymentNotes = console.Log(message=paymentMessage)
}
where
  getPaymentNotes := (x) -> x..payment_note
\end{syrenColorListing}

\pagebreak[1] 

\subsubsection{ApiPhanyExample.3.7}\phantom{break}

\begin{syrenColorListing}
lambda location_id
let res = stripe.GetTransactions(Location=location_id)
let transactions = extractTransactionsOrderIds(res)
for transaction in transactions {
  let _ = console.Log(Message=transaction)
}
where
extractTransactionsOrderIds := (x) -> x..OrderId
\end{syrenColorListing}

\pagebreak[1] 

\subsubsection{ApiPhanyExample.3.8}\phantom{break}

\begin{syrenColorListing}
lambda locationId, transactionId
let res = square.BatchRetrieveOrders(LocationId=locationId, TransactionId=transactionId);
let orders = getOrders(orders)
for order in orders {
  let line_items = getLineItems(order)
  for line_item in line_items {
    let _ = console.Log(Message=line_item)
  }
}

where
getOrders := (x) -> x.orders
getLineItems := (x) -> x.line_items
\end{syrenColorListing}

\fi

\end{document}